\newcolumntype{R}[1]{>{\raggedleft\let\newline\\\arraybackslash\hspace{0pt}}m{#1}}
\definecolor{cBlue}{RGB}{56, 122, 223}
\definecolor{cRed}{RGB}{238, 78, 78}
\definecolor{cPurple}{RGB}{110, 70, 150}
\tikzstyle{state}=[draw=black, text=black, circle, minimum size=0.8cm, inner sep=0pt]
\tikzstyle{smallstate}=[draw=black, text=black, circle, minimum size=16pt, inner sep=0pt]
\tikzstyle{bigstate}=[draw=black, text=black, circle, minimum size=1.1cm, inner sep=0pt]
\tikzstyle{class}=[rectangle, minimum width=0.8cm, minimum height=0.6cm, draw=black]
\tikzset{>={Classical TikZ Rightarrow[scale=2]}, fill fraction/.style n args={2}{path picture={\fill[#1] (path picture bounding box.south west) rectangle ($(path picture bounding box.north west)!#2!(path picture bounding box.north east)$);}}}
\pgfplotsset{compat=1.18}
\spnewtheorem*{proofsketch}{Proof Sketch}{\itshape}{\rmfamily}
\spnewtheorem{claim-num}{Claim}{\itshape}{\rmfamily}
\newcommand{\nat}{\mathbb{N}}
\newcommand{\support}{\mathrm{support}}
\newcommand{\Dreal}{\mathcal{D}}
\newcommand{\Sreal}{\mathcal{S}}
\newcommand{\Creal}{\Omega}
\newcommand{\comment}[1]{\hspace{2em} [\mbox{#1}]}
\newcommand{\robust}{\mathord{\simeq}}
\newcommand{\vertex}{\node[circle, draw=black, inner sep=0pt, minimum size=12pt]}
\mathchardef\gr="213E
\mathchardef\ls="213C
\newcommand{\mvar}{V}
\newcommand{\pnext}{\textrm{Next}}
\newcommand{\qnext}{\textrm{next}}
\newcommand{\red}{\textit{tails}}
\newcommand{\imax}{\sqcup}
\newcommand{\sem}[1]{[\![#1]\!]}
\newcommand{\valu}{f}
\newcommand{\valus}{\mathcal{F}}
\def\set#1{\{#1\}}
\def\techReport{}
\def\orcidID#1{\,\orcidlink{#1}}
\begin{document}

\title{Robust Probabilistic Bisimilarity \\ for Labelled Markov Chains}
\titlerunning{Robust Probabilistic Bisimilarity}

\author{
Syyeda Zainab Fatmi\inst{1}\orcidID{0000-0001-7899-8665} \and
Stefan Kiefer\inst{1}\orcidID{0000-0003-4173-6877} \and
David Parker\inst{1}\orcidID{0000-0003-4137-8862} \and
Franck van Breugel\inst{2}\orcidID{0009-0002-7320-1527}}
\authorrunning{S.~Z.~Fatmi, S.~Kiefer, D.~Parker, F.~van Breugel}
\institute{University of Oxford, Oxford, UK \and York University, Toronto, Canada}

\maketitle

\begin{abstract}
Despite its prevalence, probabilistic bisimilarity suffers from a lack of robustness under minuscule perturbations of the transition probabilities. This can lead to discontinuities in the probabilistic bisimilarity distance function, undermining its reliability in practical applications where transition probabilities are often approximations derived from experimental data. Motivated by this limitation, we introduce the notion of robust probabilistic bisimilarity for labelled Markov chains, which ensures the continuity of the probabilistic bisimilarity distance function. We also propose an efficient algorithm for computing robust probabilistic bisimilarity and show that it performs well in practice, as evidenced by our experimental results.

\keywords{(probabilistic) model checking \and labelled Markov chain \and probabilistic bisimilarity \and behavioural pseudometric.}
\end{abstract}


\section{Introduction}

In the analysis and verification of probabilistic systems, one of the foundational concepts is identifying and merging system states that are behaviourally indistinguishable.
Kemeny and Snell \cite{KS60} introduced the notion of lumpability for Markov chains and it was adapted to the setting of labelled Markov chains by Larsen and Skou \cite{LS89}, known as probabilistic bisimulation.
State-of-the-art probabilistic verification tools~\cite{KNP11,HJK+22}
implement a variety of methods for minimizing the state space of the system by collapsing probabilistically bisimilar states.
This can significantly improve verification efficiency in some cases~\cite{KKZJ07}.

However, due to the sensitivity of behavioural equivalences to small changes in the transition probabilities, Giacalone et al. \cite{GJS90} proposed using behavioural pseudometrics to capture the behavioural similarity of states in a probabilistic system.  Instead of classifying states as either equivalent or inequivalent, the pseudometric maps each pair of states to a real value in the unit interval, thus also quantifying the behavioral difference between non-equivalent states.
Behavioural pseudometrics have been studied in the context of systems biology~\cite{ThorsleyKlavins2010}, games~\cite{CAMR10}, planning~\cite{ComaniciP11} and security~\cite{CaiG09}, among others.

\ifthenelse{\isundefined{\techReport}}{}{
\clearpage
}

In probabilistic verification, the most widely studied example of such a behavioural pseudometric is the \emph{probabilistic bisimilarity distance}.
It generalizes probabilistic bisimilarity quantitatively; in particular, the distance between two states is zero if and only if they are probabilistically bisimilar.
The probabilistic bisimilarity distance was introduced by Desharnais et al.~\cite{DGJP99}, based on a real-valued semantics for Larsen and Skou's probabilistic modal logic~\cite{LS89}.
A formula $\varphi$ of this logic maps any state $s$ to a number $\sem{\varphi}(s) \in [0,1]$.
The probabilistic bisimilarity distance between two states $s,t$ can be characterized as $\delta(s,t) = \sup_{\varphi} |\sem{\varphi}(s) - \sem{\varphi}(t)| \in [0,1]$, where $\varphi$ ranges over all formulas.
The lower the distance between two states, the more similar their behaviour.
As shown by Van Breugel and Worrell \cite{BreugelW01}, the probabilistic bisimilarity distance can also be characterized as a fixed point of a function (we use this definition in this paper).

However, as pointed out by Jaeger et al.\ \cite{JMLM14}, probabilistic bisimilarity distances are sometimes not continuous, leading to unexpected and abrupt changes in behaviour between two states when the transition probabilities are perturbed.  Since the probabilities of the labelled Markov chain are usually obtained experimentally and, therefore, are often an approximation \cite{TB17,EH20,MLAK17,OCHTC17,SLT21}, the lack of robustness of probabilistic bisimilarity is a serious drawback.  This inconsistency undermines the reliability of probabilistic bisimilarity as a measure of system equivalence and can be particularly problematic when used in practical applications where approximate models are prevalent.

\begin{example}\label{example:intro-1}
Consider Figure~\ref{fig:example-continuous} on page~\pageref{fig:example-continuous}.
When $\varepsilon = 0$, states $h_0$ and $h_1$ are probabilistically bisimilar; i.e., their distance $\delta_0(h_0,h_1)$ equals~$0$ (the subscript of~$\delta$ indicates~$\varepsilon$).
For $\varepsilon \gr 0$ we have $\delta_\varepsilon(h_0,h_1) \gr 0$; i.e., $h_0$ and $h_1$ are no longer bisimilar.
However, when $\varepsilon$ is small then $\delta_\varepsilon(h_0,h_1)$ is small.
In fact, one can show that $\delta_\varepsilon(h_0,h_1) \le 2 \varepsilon$, which implies that $\lim_{\varepsilon \to 0} \delta_\varepsilon(h_0,h_1) = 0$.
This means that in this example, the distance is continuous in~$\varepsilon$.
One may say that states $h_0, h_1$ are not only probabilistically bisimilar, but also robustly so, in that they remain ``almost'' bisimilar when the transition probabilities are perturbed.
Intuitively, states $h_0$ and $h_1$ behave similarly even for small positive~$\varepsilon$: both states carry a blue label and perform a geometrically distributed number of self-loops (about two in expectation) before transitioning to state~$t$.
\end{example}

\begin{example}\label{example:intro-2}
Consider Figure~\ref{fig:example2-discontinuous} on page~\pageref{fig:example2-discontinuous}.
When $\varepsilon = 0$, states $h_2$ and $h_3$ are probabilistically bisimilar; i.e., their distance $\delta_0(h_2,h_3)$ equals~$0$.
But for any $\varepsilon \gr 0$ we have $\delta_\varepsilon(h_2,h_3) = 1$; i.e., $h_2$ and $h_3$ behave ``maximally'' differently in terms of the probabilistic bisimilarity distance.
We have $\lim_{\varepsilon \to 0} \delta_\varepsilon(h_2,h_3) = 1$; so, in this example, the distance is discontinuous in~$\varepsilon$.
One may say that although states $h_2, h_3$ are probabilistically bisimilar, they are not robustly so, because upon perturbing the transition probabilities the behaviour of~$h_3$ changes completely.
For any positive $\varepsilon$, state~$h_2$ remains in a self-loop forever, whereas $h_3$ eventually reaches the (red-labelled) state~$t_3$ with probability~$1$.
Since reachability properties are at the heart of probabilistic model checking, it may be unsafe to merge states $h_2$ and $h_3$ if the transition probabilities are not known precisely.
\end{example}

In this paper, we address this issue by introducing the notion of \emph{robust probabilistic bisimilarity} for labelled Markov chains.
Robust probabilistic bisimilarity is a particular probabilistic bisimulation, implying that robust probabilistic bisimilarity is a subset of probabilistic bisimilarity.
Crucially, we show that our definition ensures the continuity of the probabilistic bisimilarity distance function.
This means that for any two states that are robustly probabilistically bisimilar, their probabilistic bisimilarity distance remains small even after small perturbations of any transition probabilities.
Note that it is easy to see that the distance from \cite{DLT08} is robust in this sense; on the other hand, states with very small distance in terms of \cite{DLT08} can have very different long-term behaviour, as in Example~\ref{example:intro-2}.

Secondly, we develop a polynomial-time algorithm for computing robust probabilistic bisimilarity.
It is suitable for large-scale verification tasks, opening the door to checking probabilistic models from the literature for (lack of) robustness of their probabilistic bisimilarity relation.
Thus, one can identify pairs of states that may be dangerous to merge if the transition probabilities are not known precisely.
We present experimental results on the applicability and efficiency of an implementation of our algorithm on models from the Quantitative Verification Benchmark Set (QVBS)~\cite{HKPQR19} and the examples included in the Java PathFinder extension jpf-probabilistic \cite{FCDWTB21}.

The rest of the paper is structured as follows. Section~\ref{section:premliminaries} introduces the model of interest, namely a labelled Markov chain, and probabilistic bisimilarity. In Section~\ref{section:distances}, we formally define probabilistic bisimilarity distances and further examine how the bisimilarity distance changes when the transition function is varied. Section~\ref{section:robust} describes robust probabilistic bisimilarity and demonstrates that it ensures the continuity of the bisimilarity distance function. In Section~\ref{section:algorithm}, we present a polynomial-time algorithm for computing robust probabilistic bisimilarity. Section~\ref{section:experiments} reports experimental results on the algorithm's implementation. Finally, Section~\ref{section:conclusion} concludes the paper and discusses directions for future research.
\ifthenelse{\isundefined{\techReport}}{%
The full version of this paper, found in \cite{arxiv}, includes omitted proofs and further details.
}{%
Omitted proofs can be found in the appendix. This paper is an extended version of \cite{full}.
}
\section{Labelled Markov Chains and Probabilistic Bisimilarity}
\label{section:premliminaries}

In this section, we present some fundamental concepts that underpin this paper.

Let $X$ be a nonempty finite set.  A function $\mu : X \to [0, 1]$ is a \emph{subprobability distribution} on $X$ if $\sum_{x \in X} \mu(x) \leq 1$.  We denote the set of subprobability distributions on $X$ by $\Sreal(X)$.  For $\mu \in \Sreal(X)$ and $A \subseteq X$, we often write $\mu(A)$ instead of $\sum_{x \in A} \mu(x)$.  For a distribution $\mu \in \Sreal(X)$ we define the \emph{support} of $\mu$ by $\support(\mu) = \{\, x \in X \mid \mu(x) \gr 0 \,\}$.  A subprobability distribution $\mu$ on $X$ is a \emph{probability distribution} if $\mu(X) = 1$.  We denote the set of probability distributions on $X$ by $\Dreal(X)$.

A \emph{Markov chain} is a pair $<S, \tau>$ consisting of a finite set $S$ of states and a transition probability function $\tau : S \to \Dreal(S)$.  A \emph{labelled Markov chain} is a tuple $<S, L, \tau, \ell>$ where $<S, \tau>$ is a Markov chain, $L$ is a finite set of labels and $\ell: S \to L$ is a labelling function.  A \emph{path} in a Markov chain $<S, \tau>$ is a sequence of states $s_0$, $s_1$, $s_2 \ldots$ such that $s_i \in S$ and $\tau(s_{i})(s_{i+1}) \gr 0$ for all $i \geq 0$.

For the remainder, we fix a labelled Markov chain $<S, L, \tau, \ell>$, and we will study perturbations of the transition probability function $\tau$.

For all $\mu$, $\nu \in \Dreal(X)$, the set $\Creal(\mu, \nu)$ of \emph{couplings} of $\mu$ and $\nu$ is defined by
\[
\Creal(\mu, \nu) = \{\, \omega \in \Dreal(X \times X) \mid \forall x \in X : \omega(x, X) = \mu(x) \wedge \omega(X, x) = \nu(x) \,\}.
\]
We write $\omega(x, X)$ for $\sum_{y \in X} \omega(x, y)$.

\begin{definition}
\label{definition:probabilistic-bisimilary}
An equivalence relation $R \subseteq S \times S$ is a \emph{probabilistic bisimulation} (or just bisimulation) if for all $(s, t) \in R$, $\ell(s) = \ell(t)$ and there exists $\omega \in \Creal(\tau(s), \tau(t))$ such that $\support(\omega) \subseteq R$.  States $s$ and $t$ are \emph{bisimilar}, denoted $s \sim t$, if $(s, t) \in R$ for some bisimulation~$R$.
\end{definition}

If $|\ell(S)| = 1$ then $\mathord{\sim} = S \times S$.  In the remainder, we assume that the labelled Markov chain contains states with different labels, that is, $|\ell(S)| \geq 2$.  Hence, we also have that $|S| \geq 2$.

Definition~\ref{definition:probabilistic-bisimilary} \cite[Definition~4.3]{JL91} differs from the standard definition \cite[Definition~6.3]{LS89} which defines a bisimulation as an equivalence relation $R \subseteq S \times S$ such that for all $(s, t) \in R$, $\ell(s) = \ell(t)$ and for all $R$-equivalence classes $C$, $\tau(s)(C) = \tau(t)(C)$, where $\tau(s)(C) = \sum_{t \in C} \tau(s)(t)$.  Nevertheless, an equivalence relation $R$ is a bisimulation by Definition~\ref{definition:probabilistic-bisimilary} if and only if it is a bisimulation as per the standard definition (see \cite[Theorem~4.6]{JL91}).

\section{Probabilistic Bisimilarity Distances}
\label{section:distances}

\begin{definition}
\label{definition:delta}
The \emph{probabilistic bisimilarity distance} (or just bisimilarity distance), $\delta_\tau : S \times S \to [0, 1]$, is the least fixed point of the function $\Delta_\tau : (S \times S \to [0, 1]) \to (S \times S \to [0, 1])$ defined by
\[
\Delta_\tau(d)(s, t) = \left \{
\begin{array}{ll}
1 & \hspace{1cm} \mbox{if $\ell(s) \neq \ell(t)$}\\
\displaystyle \inf_{\omega \in \Creal(\tau(s), \tau(t))} \sum_{u, v \in S} \omega(u, v) \; d(u, v) & \hspace{1cm} \mbox{otherwise.}
\end{array}
\right .
\]
\end{definition}

Intuitively, the smaller the distance between two states, the more similar they behave.

\begin{theorem}[{\cite[Theorem~4.10]{DGJP04}}]
\label{theorem:distance-zero}
For all $s$, $t \in S$, $s \sim t$ if and only if $\delta_\tau(s, t) = 0$.
\end{theorem}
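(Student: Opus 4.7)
The plan is to prove the two directions separately, each by an appropriate fixed-point argument, exploiting that $\delta_\tau$ is the least fixed point of the monotone operator $\Delta_\tau$ on the complete lattice $([0,1]^{S\times S}, \leq)$.

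For the forward direction, I would exhibit a specific pre-fixed point of $\Delta_\tau$ built from bisimilarity. Define $d : S \times S \to [0,1]$ by $d(u,v) = 0$ if $u \sim v$ and $d(u,v) = 1$ otherwise. If $u \sim v$, Definition~\ref{definition:probabilistic-bisimilary} yields $\ell(u)=\ell(v)$ together with a coupling $\omega \in \Creal(\tau(u),\tau(v))$ with $\support(\omega) \subseteq \mathord{\sim}$, so $\sum_{u',v'} \omega(u',v')\, d(u',v') = 0 = d(u,v)$. If $u \not\sim v$, trivially $\Delta_\tau(d)(u,v) \leq 1 = d(u,v)$. Hence $\Delta_\tau(d) \leq d$, and Knaster--Tarski gives $\delta_\tau \leq d$, so $s \sim t$ forces $\delta_\tau(s,t) = 0$.

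For the backward direction, set $R = \{(u,v) \in S\times S : \delta_\tau(u,v) = 0\}$ and verify that $R$ is a probabilistic bisimulation. Reflexivity is immediate from the forward direction applied to $u \sim u$; symmetry follows because swapping the two coordinates in a coupling turns one fixed point of $\Delta_\tau$ into another, so $\delta_\tau(u,v) = \delta_\tau(v,u)$ by minimality applied in both directions; transitivity reduces to the triangle inequality for $\delta_\tau$. Once $R$ is an equivalence, fix $(u,v) \in R$. Since $\delta_\tau$ is a fixed point and $\delta_\tau(u,v) = 0 \neq 1$, the second case of $\Delta_\tau$ applies, giving $\ell(u) = \ell(v)$. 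The set $\Creal(\tau(u),\tau(v))$ is a closed, bounded subset of $\mathbb{R}^{S \times S}$, hence compact, and the linear objective $\omega \mapsto \sum_{u',v'} \omega(u',v')\,\delta_\tau(u',v')$ is continuous, so the infimum is attained by some $\omega^\star$. The sum $\sum_{u',v'} \omega^\star(u',v')\,\delta_\tau(u',v') = 0$ with nonnegative summands forces $\delta_\tau(u',v') = 0$ whenever $\omega^\star(u',v') > 0$, i.e., $\support(\omega^\star) \subseteq R$, which is the required witness.

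The main obstacle is the triangle inequality for $\delta_\tau$ needed to establish transitivity of $R$. The standard route is the Kantorovich gluing argument: given optimal couplings $\omega_1 \in \Creal(\tau(u),\tau(v))$ and $\omega_2 \in \Creal(\tau(v),\tau(w))$, one constructs a coupling of $\tau(u)$ and $\tau(w)$ by gluing $\omega_1$ and $\omega_2$ along their common $v$-marginal, shows that the induced expected distance is at most $\delta_\tau(u,v) + \delta_\tau(v,w)$, and then uses minimality of $\delta_\tau$ to upgrade this inequality from a suitably defined pre-fixed point $d'(u,w) = \min(1, \delta_\tau(u,v) + \delta_\tau(v,w))$ to $\delta_\tau$ itself. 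Everything else is bookkeeping around the fixed-point characterization.
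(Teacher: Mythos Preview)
The paper does not supply its own proof of this theorem: it is quoted verbatim as \cite[Theorem~4.10]{DGJP04} and used as a black box throughout. So there is no in-paper argument to compare against.

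Your proposal is a correct and standard reconstruction of the result. The forward direction via the pre-fixed point $d = \mathbf{1}_{\not\sim}$ and Knaster--Tarski is clean. The backward direction, showing that the zero set of $\delta_\tau$ is itself a bisimulation, is also the usual route; your compactness argument for attaining the infimum and reading off $\support(\omega^\star) \subseteq R$ is fine. The one place where you defer real work is the triangle inequality for $\delta_\tau$, and you correctly flag it and name the right tool (Kantorovich gluing of couplings along the shared marginal). Note that the paper likewise treats this as an external fact: in the appendix it quotes ``$\delta_\tau$ is a pseudometric'' as \cite[Theorem~2.1.30]{T18} (their Theorem~\ref{theorem:pseudometric}) and then derives that $\sim$ is an equivalence relation from it in Proposition~\ref{proposition:bisim-equivalence}. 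So your decomposition matches how the surrounding literature packages the argument, even though the present paper does not itself carry out the proof.
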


Quantitative $\mu$-calculus \cite{K83,AM01,MM05} is an expressive modal logic that uses fixed point operators to define properties of transition systems.  It supports the concise representation of a wide range of properties, including reachability, safety, and the probability of satisfying a general $\omega$-regular specification.
We use the syntax described in \cite{CAMR10}, except that we use the operator $\qnext$ instead of $\textrm{pre}_1$ and $\textrm{pre}_2$.  Let $q\mu$ denote the set of quantitative $\mu$-calculus formulae, then a formula $\varphi \in q\mu$ maps states to a numerical value within $[0,1]$, that is, $\sem{\varphi} : S \to [0, 1]$.
The bisimilarity distances can be characterized in terms of the quantitative $\mu$-calculus \cite[Equation~2.3]{CAMR10} as $\delta_\tau(s, t) = \sup_{\varphi \in q\mu} |\sem{\varphi}(s) - \sem{\varphi}(t)|$.

\subsection{Examples}

\begin{figure}
\begin{center}
\begin{subfigure}[t]{\textwidth}
    \begin{tikzpicture}[scale=0.9, every node/.style={scale=0.9}]
    \node[state, fill=cBlue!40] (1) at (0,2.7) {$h_0$};
    \node[state, fill=cRed!40] (2) at (2.5,1.5) {$t$};
    \draw[->] (1) edge node[above] {$\frac{1}{2}$} (2);
    \draw[->, loop left] (1) edge node[left] {$\frac{1}{2}$} (1);
    \draw[->, loop right] (2) edge node[right] {$1$} (2);
    \node[state, fill=cBlue!40] (1) at (0,0.3) {$h_1$};
    \draw[->] (1) edge node[below, xshift=5] {$\frac{1}{2} + \varepsilon$} (2);
    \draw[->, loop left] (1) edge node[left] {$\frac{1}{2} - \varepsilon$} (1);

    \begin{axis}
    [xlabel={$\varepsilon$}, ylabel={$\delta_{\tau_\varepsilon} (h_0, h_1)$}, width=6cm, at={(0.5\linewidth,0)}]
    \addplot[domain=0:0.5,color=cPurple,line width=1.2pt] {x/(1/2 + x)};
    \end{axis}
    \end{tikzpicture}
    \caption{Repeated tosses of a fair coin (top) and a biased coin (bottom) until each lands on tails.}
    \label{fig:example-continuous}
\end{subfigure}
\bigskip

\begin{subfigure}[t]{\textwidth}
    \begin{tikzpicture}[scale=0.9, every node/.style={scale=0.9}]
    \node[state, fill=cBlue!40] (1) at (0,2.7) {$h_2$}; 
    \node[state, fill=cRed!40] (2) at (2.5,1.5) {$t_3$};
    \draw[->, loop left] (1) edge node[left] {$1$} (1);
    \draw[->, loop right] (2) edge node[right] {$1$} (2);
    \node[state, fill=cBlue!40] (1) at (0,0.3) {$h_3$}; 
    \draw[->] (1) edge node[above] {$\varepsilon$} (2);
    \draw[->, loop left] (1) edge node[left] {$1 - \varepsilon$} (1);

    \begin{axis}
    [xlabel={$\varepsilon$}, ylabel={$\delta_{\tau_\varepsilon} (h_2, h_3)$}, width=6cm, at={(0.5\linewidth,0)}]
    \addplot[color=cPurple, mark=*, mark options={scale=1}] coordinates {(0,0)};
    \addplot[color=cPurple, mark=o, mark options={scale=1.2}] coordinates {(0,1)};
    \addplot[domain=0.01:0.5,color=cPurple,line width=1.2pt] {1};
    \end{axis}
    \end{tikzpicture}
    \caption{Single toss of a rigged coin (top) and repeated tosses of an extremely biased coin until it lands on tails (bottom).}
    \label{fig:example2-discontinuous}
\end{subfigure}
\bigskip

\begin{subfigure}[t]{\textwidth}
    \begin{tikzpicture}[scale=0.9, every node/.style={scale=0.9}]
    \node[state, fill=cBlue!40] (1) at (0,2.7) {$h_4$}; 
    \node[state, fill=cRed!40] (2) at (2.5,2.7) {$t_4$};
    \draw[->] (1) edge[bend left=20] node[above] {$\frac{1}{2}$} (2);
    \draw[->] (2) edge[bend left=20] node[below] {$\frac{1}{2}$} (1);
    \draw[->, loop left] (1) edge node[left] {$\frac{1}{2}$} (1);
    \draw[->, loop right] (2) edge node[right] {$\frac{1}{2}$} (2);

    \node[state, fill=cBlue!40] (1) at (0,0.3) {$h_5$}; 
    \node[state, fill=cRed!40] (2) at (2.5,0.3) {$t_5$};
    \draw[->] (1) edge[bend left=20] node[above] {$\frac{1}{2} + \varepsilon$} (2);
    \draw[->] (2) edge[bend left=20] node[below] {$\frac{1}{2}  - \varepsilon$} (1);
    \draw[->, loop left] (1) edge node[left] {$\frac{1}{2} - \varepsilon$} (1);
    \draw[->, loop right] (2) edge node[right] {$\frac{1}{2} + \varepsilon$} (2);

    \begin{axis}
    [xlabel={$\varepsilon$}, ylabel={$\delta_{\tau_\varepsilon} (h_4, h_5)$}, width=6cm, at={(0.5\linewidth,0)}]
    \addplot[color=cPurple, mark=*, mark options={scale=1}] coordinates {(0,0)};
    \addplot[color=cPurple, mark=o, mark options={scale=1.2}] coordinates {(0,1)};
    \addplot[domain=0.01:0.5,color=cPurple,line width=1.2pt] {1};
    \end{axis}
    \end{tikzpicture}
    \caption{Repeated tosses of a fair coin (top) and a biased coin (bottom).}
    \label{fig:example-discontinuous}
\end{subfigure}
\caption{Various examples featuring fair and biased coins. States labeled with \emph{heads} are shown in blue, while states labeled with \emph{tails} are shown in red.}
\label{figure:examples}
\end{center}
\end{figure}
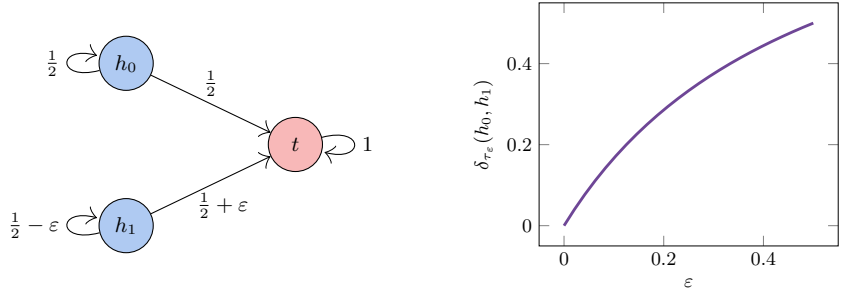
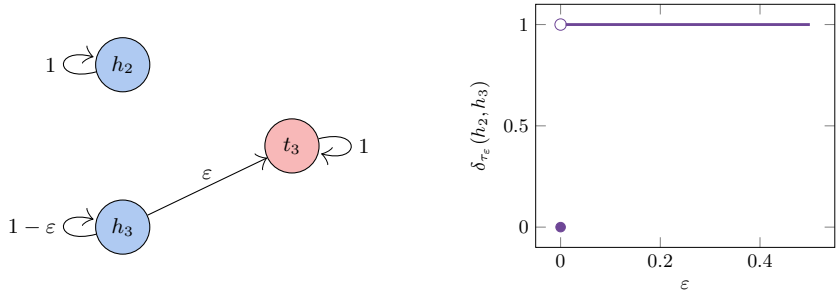
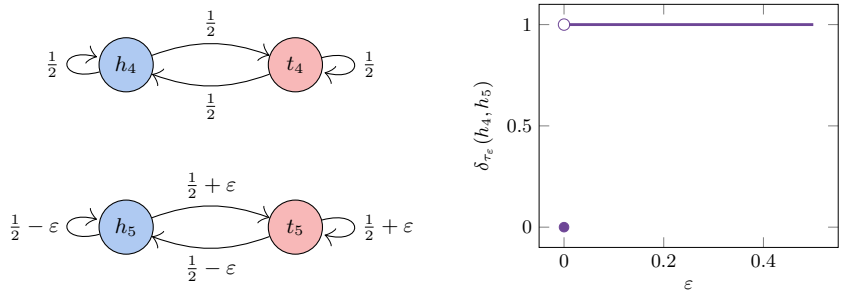

We now investigate how the bisimilarity distance changes when the transition function varies. In the following, let $\varepsilon \in [0, \frac{1}{2}]$.  Define $\tau_\varepsilon$ as shown in Figure~\ref{figure:examples}.  For example, $\tau_{1/6}(h_5)(t_5) = \frac{2}{3}$.  Then $\tau_\_ : [0, \frac{1}{2}] \to (S \to \Dreal(S))$ and $\delta_{\tau_\_} : [0, \frac{1}{2}] \to (S \times S \to [0,1])$.

\begin{example}
\label{example:continuous}
Consider Figure~\ref{fig:example-continuous}.  As $\varepsilon$ increases, $h_1$ becomes more biased and the distance between $h_0$ and $h_1$ increases proportionally.  One can show that $\delta_{\tau_\varepsilon}(h_0, h_1) = \frac{\varepsilon}{0.5 + \varepsilon} \leq 2\varepsilon$.  Note that if $\varepsilon$ is small then the distance is also small and $\lim_{\varepsilon \to 0} \delta_{\tau_\varepsilon}(h_0, h_1) = 0$.

The formula $\varphi = \mu\mvar.\, \qnext(\red \vee \mvar) \ominus 0.5$ distinguishes the states $h_0$ and $h_1$ the most, that is $\delta_{\tau_\varepsilon}(h_0, h_1) = \frac{\varepsilon}{0.5 + \varepsilon} = |\sem{\varphi}(h_0) - \sem{\varphi}(h_1)|$.  The quantifier~$\mu$ denotes the least fixed point of the recursive formula involving the variable $\mvar$.  Intuitively, a state satisfies $\mvar$ if the next state is $\red$ or satisfies $\mvar$ with probability greater than a half.  More precisely, considering state~$h_1$, $\sem{\varphi} (h_1)$ is the expected value of $\max(\sem{\varphi}(s), \sem{\red}(s)) - \frac{1}{2}$, where $s$ denotes the random successor state of $h_1$. Then, $\sem{\varphi} (h_1)$ evaluates to $\sum_{n=0}^\infty \varepsilon (\frac{1}{2} - \varepsilon)^n = \frac{\varepsilon}{0.5 + \varepsilon}$.  Each summand in the series represents the probability of reaching state $t$ in $n+1$ steps, starting from state $h_1$, with $0.5$ subtracted at each step.  On the other hand, $\sem{\varphi} (h_0) = 0$.
\end{example}

\begin{example}
\label{example:discontinuous}
In Figure~\ref{fig:example2-discontinuous}, when $\varepsilon = 0$, the states $h_2$ and $h_3$ are bisimilar with $\delta_{\tau_0}(h_2, h_3) = 0$.  However, if $\varepsilon \gr 0$, then $\delta_{\tau_\varepsilon}(h_2, h_3) = 1$.  This difference is evident when considering the probability of eventually reaching a state labelled with \emph{tails} when starting in $h_2$ compared to $h_3$.  In the first Markov chain, $\sem{\lozenge \textit{tails}} = 0$, while in the second Markov chain, $\sem{\lozenge \textit{tails}} = 1$.  This property can be expressed as the quantitative $\mu$-calculus formula $\mu\mvar.\, \qnext(\red \vee \mvar)$.  This example was also presented in \cite{JMLM14}.
\end{example}

\begin{example}
The first Markov chain in Figure~\ref{fig:example-discontinuous} represents fair coin flips, while the second Markov chain represents potentially biased coin flips.  When $\varepsilon = 0$, the states $h_4$ and $h_5$ are bisimilar with $\delta_{\tau_0}(h_4, h_5) = 0$.  However, if $\varepsilon \gr 0$, one can show that $\delta_{\tau_\varepsilon}(h_4, h_5) = 1$.  Intuitively, this is because small differences in probabilities can compound and lead to qualitative differences in the long-run behaviour.

Let us illustrate this.  Assume that a point is awarded each time the coin lands on \emph{tails} and a point is deducted each time it lands on \emph{heads}.  Let us examine the limit behaviour of the Markov chains.  Observe that the Markov chains behave like a random walk on the integer number line, $\mathbb{Z}$, starting at $0$.  At each step, the first Markov chain goes up by one with probability $\frac{1}{2}$ and down by one with probability $\frac{1}{2}$.  On the other hand, at each step, the second Markov chain goes up by one with probability $\frac{1}{2} + \varepsilon$ and down by one with probability $\frac{1}{2} - \varepsilon$.  Let $Y_1, Y_2, Y_3, \ldots$ be the sequence of independent random variables, where $Y_i$ denotes the $i^{\mathrm{th}}$ step taken by the random walk, with $Y_i = 1$ for a step up and $Y_i = -1$ for a step down. Define $S_n = \sum_{i = 1}^{n} Y_i$.  In the first Markov chain, $P(\liminf_{n \to \infty} S_n = -\infty) = 1$ and $P(\limsup_{n \to \infty} S_n = \infty) = 1$, by the Hewitt-Savage zero-one law \cite[Example~5.19]{E11}.  In contrast, in the second Markov chain, we have $P(\lim_{n \to \infty} S_n = \infty) = 1$, by the law of large numbers \cite{S64}.  Thus, in the first Markov chain, with equal chances of gaining or losing points at each step, the random walk almost surely oscillates infinitely.  In contrast, in the second Markov chain, the upward bias introduced by $\varepsilon \gr 0$ guarantees that the total number of points will eventually diverge to $+\infty$.

We see that small changes in the transition probabilities can lead to significant changes in the behaviour and, thus, in the distances between states.  This example is similar to the one presented in \cite{JMLM14}.
\end{example}

In the remainder, we conservatively assume that the transition function can be varied arbitrarily, that is, changes to the transition function are not restricted to specific transitions with constrained variables as in the examples.  Also, different from \cite{JMLM14}, the changes might ``add transitions.''  Therefore, we are interested in the continuity of the function $\delta_{\_}(s, t) : (S \to \Dreal(S)) \to [0, 1]$.  See \ifthenelse{\isundefined{\techReport}}{\cite[Appendix~A]{arxiv}}{Appendix~\ref{appendix:metric-topology}} for the metric on distributions $S \to \Dreal(S)$ used here.

The bisimilarity distance function $\delta_{\_}(s, t)$ is \emph{lower semi-continuous} at $\tau$ if for any sequence $(\tau_n)_n$ converging to $\tau$ we have $\liminf_n \delta_{\tau_n}(s, t) \geq \delta_\tau(s, t)$ and \emph{upper semi-continuous} at $\tau$ if we have $\limsup_n \delta_{\tau_n}(s, t) \leq \delta_\tau(s, t)$.  Lastly, $\delta_{\_}(s, t)$ is \emph{continuous} at $\tau$ if it is both upper semi-continuous and lower semi-continuous at $\tau$.

The examples in Figure~\ref{figure:examples} suggest that the bisimilarity distance function $\delta_{\_}$ is lower semi-continuous at $\tau$.  Indeed the following proposition shows that this holds in general, even allowing for arbitrary modifications of $\tau$.

\begin{proposition}
\label{proposition:lower-semi-continuous}
For all $s$, $t \in S$, the function $\delta_{\_}(s, t) : (S \to \Dreal(S)) \to [0, 1]$ is lower semi-continuous at $\tau$, that is, if $(\tau_n)_n$ converges to $\tau$ then $\liminf_n \delta_{\tau_n}(s, t)$ $\geq \delta_\tau(s, t)$.
\end{proposition}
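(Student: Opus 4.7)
The plan is to let $d^*(s,t) := \liminf_n \delta_{\tau_n}(s,t)$, viewed as a function $S \times S \to [0,1]$, and to show that $d^*$ is a \emph{prefixed point} of $\Delta_\tau$, i.e.\ $\Delta_\tau(d^*) \le d^*$ pointwise. Since $\Delta_\tau$ is monotone on the complete lattice $[0,1]^{S \times S}$ and $\delta_\tau$ is its least fixed point, Knaster--Tarski then gives $\delta_\tau \le d^*$, which is exactly the lower semi-continuity we want.

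For pairs $(s,t)$ with $\ell(s) \neq \ell(t)$ the desired inequality is immediate: $\delta_{\tau_n}(s,t) = 1$ for every $n$, so $d^*(s,t) = 1 \ge \Delta_\tau(d^*)(s,t)$. For label-matching pairs, I would use that $\delta_{\tau_n}$ is a fixed point of $\Delta_{\tau_n}$ together with compactness of $\Creal(\tau_n(s),\tau_n(t))$ to pick, for each $n$, an optimal coupling $\omega_n \in \Creal(\tau_n(s),\tau_n(t))$ realizing $\delta_{\tau_n}(s,t) = \sum_{u,v} \omega_n(u,v)\, \delta_{\tau_n}(u,v)$. Then I would pass to a subsequence along which $\delta_{\tau_n}(s,t) \to d^*(s,t)$, and, using compactness of the unit cube in $\mathbb{R}^{S \times S}$, pass to a further subsequence along which $\omega_n \to \omega^*$ coordinate-wise. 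Since the marginal conditions defining a coupling are preserved under coordinate-wise limits and $\tau_n \to \tau$, the limit $\omega^*$ lies in $\Creal(\tau(s),\tau(t))$.

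To close the argument, I would pass to the limit in the identity $\delta_{\tau_n}(s,t) = \sum_{u,v} \omega_n(u,v)\, \delta_{\tau_n}(u,v)$. Using the uniform bound $\delta_{\tau_n} \le 1$, the error $\sum_{u,v}(\omega_n - \omega^*)(u,v)\, \delta_{\tau_n}(u,v)$ vanishes in the limit, so the left-hand side equals $\liminf_n \sum_{u,v} \omega^*(u,v)\, \delta_{\tau_n}(u,v)$. Applying the finite-sum Fatou inequality termwise then yields $d^*(s,t) \ge \sum_{u,v} \omega^*(u,v)\, d^*(u,v) \ge \Delta_\tau(d^*)(s,t)$, completing the prefixed-point verification.

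The main obstacle will be the diagonal subsequence juggling: I must simultaneously realize $\liminf_n \delta_{\tau_n}(s,t)$ and extract a convergent subsequence of the corresponding optimal couplings, and then verify that the marginal constraints transfer from $\tau_n$ to $\tau$ in the limit so that $\omega^*$ is genuinely an admissible coupling in $\Creal(\tau(s),\tau(t))$ rather than merely an arbitrary distribution on $S \times S$.
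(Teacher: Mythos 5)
Your proposal is correct and follows essentially the same route as the paper: both show that $\liminf_n \delta_{\tau_n}$ is a prefixed point of $\Delta_\tau$ and invoke Knaster--Tarski, using optimal couplings for the perturbed transition functions and passing to a limiting coupling in $\Creal(\tau(s),\tau(t))$. The only difference is a technical device within the same strategy: the paper takes the coordinate-wise $\liminf$ of the optimal couplings (policies) directly and uses superadditivity of $\liminf$ over the finite sum, whereas you extract convergent subsequences by compactness and apply the finite-sum Fatou inequality --- both close the argument in the same way.
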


In Figure~\ref{fig:example-discontinuous}, the bisimilarity distance function is not upper semi-continuous.  Specifically, $\limsup_{\epsilon \to 0} \delta_{\tau_\epsilon} (h_4, h_5) = 1$, while $\delta_{\tau_0} (h_4, h_5) = 0$.  As a result, small perturbations of $\tau$ cause a jump in the distance from $0$ to $1$.  The main goal of this paper is to characterize and  identify the continuity of the bisimilarity distance function for bisimilar pairs of states.

The following subsets of $S \times S$ play a key role in the subsequent discussion.
\begin{definition}
\label{definition:sets}
The sets $S^2_\Delta$, $S^2_{0,\tau}$, $S^2_1$, $S^2_{?,\tau}$, and $S^2_{0?}$ are defined by
\begin{align*}
S^2_\Delta = & \{\, (s, s) \mid s \in S \,\}\\
S^2_{0,\tau} = & \{\, (s, t) \in S \times S \mid s \not= t \wedge s \sim t \,\}\\
S^2_1 = & \{\, (s, t) \in S \times S \mid \ell(s) \not= \ell(t) \,\}\\
S^2_{?,\tau} = & (S \times S) \setminus (S^2_\Delta \cup S^2_{0,\tau} \cup S^2_1)\\
S^2_{0?} = & S^2_{0,\tau} \cup S^2_{?,\tau}
\end{align*}
\end{definition}
The first four sets form a partition of $S \times S$.  Observe that the sets $S^2_{0,\tau}$ and $S^2_{?,\tau}$ depend on $\tau$ and may, therefore, change when we perturb $\tau$, whereas the sets $S^2_\Delta$ and $S^2_1$ stay the same.  Note that $S^2_{0?} = (S \times S) \setminus (S^2_\Delta \cup S^2_1)$.  Hence, this set also stays the same if we perturb $\tau$.  Furthermore, note that $\mathord{\sim} = S^2_\Delta \cup S^2_{0,\tau}$ and for all $(s, t) \in S^2_1$, we have $\delta_\tau(s,t) = 1$.

\begin{definition}
Let $\tau : S \to \Dreal(S)$.  The set $\mathcal{P}_\tau$ of \emph{policies} for $\tau$ is defined by
\[
\mathcal{P}_\tau
=
\left \{\, P : S \times S \to \Dreal(S \times S) \, \middle \vert 
\begin{array}{l}
\forall (s, t) \in S^2_\Delta \cup S^2_{0?} : P(s, t) \in \Creal(\tau(s), \tau(t))\\
\forall (s, t) \in S^2_1 : \support(P(s, t)) = \{ (s, t) \}
\end{array}
\right \}.
\]
\end{definition}
Note that a policy $P \in \mathcal{P}_\tau$ induces a Markov chain $<S \times S, P>$.  The subscript $\tau$ is omitted when clear from the context.  The following proposition characterizes $\delta_\tau$ in terms of policies.

\begin{proposition}
\label{proposition:reach-s1}
For all $s$, $t \in S$, $\displaystyle \delta_\tau(s, t) = \min_{P \in \mathcal{P}} \gamma_P$, where $\gamma_P$ is the probability with which $(s, t)$ reaches $S_1^2$ in $<S \times S, P>$.
\end{proposition}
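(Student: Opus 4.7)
The plan is to prove the two inequalities $\delta_\tau(s,t) \leq \gamma_P$ for every $P \in \mathcal{P}$, and $\delta_\tau(s,t) \geq \gamma_{P^*}$ for a specific $P^*$ attaining the minimum, so that the minimum is in fact realized.

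For the upper bound, fix any $P \in \mathcal{P}$ and view $\gamma_P$ as a function $S \times S \to [0,1]$, where $\gamma_P(s,t) = 1$ for $(s,t) \in S^2_1$ by construction of $\mathcal{P}$ (the pairs in $S_1^2$ are absorbing under $P$). Standard reachability analysis in the induced Markov chain $\langle S \times S, P\rangle$ gives, for $(s,t) \notin S_1^2$, the linear identity $\gamma_P(s,t) = \sum_{u,v \in S} P(s,t)(u,v)\,\gamma_P(u,v)$. Since $P(s,t) \in \Creal(\tau(s), \tau(t))$ for all such pairs, this sum is one particular value of the expression inside the infimum in Definition~\ref{definition:delta}, whence $\Delta_\tau(\gamma_P)(s,t) \leq \gamma_P(s,t)$; for $(s,t) \in S_1^2$ both sides equal~$1$. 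Hence $\gamma_P$ is a prefix point of $\Delta_\tau$, and since $\delta_\tau$ is the least fixed point of the monotone operator $\Delta_\tau$ on the complete lattice $[0,1]^{S \times S}$, Knaster--Tarski yields $\delta_\tau \leq \gamma_P$.

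For the lower bound, I would construct a policy $P^*$ whose coupling at each pair is an optimiser for the defining recursion of $\delta_\tau$. Concretely, for $(s,t)$ with $\ell(s)=\ell(t)$, pick $P^*(s,t) \in \Creal(\tau(s),\tau(t))$ attaining $\inf_\omega \sum_{u,v} \omega(u,v)\,\delta_\tau(u,v) = \delta_\tau(s,t)$; such a minimiser exists because $\Creal(\tau(s),\tau(t))$ is compact and the objective is linear in $\omega$, so the infimum in Definition~\ref{definition:delta} is attained. For $(s,t) \in S_1^2$ take the forced Dirac coupling, and for $(s,s) \in S_\Delta^2$ take any coupling (for example the Dirac on the diagonal). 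By construction, $\delta_\tau$ satisfies, for $(s,t) \notin S_1^2$, $\delta_\tau(s,t) = \sum_{u,v} P^*(s,t)(u,v)\,\delta_\tau(u,v)$, and $\delta_\tau(s,t) = 1$ on $S_1^2$. These are precisely the equations characterising the reachability probability of $S_1^2$ in $\langle S \times S, P^*\rangle$, and the standard fact that this reachability probability $\gamma_{P^*}$ is the \emph{least} non-negative solution of that linear system gives $\gamma_{P^*} \leq \delta_\tau$. Combined with the upper bound applied to $P^*$, we obtain $\delta_\tau = \gamma_{P^*}$, and this is also $\min_{P \in \mathcal{P}} \gamma_P$.

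The main obstacle is the selection of $P^*$: one must verify both the existence of the pointwise minimisers (compactness of the coupling polytope plus linearity of the objective, invoking a measurable-selection-style argument in the finite setting, which is here trivial) and that the resulting $P^*$ lies in $\mathcal{P}$ — i.e.\ the coupling constraint and the Dirac requirement on $S_1^2$ are compatible with optimality on $S^2_{0?}$. A secondary technical point is the invocation of "least non-negative solution" for hitting probabilities, which is standard but deserves a one-line justification: any solution $h$ to the linear system majorises $\gamma_{P^*}$ by iterating the equation $n$ times and taking $n \to \infty$, using $h \geq 0$ and $h \leq 1$ to control the tail.
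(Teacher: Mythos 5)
Your proposal is correct and takes essentially the same route as the paper: the paper discharges this proposition by citing Theorem~\ref{theorem:reach-s1} (the reachability probability is the least fixed point $\gamma_P$ of $\Gamma_P$) and Theorem~\ref{proposition:minimal-coupling}, and it explicitly notes that the latter is proved by exactly your two steps, namely $\delta_\tau \sqsubseteq \gamma_P$ for every $P \in \mathcal{P}$ (your pre-fixed-point/Knaster--Tarski argument) together with the existence of some $P^* \in \mathcal{P}$ with $\delta_\tau = \gamma_{P^*}$ (your pointwise-optimal couplings, which exist by compactness of $\Creal(\tau(s),\tau(t))$ and linearity of the objective). The one slip is the phrase ``take any coupling'' on $S^2_\Delta$: an arbitrary $\omega \in \Creal(\tau(s),\tau(s))$ may yield $\Gamma_{P^*}(\delta_\tau)(s,s) \gr 0 = \delta_\tau(s,s)$ and break the pre-fixed-point property you need to conclude $\gamma_{P^*} \sqsubseteq \delta_\tau$, but your parenthetical choice of the diagonal coupling (or, uniformly, an optimal coupling there too) repairs this.
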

\begin{proofsketch}
The proof follows from {\cite[Theorem~10.15]{BK08}} and {\cite[Theorem~8]{CBW12}}.
\qed \end{proofsketch}

\begin{example}
\label{example:distance}
Consider the labelled Markov chain in Figure~\ref{fig:example-continuous} when $\varepsilon = \frac{1}{8}$.  Then the probability with which $(h_0, h_1)$ reaches $S_1^2$ for any policy $P \in \mathcal{P}$ is $\geq \frac{1}{5}$.  Any policy $P$ such that $P(h_0, h_1) = \set{(h_0, h_1) \mapsto \frac{3}{8}, (h_0, t) \mapsto \frac{1}{8}, (t, t) \mapsto \frac{1}{2}}$ achieves the minimum probability of $\frac{1}{5}$. The Markov chain induced by such a policy $P$ is illustrated in Figure~\ref{figure:distance}.  Thus, $\delta_{\tau_\varepsilon}(h_0, h_1) = \frac{1}{5}$.
\end{example}

\begin{figure}
\begin{center}
\begin{tikzpicture}[scale=0.9]
    \node[bigstate, fill=cRed!40, fill fraction={cBlue!40}{0.5}] (1) at (1,3) {$h_1 ~~ t~$};
    \node[bigstate, fill=cRed!40, fill fraction={cBlue!40}{0.5}] (2) at (3,3) {$h_0 ~~ t~$};
    \node[bigstate, fill=cBlue!40, fill fraction={cRed!40}{0.5}] (3) at (5,3) {$~t ~~ h_0$};
    \node[bigstate, fill=cBlue!40, fill fraction={cRed!40}{0.5}] (4) at (7,3) {$~t ~~ h_1$};
    \node[bigstate, fill=cBlue!40] (5) at (3,1) {$h_0 ~ h_1$};
    \node[bigstate, fill=cBlue!40] (6) at (5,1) {$h_1 ~ h_0$};
    \node[bigstate, fill=cBlue!40] (7) at (2,-1) {$h_0 ~ h_0$};
    \node[bigstate, fill=cRed!40] (8) at (4,-1) {$t ~~ t$};
    \node[bigstate, fill=cBlue!40] (9) at (6,-1) {$h_1 ~ h_1$};
    \draw (1.south) -- (1.north) ;
    \draw (2.south) -- (2.north) ;
    \draw (3.south) -- (3.north) ;
    \draw (4.south) -- (4.north) ;
    \draw (5.south) -- (5.north) ;
    \draw (6.south) -- (6.north) ;
    \draw (7.south) -- (7.north) ;
    \draw (8.south) -- (8.north) ;
    \draw (9.south) -- (9.north) ;
    \draw[->, loop above] (1) edge node[above] {$1$} (1);
    \draw[->, loop above] (2) edge node[above] {$1$} (2);
    \draw[->, loop above] (3) edge node[above] {$1$} (3);
    \draw[->, loop above] (4) edge node[above] {$1$} (4);
    \draw[->, loop below] (7) edge node[below] {$\frac{1}{2}$} (7);
    \draw[->, loop below] (8) edge node[below] {$1$} (8);
    \draw[->, loop below] (9) edge node[below] {$\frac{3}{8}$} (9);
    \draw[->] (7) edge node[below] {$\frac{1}{2}$} (8);
    \draw[->] (9) edge node[below] {$\frac{5}{8}$} (8);
    \draw[->] (5) edge node[left] {$\frac{1}{8}$} (2);
    \draw[->] (5) edge node[left] {$\frac{1}{2}$} (8);
    \draw[->, loop left] (5) edge node[left] {$\frac{3}{8}$} (5);
    \draw[->] (6) edge node[right] {$\frac{1}{8}$} (3);
    \draw[->] (6) edge node[right] {$\frac{1}{2}$} (8);
    \draw[->, loop right] (6) edge node[right] {$\frac{3}{8}$} (6);
\end{tikzpicture}
\end{center}
\caption{The Markov chain $<S \times S, P>$ induced by the policy $P$ such that $(h_0, h_1)$ reaches $S_1^2$ with probability $\frac{1}{5}$.}
\label{figure:distance}
\end{figure}
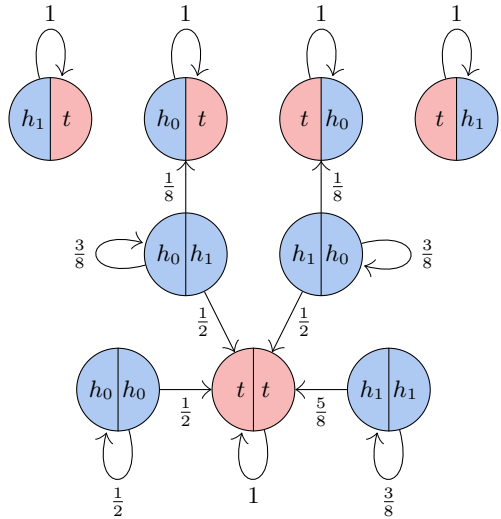


\section{Robust Probabilistic Bisimilarity}
\label{section:robust}

We aim to define a notion of robust bisimilarity which is a bisimulation that is robust against perturbations of the transition function $\tau$.  As we will see in Theorem~\ref{theorem:continuous} below, the following definition fulfills this requirement.

\begin{definition}
\label{definition:robust-probabilistic-bisimilarity}
\emph{Robust probabilistic bisimilarity} (or just robust bisimilarity), denoted $\robust$, is defined for $s$, $t \in S$ as $s \simeq t$ if there exists a policy $P \in \mathcal{P}$ such that $(s, t)$ reaches $S^2_\Delta$ with probability $1$ in $<S \times S, P>$.
\end{definition}

\begin{lemma}
\label{lemma:rpb-is-pb}
Robust bisimilarity, $\simeq$, is a bisimulation.
\end{lemma}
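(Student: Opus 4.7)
The plan is to verify the two parts of Definition~\ref{definition:probabilistic-bisimilary}: that $\mathord{\simeq}$ is an equivalence relation, and that every $(s,t) \in \mathord{\simeq}$ satisfies $\ell(s) = \ell(t)$ and admits a coupling $\omega \in \Creal(\tau(s),\tau(t))$ with $\support(\omega) \subseteq \mathord{\simeq}$. Reflexivity is immediate since $(s,s) \in S^2_\Delta$ is reached at step zero under any policy. For symmetry, given a witnessing $P$ for $s \simeq t$, I take the swap policy $P'(a,b)(x,y) = P(b,a)(y,x)$: it lies in $\mathcal{P}$ (using symmetry of $S^2_1$ and closure of couplings under coordinate swap), and the chain $<S \times S, P'>$ started at $(t,s)$ is the swap image of $<S \times S, P>$ started at $(s,t)$, so the swap-invariant set $S^2_\Delta$ is reached with probability one. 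For the label condition, if $\ell(s) \neq \ell(t)$ with $s \neq t$ then $(s,t) \in S^2_1$ is absorbing under every policy and disjoint from $S^2_\Delta$, contradicting $s \simeq t$. For the coupling condition, when $s = t$ the diagonal coupling of $\tau(s)$ with itself works directly; when $s \neq t$, I take $\omega = P(s,t) \in \Creal(\tau(s),\tau(t))$ for any witnessing $P$ and use the one-step identity
\[
1 \;=\; \Pr\nolimits_P[(s,t) \to S^2_\Delta] \;=\; \sum_{(u,v)} \omega(u,v)\,\Pr\nolimits_P[(u,v) \to S^2_\Delta]
\]
(valid because $(s,t) \notin S^2_\Delta$) to force $\Pr_P[(u,v) \to S^2_\Delta] = 1$ for every $(u,v) \in \support(\omega)$; thus $u \simeq v$ via the same $P$ and $\support(\omega) \subseteq \mathord{\simeq}$.

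The main work is transitivity. Suppose $s \simeq t$ via $P_1$ and $t \simeq u$ via $P_2$. I first replace each policy on $S^2_\Delta$ by the diagonal coupling---this does not alter the probability of reaching $S^2_\Delta$ from any non-diagonal pair---so that $S^2_\Delta$ becomes absorbing in the induced chains. Using the gluing lemma at each $(a,b,c) \in S^3$, I build $Q(a,b,c) \in \Dreal(S^3)$ whose $(1,2)$-marginal is $P_1(a,b)$ and whose $(2,3)$-marginal is $P_2(b,c)$. In the chain $<S^3, Q>$ started at $(s,t,u)$, the $(1,2)$-projection coincides with $<S \times S, P_1>$ from $(s,t)$ and thus almost surely hits $\{(a,a) : a \in S\}$ and, by the stickiness just arranged, never leaves it; likewise for the $(2,3)$-projection. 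At the maximum of the two hitting times, $a_n = b_n = c_n$, so the $(1,3)$-projection lies in $S^2_\Delta$; hence it reaches $S^2_\Delta$ almost surely from $(s,u)$.

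Viewed as a history-dependent randomised strategy in the MDP on $S \times S$ whose action at $(a,c)$ is a coupling of $\tau(a)$ and $\tau(c)$, the $(1,3)$-projection achieves almost-sure reachability of $S^2_\Delta$ from $(s,u)$. The main obstacle is to extract from it a \emph{memoryless} witness $P_3 \in \mathcal{P}$, and I plan to invoke the standard MDP fact that almost-sure winning regions for reachability objectives admit memoryless deterministic strategies: this adapts to our setting because such winning regions are characterised graph-theoretically in terms of the supports of available actions, and the vertices of the coupling polytope suffice. Producing such a $P_3$ witnesses $s \simeq u$ and completes the proof.
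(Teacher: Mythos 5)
Your proposal is correct, and the easy parts (reflexivity, the swap policy for symmetry, absorption of $S^2_1$ for the label condition, and the one-step conditioning argument for $\support(P(s,t)) \subseteq \mathord{\robust}$) coincide with the paper's proof. Where you genuinely diverge is transitivity, which is also where the paper does its main work. The paper builds a \emph{memoryless} policy directly: for each pair $(a,c)$ in the relational composition of the reachable sets of the two witnessing chains, it defines an explicit coupling by composing $P_{st}(a,b)$ and $P_{tu}(b,c)$ through $\tau(b)$ and averaging over all admissible intermediate states $b$; it then verifies $\support(\omega_{(a,c)}) \subseteq R_{st} \bowtie R_{tu}$ and argues via closed communication classes (every closed class reachable from $(s,u)$ is shown, by an explicit path-matching argument, to contain a diagonal pair, whence recurrence gives almost-sure reachability of $S^2_\Delta$). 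You instead glue the two policies into a single Markov chain on $S^3$, observe that the $(1,2)$- and $(2,3)$-projections are the two witnessing chains, conclude that the $(1,3)$-projection almost surely hits $S^2_\Delta$, and then appeal to memoryless determinacy of almost-sure reachability in MDPs to extract a policy in $\mathcal{P}$. Both routes work. Yours is conceptually cleaner and makes the probabilistic content (the three-way coupling) transparent, but it outsources the hardest step to an MDP determinacy result that is standard only for finite action spaces; your justification for the adaptation (almost-sure winning regions depend only on action supports, of which there are finitely many, all realizable, with the relevant extremal choices available on faces of the transportation polytope) is the right one but would need to be spelled out, since convex combinations of couplings enlarge supports and one must check that restricting to finitely many representative couplings neither shrinks the winning region nor breaks the attractor construction. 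The paper's construction avoids this machinery entirely at the cost of a more intricate explicit formula and a hands-on communication-class argument. One small point you leave implicit: the gluing at a triple $(a,b,c)$ requires $P_1(a,b)$ and $P_2(b,c)$ to be genuine couplings (not the Dirac self-loops prescribed on $S^2_1$), which holds for all triples reachable from $(s,t,u)$ but should be noted when defining $Q$ globally.
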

\begin{proofsketch}
Clearly, $\robust$ is reflexive and symmetric.  We prove in \ifthenelse{\isundefined{\techReport}}{\cite[Appendix]{arxiv}}{the Appendix} that $\robust$ is transitive as well and, therefore, an equivalence relation.

Let $s$, $t \in S$ such that $s \simeq t$.  Let $P \in \mathcal{P}$ be the policy such that $(s, t)$ reaches $S^2_\Delta$ with probability $1$ in $<S \times S, P>$.  Then, it follows from the definition of $\mathcal{P}$ that $(s, t) \not\in S_1^2$.  Thus, $\ell(s) = \ell(t)$.

Let $\omega = P(s, t)$, $u$, $v \in S$ and $(u, v) \in \support(\omega)$.  Hence, $\omega(u, v) \gr 0$ and $(u, v)$ is reachable from $(s, t)$.  Therefore, $(u, v)$ must reach $S^2_\Delta$ with probability $1$ in $<S \times S, P>$.  Consequently, $u \simeq v$. As a result, $\support(\omega) \subseteq \mathord{\simeq}$.
\qed \end{proofsketch}
Therefore, $\robust \subseteq \mathord{\sim}$ and, by Theorem~\ref{theorem:distance-zero}, for any $s$, $t \in S$ such that $s \simeq t$ we have $\delta_\tau(s, t) = 0$.

\begin{example}
\label{example:robust}
In Figure~\ref{fig:example-continuous}, when $\varepsilon = 0$, then $h_0 \simeq h_1$, since there exists a policy $P \in \mathcal{P}$ such that $(h_0, h_1)$ reaches $(t, t) \in S^2_\Delta$ with probability $1$ in $<S \times S, P>$.  Indeed, take $P(h_0, h_1) = \set{(h_0, h_1) \mapsto \frac{1}{2}, (t, t) \mapsto \frac{1}{2}}$ as shown in Figure~\ref{figure:robust}. Hence, $h_0 \simeq h_1$.  Note, however, that $h_2 \not\simeq h_3$ and $h_4 \not\simeq h_5$.
\end{example}

\begin{figure}[ht]
\begin{center}
\begin{tikzpicture}[scale=0.9]
    \node[bigstate, fill=cRed!40, fill fraction={cBlue!40}{0.5}] (1) at (1,3) {$h_1 ~~ t~$};
    \node[bigstate, fill=cRed!40, fill fraction={cBlue!40}{0.5}] (2) at (3,3) {$h_0 ~~ t~$};
    \node[bigstate, fill=cBlue!40, fill fraction={cRed!40}{0.5}] (3) at (5,3) {$~t ~~ h_0$};
    \node[bigstate, fill=cBlue!40, fill fraction={cRed!40}{0.5}] (4) at (7,3) {$~t ~~ h_1$};
    \node[bigstate, fill=cBlue!40] (5) at (3,1) {$h_0 ~ h_1$};
    \node[bigstate, fill=cBlue!40] (6) at (5,1) {$h_1 ~ h_0$};
    \node[bigstate, fill=cBlue!40] (7) at (2,-1) {$h_0 ~ h_0$};
    \node[bigstate, fill=cRed!40] (8) at (4,-1) {$t ~~ t$};
    \node[bigstate, fill=cBlue!40] (9) at (6,-1) {$h_1 ~ h_1$};
    \draw (1.south) -- (1.north) ;
    \draw (2.south) -- (2.north) ;
    \draw (3.south) -- (3.north) ;
    \draw (4.south) -- (4.north) ;
    \draw (5.south) -- (5.north) ;
    \draw (6.south) -- (6.north) ;
    \draw (7.south) -- (7.north) ;
    \draw (8.south) -- (8.north) ;
    \draw (9.south) -- (9.north) ;
    \draw[->, loop above] (1) edge node[above] {$1$} (1);
    \draw[->, loop above] (2) edge node[above] {$1$} (2);
    \draw[->, loop above] (3) edge node[above] {$1$} (3);
    \draw[->, loop above] (4) edge node[above] {$1$} (4);
    \draw[->, loop below] (7) edge node[below] {$\frac{1}{2}$} (7);
    \draw[->, loop below] (8) edge node[below] {$1$} (8);
    \draw[->, loop below] (9) edge node[below] {$\frac{3}{8}$} (9);
    \draw[->] (7) edge node[below] {$\frac{1}{2}$} (8);
    \draw[->] (9) edge node[below] {$\frac{5}{8}$} (8);
    \draw[->] (5) edge node[left] {$\frac{1}{2}$} (8);
    \draw[->, loop left] (5) edge node[left] {$\frac{1}{2}$} (5);
    \draw[->] (6) edge node[right] {$\frac{1}{2}$} (8);
    \draw[->, loop right] (6) edge node[right] {$\frac{1}{2}$} (6);
\end{tikzpicture}
\end{center}
\caption{The Markov chain $<S \times S, P>$ induced by the policy $P$ such that $(h_0, h_1)$ reaches $S^2_\Delta$ with probability $1$.}
\label{figure:robust}
\end{figure}
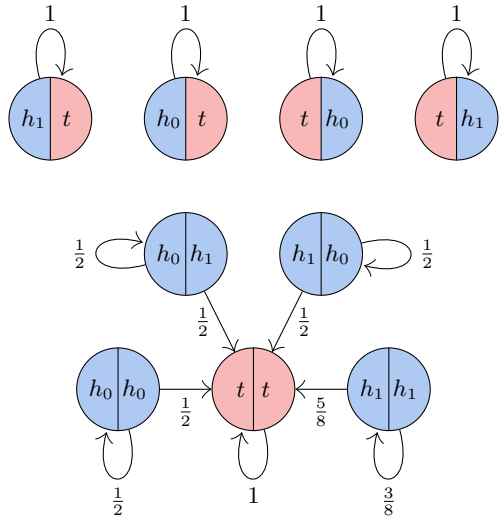

The following theorem provides the rationale behind the term robust bisimilarity. It establishes that for all robust bisimilar pairs of states, small perturbations of $\tau$ result in a correspondingly small change in the distance between them.

\begin{theorem}
\label{theorem:continuous}
For all $s$, $t \in S$, if $s \simeq t$ then the function $\delta_{\_}(s, t) : (S \to \Dreal(S)) \to [0, 1]$ is continuous at $\tau$, that is, for any sequence $(\tau_n)_n$ converging to $\tau$ we have $\lim_n \delta_{\tau_n}(s, t) = 0$.
\end{theorem}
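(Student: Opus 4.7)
\begin{proofsketch}
By Lemma~\ref{lemma:rpb-is-pb} and Theorem~\ref{theorem:distance-zero}, $s \simeq t$ implies $\delta_\tau(s,t)=0$, so lower semi-continuity (Proposition~\ref{proposition:lower-semi-continuous}) is automatic, and it suffices to prove upper semi-continuity: $\limsup_n \delta_{\tau_n}(s,t) \le 0$. The plan is to use Proposition~\ref{proposition:reach-s1} and exhibit, for each sufficiently large $n$, a policy $P_n \in \mathcal{P}_{\tau_n}$ whose reachability probability $\gamma_{P_n}$ to $S_1^2$ from $(s,t)$ is arbitrarily small.

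First I would fix a witnessing policy $P \in \mathcal{P}_\tau$ for $s \simeq t$, i.e.\ $(s,t)$ reaches $S^2_\Delta$ with probability~$1$ in $\langle S\times S, P\rangle$, and modify $P$ so that $P(u,u)$ is the diagonal coupling for every $u \in S$; this makes $S^2_\Delta$ absorbing without altering the reachability behaviour from $(s,t)$. A key intermediate step is a continuity-of-couplings lemma on the finite state space: if $\mu_n \to \mu$ and $\nu_n \to \nu$ in $\Dreal(S)$, then for any $\omega \in \Creal(\mu,\nu)$ there exist $\omega_n \in \Creal(\mu_n,\nu_n)$ with $\omega_n \to \omega$. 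Using this pointwise on each $(u,v) \in S^2_{0?}$, together with the diagonal choice on $S^2_\Delta$ and the forced self-loop on $S^2_1$, I construct $P_n \in \mathcal{P}_{\tau_n}$ with $P_n(u,v) \to P(u,v)$ for every $(u,v) \in S \times S$.

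Next I would exploit two absorption facts. Since $\langle S \times S, P\rangle$ starting at $(s,t)$ is a finite Markov chain whose non-$S^2_\Delta$ part is transient, for every $\varepsilon > 0$ there is an $N$ such that the $P$-chain is in $S^2_\Delta$ at time $N$ with probability $\ge 1 - \varepsilon/2$; in particular, under $P$ no path ever visits $S_1^2$. I then couple the $P$-chain and the $P_n$-chain (both started at $(s,t)$) step by step via the maximal coupling of $P(u,v)$ and $P_n(u,v)$. By a union bound, the probability that the two chains disagree within the first $N$ steps is at most $N \cdot \max_{u,v}\|P_n(u,v) - P(u,v)\|_{TV}$, which tends to $0$ as $n \to \infty$. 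Choosing $n$ large enough makes this bound at most $\varepsilon/2$, so the $P_n$-chain is in the absorbing set $S^2_\Delta$ by time $N$ with probability $\ge 1 - \varepsilon$, and therefore $\gamma_{P_n} \le \varepsilon$. Combined with Proposition~\ref{proposition:reach-s1}, this yields $\delta_{\tau_n}(s,t) \le \varepsilon$ for all large $n$.

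The main technical obstacle is the continuity-of-couplings lemma used to obtain $P_n$: one needs that the set-valued map $(\mu,\nu) \mapsto \Creal(\mu,\nu)$ is lower hemicontinuous at $(\tau(u),\tau(v))$, which on a finite state space I would prove by writing $\Creal(\mu_n,\nu_n)$ as a non-empty polytope defined by linear marginal constraints depending continuously on $(\mu_n,\nu_n)$, and projecting $\omega$ onto it (e.g.\ by iteratively correcting row and column marginals, or via a Sinkhorn-style adjustment that preserves non-negativity for small perturbations). Everything else is a careful but standard coupling/absorption argument on finite Markov chains.
\qed
\end{proofsketch}
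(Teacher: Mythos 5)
Your proposal is correct, but it formalizes a different argument from the one the paper actually carries out in its appendix. Both proofs share the same skeleton up to the construction of the perturbed policies: you fix a witnessing policy, make $S^2_\Delta$ absorbing via diagonal couplings (the paper's Proposition on $S^2_\Delta$-closed policies), and then need exactly the continuity-of-couplings fact you flag as the main obstacle --- the paper proves this as a standalone lemma (composing $\omega$ with a North-West-corner coupling of the old and new marginals), and in fact obtains the stronger Lipschitz bound $d_{TV}(\omega,\pi)\le d_{TV}(\kappa,\lambda)+d_{TV}(\mu,\nu)$, which immediately yields $P_n$ with $d_F(P_n,P)\le 2\,d_F(\tau_n,\tau)$; your polytope lower-hemicontinuity route would also work but is the part of your sketch that most needs to be made precise (naive marginal correction can violate non-negativity). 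From there the two proofs diverge: you use a finite-horizon truncation plus a step-by-step maximal coupling of the $P$-chain and the $P_n$-chain with a union bound over $N$ steps, which is essentially a rigorous version of the informal ``paths of length at most $k$'' sketch the paper gives in the main text; the paper's formal appendix proof instead builds the acyclic graph of communication classes reachable from $(s,t)$ and proves $\lim_n\gamma_{P_n}(u,v)=\gamma_P(u,v)$ by induction on that graph, using invertibility of $I-P^C$ for irreducible strictly substochastic blocks. Your argument is more elementary (no linear algebra, no irreducibility) and suffices because only the one-sided bound $\gamma_{P_n}(s,t)\le\varepsilon$ is needed when the target is reached with probability $1$; the paper's matrix-norm argument is heavier but delivers genuine two-sided convergence of the reachability values on every reachable class, together with an explicit error bound governed by $\|(I-P^C)^{-1}\|$. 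You are also right that lower semi-continuity is automatic here since $\delta_\tau(s,t)=0$, so your reduction to upper semi-continuity alone is sound.
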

\begin{proofsketch}
To build some intuition behind this theorem, we first outline the underlying idea.  Let $P \in \mathcal{P}$ be the policy such that $(s, t)$ reaches $S^2_\Delta$ with probability $1$ in $<S \times S, P>$.  Then, for some horizon $k$, the probability of $(s, t)$ reaching $S^2_\Delta$ within $k$ steps is almost one, say $1 - x$, where $x \gr 0$ is a small value.  When the transition function $\tau$ is perturbed by a small $\varepsilon$, there is a policy $P'$ such that the transitions in $<S \times S, P'>$ differ from those in $<S \times S, P>$ only by a small $\varepsilon' \gr 0$. Therefore, $(s, t)$ still reaches $S^2_\Delta$ with high probability in $<S \times S, P'>$.

To argue the last point in slightly more detail, observe that if $\varepsilon \gr 0$ is small enough then the probability, say~$p$, of any individual path of length at most~$k$ from $(s,t)$ to $S^2_\Delta$ remains at least $(1-x) \cdot p$ after the perturbation.
It follows that the probability of \emph{all} paths of length at most~$k$ from $(s,t)$ to $S^2_\Delta$ remains at least $(1-x) \cdot (1-x) \ge 1 - 2 x$ after the perturbation.

In \ifthenelse{\isundefined{\techReport}}{\cite[Appendix]{arxiv}}{the Appendix}, we provide a different, formal proof using matrix norms.  There we construct a graph consisting of the closed communication classes of $<S \times S, P>$ that are reachable from $(s, t)$.  Let $P_n \in \mathcal{P}_{\tau_n}$.  We then show that for all closed communication classes $C$ reachable from $(s, t)$ and for all pairs $(u, v) \in C$, it holds that $\lim_n \gamma_{P_n}(u, v) = \gamma_P(u, v) = 0$, by induction on the length of a longest path from $C$ in the above-mentioned graph.

By Proposition~\ref{proposition:reach-s1}, we have $\limsup_n \delta_{\tau_n}(s, t) \leq \limsup_n \gamma_{P_n}(s, t)$.  Using the above results, we conclude that $\limsup_n \delta_{\tau_n}(s, t) \leq \gamma_P (s, t) = 0$.
\qed \end{proofsketch}

Towards an algorithm for computing $\simeq$, let us develop another characterization of robust bisimilarity.
Given a policy $P \in \mathcal{P}$, we say that a set $R \subseteq S \times S$ \emph{supports} a path $(u_1, v_1) \hdots (u_n, v_n)$ in $<S \times S, P>$ if for all $1 \leq i \leq n$ we have $(u_i, v_i) \in R$ and $\support(P(u_i, v_i)) \subseteq R$.

\begin{definition}\
\label{definition:robust-probabilistic-bisimulation}
A \emph{robust bisimulation} is a bisimulation $R \subseteq S \times S$ such that for all $(s, t) \in R$, there exists a policy $P \in \mathcal{P}$ such that $R$ supports a path from $(s, t)$ to $S^2_\Delta$ in $<S \times S, P>$.
\end{definition}

\begin{proposition}
\label{proposition:robust-probabilistic-bisimulation}
Robust bisimilarity, $\mathord{\simeq}$ is a robust bisimulation.
\end{proposition}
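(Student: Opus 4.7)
The plan is to exploit Lemma~\ref{lemma:rpb-is-pb}, which already tells us that $\simeq$ is a bisimulation, so that only the path-support clause of Definition~\ref{definition:robust-probabilistic-bisimulation} remains to be verified. Fix $(s,t) \in \mathord{\simeq}$ and let $P \in \mathcal{P}$ be a policy, supplied by Definition~\ref{definition:robust-probabilistic-bisimilarity}, under which $(s,t)$ reaches $S^2_\Delta$ with probability one in $\langle S \times S, P \rangle$. Because $S \times S$ is finite, probability-one reachability guarantees at least one finite path $(s,t) = (u_1,v_1), (u_2,v_2), \ldots, (u_n,v_n)$ with $(u_n,v_n) \in S^2_\Delta$ in this Markov chain. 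I will show that the same $P$ witnesses that $\simeq$ supports this path, which is exactly what is required.

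The crucial intermediate observation is that \emph{every} pair $(u,v) \in S \times S$ that is reachable from $(s,t)$ in $\langle S \times S, P \rangle$ already lies in $\simeq$. If $(u,v) \in S^2_\Delta$ this is immediate by reflexivity. Otherwise, the standard characterisation of almost-sure reachability in a finite Markov chain yields the conclusion: the bottom strongly connected components reachable from $(u,v)$ form a subset of those reachable from $(s,t)$, and the latter all lie inside $S^2_\Delta$ by the defining property of $P$; hence $(u,v)$ itself reaches $S^2_\Delta$ with probability one under $P$, so $u \simeq v$, witnessed by the same $P$ starting from $(u,v)$.

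Granted this observation, the path-support condition is mechanical: for each $i$, the pair $(u_i,v_i)$ is reachable from $(s,t)$ and so lies in $\simeq$, and every $(w_1,w_2) \in \support(P(u_i,v_i))$ is reachable from $(s,t)$ via one additional transition, hence also lies in $\simeq$. I expect the only delicate point to be the intermediate reachability observation, in particular the subtlety that $P$ might continue to move after entering $S^2_\Delta$ (say at $(u_n,v_n)$) to pairs that must nevertheless remain inside $\mathord{\simeq}$; this is exactly where the finite-state bottom-SCC argument is used, and it removes any need to redefine or patch $P$ on the diagonal.
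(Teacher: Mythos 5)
Your overall skeleton matches the paper's: invoke Lemma~\ref{lemma:rpb-is-pb} for the bisimulation clause, take the witnessing policy $P$ from Definition~\ref{definition:robust-probabilistic-bisimilarity}, extract a finite path to $S^2_\Delta$, and argue that every pair on the path and in the supports along it is again robustly bisimilar because it still reaches $S^2_\Delta$ with probability $1$ under the same $P$. However, your ``crucial intermediate observation'' --- that \emph{every} pair reachable from $(s,t)$ in $<S \times S, P>$ reaches $S^2_\Delta$ with probability $1$ --- is false for an arbitrary witnessing policy, and the BSCC justification you give for it rests on a false premise. Probability-one reachability of $S^2_\Delta$ from $(s,t)$ does \emph{not} imply that all bottom SCCs reachable from $(s,t)$ lie inside $S^2_\Delta$: a run may visit $S^2_\Delta$ and then leave it, because on a diagonal pair $(a,a)$ the policy is only required to choose \emph{some} coupling in $\Creal(\tau(a),\tau(a))$, not the identity coupling. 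Concretely, if $\tau(s)=\tau(t)$ is the Dirac distribution on $a$, $\tau(a)(b)=\tau(a)(c)=\frac{1}{2}$ and $\ell(b)\neq\ell(c)$, then the coupling $\omega(b,c)=\omega(c,b)=\frac{1}{2}$ is admissible at $(a,a)$; the resulting policy still witnesses that $(s,t)$ reaches $S^2_\Delta$ with probability $1$ (the run is in $S^2_\Delta$ at $(a,a)$ after one step), yet $(b,c)\in S^2_1$ is reachable from $(s,t)$ and is certainly not in $\mathord{\simeq}$. For this $P$ the path-support condition genuinely fails at the final index: the path is $(s,t),(a,a)$ and $\support(P(a,a))=\{(b,c),(c,b)\}\not\subseteq\mathord{\simeq}$.

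So the delicate point you flag is real, but your way around it does not work, and the repair is exactly the move you declare unnecessary: replace $P$ by an $S^2_\Delta$-closed policy, i.e.\ one with $\support(P(a,a))\subseteq S^2_\Delta$ for every $a\in S$ (Proposition~\ref{proposition:delta-closed-policy}); this preserves the probability of reaching $S^2_\Delta$ from $(s,t)$. Once the diagonal is absorbing, your reachability observation becomes correct for every pair reachable via an $S^2_\Delta$-avoiding path (any such pair with reaching probability below $1$ would drag the probability from $(s,t)$ below $1$ as well), and the last step is handled by $\support(P(u_n,v_n))\subseteq S^2_\Delta\subseteq\mathord{\simeq}$. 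You should also, as the paper does, choose the witnessing path so that $(u_i,v_i)\notin S^2_\Delta$ for all $i<n$, so that each pair on the path is reachable from $(s,t)$ without first passing through the diagonal.
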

\begin{proofsketch}
By Lemma~\ref{lemma:rpb-is-pb}, $\robust$ is a bisimulation.  Let $P \in \mathcal{P}$ be the policy such that $(s, t)$ reaches $S^2_\Delta$ with probability $1$ in $<S \times S, P>$.
Observe that for all $(u, v)$ reachable from $(s, t)$, $(u, v)$ must reach $S^2_\Delta$ with probability $1$ in $<S \times S, P>$.  Consequently, $\support(P(s, t)) \subseteq \mathord{\simeq}$.  In fact, $\mathord{\simeq}$ supports a path from $(s, t)$ to $S^2_\Delta$ in $<S \times S, P>$, and we can conclude that $\robust$ is a robust bisimulation.
\qed \end{proofsketch}

\begin{proposition}
\label{proposition:robust-bisimilarity}
For any robust bisimulation $R \subseteq S \times S$, we have $R \subseteq \mathord{\simeq}$.
\end{proposition}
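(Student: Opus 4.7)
The plan is to prove $R \subseteq \mathord{\simeq}$ by exhibiting, for each $(s,t) \in R$, a single policy $P^* \in \mathcal{P}$ under which $(s,t)$ reaches $S^2_\Delta$ with probability $1$ in $<S \times S, P^*>$; this is exactly what Definition~\ref{definition:robust-probabilistic-bisimilarity} requires for $s \simeq t$. The main obstacle is that Definition~\ref{definition:robust-probabilistic-bisimulation} only guarantees a single $R$-supported path per pair in $R$, possibly under a different policy for each pair, whereas we need one uniform policy together with probability-$1$ reachability.

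To bridge the gap, I would introduce a directed graph $G$ on vertex set $R$ whose edges $(u,v) \to (u',v')$ exist whenever some $\omega \in \Creal(\tau(u), \tau(v))$ satisfies $\support(\omega) \subseteq R$ and $(u',v') \in \support(\omega)$. Any $R$-supported path in any $<S \times S, P>$ translates edge-by-edge into a $G$-path, since at every vertex $(u,v)$ of such a path the distribution $P(u,v)$ is already a legitimate coupling in $\Creal(\tau(u), \tau(v))$ with support in $R$ (note that $R \cap S^2_1 = \emptyset$ because $R$ is a bisimulation, so $(u,v) \in S^2_\Delta \cup S^2_{0?}$). Hence the robust-bisimulation hypothesis forces every $(u,v) \in R$ to reach $S^2_\Delta$ in $G$. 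Let $d : R \to \nat$ denote the shortest-path distance in $G$ to $S^2_\Delta$; it is bounded by $|R|$.

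Next I would build $P^*$ greedily along $d$. For each $(u,v) \in R \setminus S^2_\Delta$ fix a $G$-successor $(u',v')$ with $d(u',v') = d(u,v) - 1$ and a witnessing coupling $\omega_{u,v}$, and set $P^*(u,v) = \omega_{u,v}$. On $S^2_\Delta$ take the diagonal coupling $\omega(x,x) = \tau(u)(x)$; on $S^2_1$ the point mass on $(u,v)$ is forced (no conflict arises, since $R \cap S^2_1 = \emptyset$); on the remaining pairs pick any valid coupling. Then $P^* \in \mathcal{P}$, and because every coupling assigned to a point of $R$ has support in $R$, the chain started at $(s,t)$ never leaves $R$.

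Finally, setting $p = \min\{\omega_{u,v}(u',v') : (u,v) \in R \setminus S^2_\Delta\} > 0$ and $k = \max_{(u,v) \in R} d(u,v) \le |R|$, iterating the chosen $d$-decreasing successors shows that from any $(u,v) \in R$ the probability of hitting $S^2_\Delta$ within $k$ steps is at least $p^k$. The standard renewal estimate $(1 - p^k)^n \to 0$ then yields probability-$1$ reachability of $S^2_\Delta$ from $(s,t)$ under $P^*$, giving $s \simeq t$ as required.
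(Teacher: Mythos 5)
Your proof is correct, and it shares the paper's overall strategy: build one uniform policy $P^*$ whose couplings on $R$ all have support inside $R$, so that the product chain started in $R$ never leaves $R$, and then show that $S^2_\Delta$ is reached almost surely. The two arguments diverge in both technical choices, and yours is the more elementary route. The paper takes $P^*$ to be a \emph{maximal $R$-support policy}, i.e., one with $\support(P^*(u,v)) = (\support(\tau(u)) \times \support(\tau(v))) \cap R$ for every $(u,v) \in R \cap S^2_{0?}$; proving that such couplings exist is a separate, somewhat delicate construction (a greedy spreading step followed by the North-West corner method), but it pays off because this single policy simultaneously contains \emph{every} $R$-supported path of every witnessing policy $P_{st}$, and the same object is reused later to characterize $\mathrm{Filter}$ and to justify Algorithm~\ref{algorithm:filter}. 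You instead need only one coupling per pair witnessing one distance-decreasing edge of your reachability graph $G$, which exists immediately from the definition of $G$ --- no maximal-support lemma required. For the almost-sure reachability step, the paper invokes the theorem that a finite Markov chain almost surely enters a closed communication class and then classifies the closed classes reachable inside $R$ (they cannot meet $S^2_1$ since $R$ is a bisimulation, and cannot lie in $S^2_{0,\tau}$ since every pair of $R$ still has a path to $S^2_\Delta$, so they sit inside $S^2_\Delta$), whereas you use the elementary bound: from every state of $R$ the target is hit within $k \le |R|$ steps with probability at least $p^k \gr 0$, so the probability of avoiding it for $mk$ steps is at most $(1-p^k)^m \to 0$. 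Your route is self-contained and avoids two auxiliary propositions; the paper's route produces a policy that does extra work elsewhere in the development. One small point worth making explicit in your write-up: $S^2_\Delta \subseteq R$ because a bisimulation is by definition an equivalence relation, so the ``remaining pairs'' on which you pick an arbitrary coupling are exactly those in $S^2_{0?} \setminus R$, where the product coupling always serves.
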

\begin{proofsketch}
We construct a policy $P \in \mathcal{P}$ such that for every $(s, t) \in R$, $R$ supports a path from $(s, t)$ to $S^2_\Delta$ in $<S \times S, P>$ and for all $(s, t) \in S^2_\Delta$, $\support(P(s, t)) \subseteq S^2_\Delta$. Note that $P$ is designed to simultaneously ensure that all pairs in $R$ have an $R$-supported path to $S^2_\Delta$ in $<S \times S, P>$.  It follows from a standard result in Markov chain theory that all $(s, t) \in R$ reach $S^2_{\Delta}$ with probability $1$ in $<S \times S, P>$.
\qed \end{proofsketch}

It follows from Propositions~\ref{proposition:robust-probabilistic-bisimulation} and \ref{proposition:robust-bisimilarity} that $\mathord{\simeq}$, that is, robust bisimilarity, is the greatest robust bisimulation.  This is analogous to ordinary bisimulation, where bisimilarity is the greatest bisimulation.

\section{Algorithm}
\label{section:algorithm}

In this section, we present an efficient algorithm to compute robust bisimilarity; see Algorithm~\ref{algorithm:robust-bisimilarity}.  The algorithm relies on the following properties of any robust bisimulation $R$:
\begin{enumerate}
    \item for every $(s, t) \in R$ there exists a policy $P$ such that $R$ supports a path from $(s, t)$ to $S^2_\Delta$ in $<S \times S, P>$,
    \item $R$ is an equivalence relation, and 
    \item $R$ is a bisimulation.
\end{enumerate}
Robust bisimilarity is the greatest relation with these properties.  More formally, we define a function, $\mathrm{Refine}$, such that robust bisimilarity is the greatest fixed point of $\mathrm{Refine}$.

\begin{algorithm}[ht]
\caption{Computing robust bisimilarity for labelled Markov chains}
\label{algorithm:robust-bisimilarity}
\DontPrintSemicolon
\SetAlgoNoLine
\KwIn{A labelled Markov chain with a finite set $S$ of states and a transition probability function $\tau : S \to \Dreal(S)$, and the set of pairs of bisimilar states $\mathord{\sim} = S^2_{0,\tau} \cup S^2_{\Delta}$}
\KwOut{The set of pairs of robustly bisimilar states $R = \mathord{\simeq}$}
$R \gets \mathord{\sim}$\;
\Repeat{$R = R_{\mathrm{old}}$}{
  $R_{\mathrm{old}} \gets R$\;
  $R \gets \mathrm{Refine}(R)$ \Comment*{see Algorithm~\ref{algorithm:refine}}
}
\Return $R$
\end{algorithm}

For any $L, U$ with $L \subseteq U \subseteq S \times S$, write $[L, U] = \{\, R \subseteq S \times S \mid L \subseteq R \subseteq U \,\}$ and $[L, U]_\mathcal{B} = \{\, R \in [L, U] \mid R$ is a bisimulation $\}$.

\begin{itemize}
    \item The function $\mathrm{Filter} : [S^2_{\Delta}, \mathord{\sim}]_\mathcal{B} \to [S^2_{\Delta}, \mathord{\sim}]$ is defined as\\
    $\mathrm{Filter}(R) = \{\, (s, t) \in R \mid \exists P \in \mathcal{P} \mbox{ such that } R \mbox{ supports a path from } (s, t)$
    $\phantom{\mathrm{Filter}(R) = \{\, (s, t) \in R \mid\quad} \mbox{ to } S^2_{\Delta} \mbox{ in } <S \times S, P> \,\}.$
    \item The function $\mathrm{Prune} : [S^2_{\Delta}, \mathord{\sim}] \to [S^2_{\Delta}, \mathord{\sim}]$ is defined as\\
    $\begin{aligned}
    \mathrm{Prune}(R) = \{\, (s, t) \in R \mid \forall (t, u) \in R : (s, u) \in R \wedge \forall (u, s) \in R : (u, t) \in R \,\}.
    \end{aligned}$
    \item The function $\mathrm{Bisim} : [S^2_{\Delta}, \mathord{\sim}] \to [S^2_{\Delta}, \mathord{\sim}]_\mathcal{B}$ is defined as\\
    $\mathrm{Bisim}(R)$ is the largest bisimulation $R'$ with $R' \subseteq R$.\\
    Given an equivalence relation $R$, $\mathrm{Bisim}(R)$ can be computed in polynomial time (see \ifthenelse{\isundefined{\techReport}}{\cite[Proposition~24]{arxiv}}{Proposition~\ref{proposition:stabilise} in the Appendix}).
    \item Lastly, the function $\mathrm{Refine} : [S^2_{\Delta}, \mathord{\sim}]_\mathcal{B} \to [S^2_{\Delta}, \mathord{\sim}]_{\mathcal{B}}$ is defined as\\
    $\mathrm{Refine}(R) = \mathrm{Bisim}(\mathrm{Prune}(\mathrm{Filter}(R)))$.
\end{itemize}

\begin{proposition}
\label{proposition:monotonicity}
$\mathrm{Bisim}$ and $\mathrm{Filter}$ are monotone with respect to $\subseteq$. However, $\mathrm{Prune}$ is not.
\end{proposition}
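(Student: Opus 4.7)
My plan is to handle the three assertions of the proposition separately, since they are essentially independent. Monotonicity of $\mathrm{Bisim}$ should follow directly from its definition: given $R \subseteq R'$ in $[S^2_\Delta, \mathord{\sim}]$, the set $\mathrm{Bisim}(R)$ is by construction a bisimulation with $\mathrm{Bisim}(R) \subseteq R \subseteq R'$, so by maximality of $\mathrm{Bisim}(R')$ among bisimulations contained in $R'$, we get $\mathrm{Bisim}(R) \subseteq \mathrm{Bisim}(R')$. Nothing further is needed here.

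For monotonicity of $\mathrm{Filter}$, my plan is to transfer the witnessing data upwards along the inclusion. Assume $R \subseteq R'$ in $[S^2_\Delta, \mathord{\sim}]_{\mathcal{B}}$ and $(s,t) \in \mathrm{Filter}(R)$, and pick a policy $P \in \mathcal{P}$ together with an $R$-supported path $(s,t) = (u_1,v_1) \cdots (u_n,v_n)$ with $(u_n,v_n) \in S^2_\Delta$ in $<S \times S, P>$. Since $R \subseteq R'$, every pair $(u_i,v_i)$ on the path and every element of each $\support(P(u_i,v_i))$ still lies in $R'$, so the same policy $P$ certifies that $R'$ supports the same path from $(s,t)$ to $S^2_\Delta$. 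Hence $(s,t) \in \mathrm{Filter}(R')$.

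The substantive step is constructing a counterexample showing that $\mathrm{Prune}$ is not monotone. I will take a four-state labelled Markov chain in which three blue states $a,b,c$ each transition with probability $1$ to a red sink $d$, so that $a,b,c$ are pairwise bisimilar. Let $R = S^2_\Delta \cup \{(a,b),(b,a)\}$ and $R' = R \cup \{(b,c),(c,b)\}$; both belong to the domain $[S^2_\Delta, \mathord{\sim}]$. A direct check will give $\mathrm{Prune}(R) = R$, because the only off-diagonal pairs $(a,b),(b,a)$ satisfy the transitivity-like requirement using only pairs already present. On the other hand, in $R'$ the pair $(a,b)$ fails the requirement, since $(b,c) \in R'$ but $(a,c) \notin R'$, so $(a,b)$ gets pruned (and analogously for $(b,a),(b,c),(c,b)$). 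Thus $(a,b) \in \mathrm{Prune}(R) \setminus \mathrm{Prune}(R')$ even though $R \subseteq R'$, refuting monotonicity. The only subtlety worth flagging is the domain condition $R, R' \in [S^2_\Delta, \mathord{\sim}]$, which is why $a,b,c$ are chosen to be genuinely bisimilar.
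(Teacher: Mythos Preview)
Your proposal is correct and follows essentially the same approach as the paper: both handle $\mathrm{Bisim}$ by a direct maximality argument, both show monotonicity of $\mathrm{Filter}$ by reusing the same witnessing policy and path under the larger relation, and both refute monotonicity of $\mathrm{Prune}$ with the same three-element ``path'' counterexample (your $a$--$b$--$c$ is the paper's $s$--$t$--$u$). The only difference is that you add a red sink $d$ and explicitly build an LMC so that $a,b,c$ are genuinely bisimilar and the domain condition $R,R' \in [S^2_\Delta,\mathord{\sim}]$ is visibly satisfied; the paper's proof leaves this implicit.
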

\begin{proofsketch}
A counterexample for $\mathrm{Prune}$ is as follows.  Let $S = \set{s, t, u}$, $A = \{(s, s)$, $(t, t)$, $(u, u)$, $(s, t)$, $(t, s)\}$ and $B = \{(s, s)$, $(t, t)$, $(u, u)$, $(s, t)$, $(t, s)$, $(t, u)$, $(u, t)\}$.  $A$ and $B$ are symmetric and reflexive and, thus, can be visualized as an undirected graph as shown in Figure~\ref{figure:monotone}.  Observe that $A \subseteq B$, however, $\mathrm{Prune}(A) = A \not\subseteq \mathrm{Prune}(B) = \{(s, s), (t, t), (u, u)\}$.
\qed \end{proofsketch}

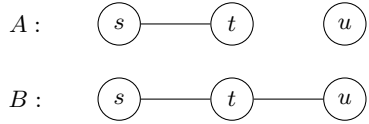
\begin{figure}[ht]
\begin{center}
\begin{tikzpicture}
    \node (0) at (0.25, 3) {$A:$};
    \node[smallstate] (1) at (1.5,3) {$s$};
    \node[smallstate] (2) at (3,3) {$t$};
    \node[smallstate] (3) at (4.5,3) {$u$};
    \draw[-] (1) to (2);

    \node (0) at (0.25, 2) {$B:$};
    \node[smallstate] (1) at (1.5,2) {$s$};
    \node[smallstate] (2) at (3,2) {$t$};
    \node[smallstate] (3) at (4.5,2) {$u$};
    \draw[-] (1) to (2);
    \draw[-] (2) to (3);
\end{tikzpicture}
\end{center}
\caption{Graph visualization of the relations $A$ and $B$ used in the proof of Proposition~\ref{proposition:monotonicity}.}
\label{figure:monotone}
\end{figure}

Note that Algorithm~\ref{algorithm:robust-bisimilarity} is not a typical fixed point iteration, since we do not know whether $\mathrm{Refine}$ is monotone.

\begin{algorithm}[ht]
\caption{Refine}
\label{algorithm:refine}
\DontPrintSemicolon
\SetAlgoNoLine
\KwIn{A set $R \in [S^2_{\Delta}, \mathord{\sim}]_\mathcal{B}$}
\KwOut{$\mathrm{Refine}(R)$}
$R \gets \mathrm{Filter}(R)$ \Comment*{see Algorithm~\ref{algorithm:filter}}
$R \gets \mathrm{Prune}(R)$ \Comment*{see Algorithm~\ref{algorithm:prune}}
$R \gets \mathrm{Bisim}(R)$\;
\Return $R$
\end{algorithm}

\begin{proposition}
\label{proposition:fp-refine-robust}
Any relation $R \subseteq S \times S$ is a robust bisimulation if and only if it is a fixed point of $\mathrm{Refine}$.
\end{proposition}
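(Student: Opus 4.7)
The plan is to prove the two implications separately by directly unpacking the definition of $\mathrm{Refine}$ as the composition $\mathrm{Bisim} \circ \mathrm{Prune} \circ \mathrm{Filter}$. The key observation is that all three operators are ``contractive'' ($\mathrm{Bisim}(X), \mathrm{Prune}(X) \subseteq X$ by definition, and $\mathrm{Filter}(X) \subseteq X$), so a fixed point forces equality at every intermediate stage.

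For the forward direction, suppose $R$ is a robust bisimulation. I would first argue that $R$ lies in the domain $[S^2_\Delta, \mathord{\sim}]_\mathcal{B}$ of $\mathrm{Refine}$: $R$ is a bisimulation by Definition~\ref{definition:robust-probabilistic-bisimulation}, so it is reflexive (giving $S^2_\Delta \subseteq R$), and by Proposition~\ref{proposition:robust-bisimilarity} together with $\mathord{\simeq} \subseteq \mathord{\sim}$ we get $R \subseteq \mathord{\sim}$. Then I would verify that each operator fixes $R$: the robust bisimulation condition is precisely the requirement that every $(s,t) \in R$ admits a policy with an $R$-supported path to $S^2_\Delta$, so $\mathrm{Filter}(R) = R$; transitivity and symmetry of the equivalence relation $R$ immediately yield $\mathrm{Prune}(R) = R$; and $\mathrm{Bisim}(R) = R$ since $R$ is itself a bisimulation and $\mathrm{Bisim}$ returns the largest bisimulation inside its argument. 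Composing gives $\mathrm{Refine}(R) = R$.

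For the backward direction, suppose $R = \mathrm{Bisim}(\mathrm{Prune}(\mathrm{Filter}(R)))$. The outermost application of $\mathrm{Bisim}$ guarantees that $R$ is a bisimulation. Using the contractivity observation, $R = \mathrm{Bisim}(\mathrm{Prune}(\mathrm{Filter}(R))) \subseteq \mathrm{Prune}(\mathrm{Filter}(R)) \subseteq \mathrm{Filter}(R)$. Thus every $(s,t) \in R$ also lies in $\mathrm{Filter}(R)$, and by definition of $\mathrm{Filter}$ there exists $P \in \mathcal{P}$ such that $R$ supports a path from $(s,t)$ to $S^2_\Delta$ in $\langle S \times S, P\rangle$. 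Combined with the fact that $R$ is a bisimulation, this is exactly Definition~\ref{definition:robust-probabilistic-bisimulation}, so $R$ is a robust bisimulation.

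I do not expect serious obstacles; the proof is essentially a bookkeeping exercise. The only conceptual care needed is to notice that $\mathrm{Filter}(R)$ is defined in terms of paths supported by the \emph{input} $R$, not by $\mathrm{Filter}(R)$ itself, which is precisely the reason the fixed-point condition translates cleanly into the robust-bisimulation condition. A secondary routine check is that Propositions~\ref{proposition:robust-probabilistic-bisimulation} and \ref{proposition:robust-bisimilarity} together place any robust bisimulation inside the domain of $\mathrm{Refine}$, so that the statement of the proposition is well-posed.
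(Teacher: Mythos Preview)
Your proposal is correct and follows essentially the same approach as the paper: you show each of $\mathrm{Filter}$, $\mathrm{Prune}$, $\mathrm{Bisim}$ fixes a robust bisimulation for the forward direction, and use contractivity plus the definition of $\mathrm{Filter}$ for the converse. Your treatment is in fact slightly more explicit than the paper's in two places---checking that $R$ lies in the domain $[S^2_\Delta,\mathord{\sim}]_\mathcal{B}$, and spelling out the chain $R \subseteq \mathrm{Prune}(\mathrm{Filter}(R)) \subseteq \mathrm{Filter}(R)$ rather than asserting $R=\mathrm{Filter}(R)$ directly---but the argument is the same (and note that $R \subseteq \mathord{\sim}$ follows more directly from $R$ being a bisimulation than via Proposition~\ref{proposition:robust-bisimilarity}).
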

\begin{proofsketch}
Let $R \subseteq S \times S$.  Assume that $R$ is a robust bisimulation. By definition, $\mathrm{Refine}(R) \subseteq R$.  Since $R$ is a robust bisimulation, $R \subseteq \mathrm{Refine}(R)$.

Assume that $R$ is a fixed point of $\mathrm{Refine}$, then $R$ is a bisimulation and for every $(s, t) \in R$ there exists a policy $P$ such that $R$ supports a path from $(s, t)$ to $S^2_\Delta$ in $<S \times S, P>$.  Therefore, $R$ is a robust bisimulation.
\qed \end{proofsketch}

It follows from Propositions~\ref{proposition:robust-bisimilarity} and \ref{proposition:fp-refine-robust} that every fixed point of $\mathrm{Refine}$ is a subset of $\simeq$.  Furthermore, by Propositions~\ref{proposition:robust-probabilistic-bisimulation} and \ref{proposition:fp-refine-robust}, $\simeq$ is a fixed point of $\mathrm{Refine}$.  Therefore, $\simeq$ is the greatest fixed point of $\mathrm{Refine}$.

Let $Q \subseteq S \times S$ and $s$, $t$, $u$, $v \in S$.  We use the following notation below:  $\mathrm{Post}((s, t)) = \support(\tau(s)) \times \support(\tau(t))$ and $\mathrm{Pre}(Q) = \{\, (s, t) \in S \times S \mid \mathrm{Post}((s, t)) \cap Q \neq \varnothing \,\}$.

\begin{algorithm}[ht]
\caption{$\mathrm{Filter}$}
\label{algorithm:filter}
\DontPrintSemicolon
\SetAlgoNoLine
\KwIn{A set $R \in [S^2_{\Delta}, \mathord{\sim}]_\mathcal{B}$}
\KwOut{$\mathrm{Filter}(R)$}
$Q \gets S^2_{\Delta}$\;
\textcolor{gray}{$n \gets 0$\;}
\Repeat{$Q = Q_{\mathrm{old}}$}{
  $Q_{\mathrm{old}} \gets Q$\;
  \ForEach{$(s, t) \in (R \cap \mathrm{Pre}(Q_{\mathrm{old}})) \setminus Q_{\mathrm{old}}$}{
    $Q \gets Q \cup \{(s, t)\}$\;
  }
  \textcolor{gray}{$n \gets n + 1$\;}
}
\Return $Q$
\end{algorithm}

\begin{algorithm}[ht]
\caption{Prune}
\label{algorithm:prune}
\DontPrintSemicolon
\SetAlgoNoLine
\KwIn{A set $Q \in [S^2_{\Delta}, \mathord{\sim}]$}
\KwOut{$\mathrm{Prune}(Q)$}
$E \gets Q$\;
\ForEach{$(s, t) \in Q$}{
  \ForEach{$u \in S : (t, u) \in Q$}{
    \If{$(s, u) \not\in Q$}{
      $E \gets E \setminus \{(s, t), (t, u)\}$
    }
  }
}
\Return $E$
\end{algorithm}

\begin{proposition}
\label{proposition:transitive}
Given $R \in [\mathord{\simeq}, \mathord{\sim}]_\mathcal{B}$, for all $(s, t)$, $(t, u) \in \mathrm{Filter}(R)$, if $s \simeq t$ or $t \simeq u$ then $(s, u) \in \mathrm{Filter}(R)$.
\end{proposition}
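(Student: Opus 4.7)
\begin{proofsketch}
The plan is to reduce to the case $s \simeq t$; the case $t \simeq u$ will follow analogously by reflecting all couplings across the diagonal of $S \times S$ (both $\mathord{\simeq}$ and $\mathrm{Filter}(R)$ are symmetric). In the case $s \simeq t$, I will construct an $R$-supported path from $(s, u)$ to $S^2_\Delta$ by interleaving the $R$-supported path witnessing $(t, u) \in \mathrm{Filter}(R)$ with $\simeq$-supported steps supplied by Proposition~\ref{proposition:robust-probabilistic-bisimulation}. The main technical tool will be the standard gluing lemma for couplings: given $\omega_1 \in \Creal(\tau(a), \tau(b))$ and $\omega_2 \in \Creal(\tau(b), \tau(c))$, the distribution $\omega^3(a', b', c') = \omega_1(a', b') \omega_2(b', c') / \tau(b)(b')$ on $S^3$ has $\omega_1$ and $\omega_2$ as its $(1,2)$- and $(2,3)$-marginals, and its $(1,3)$-marginal is a coupling of $\tau(a)$ and $\tau(c)$.

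First I would fix a witness policy $P_2$ and a concrete $R$-supported path $\pi : (t, u) = (t_0', u_0') \to \cdots \to (t_n', u_n') = (v, v)$ in $<S \times S, P_2>$. Then I would construct by induction on $i$ a sequence $s_0'' = s, s_1'', \ldots, s_n''$ together with couplings $\omega^{SU}_i \in \Creal(\tau(s_i''), \tau(u_i'))$ maintaining the invariants $s_i'' \simeq t_i'$ and $\omega^{SU}_i(s_{i+1}'', u_{i+1}') > 0$. At the inductive step, Lemma~\ref{lemma:rpb-is-pb} (namely, that $\simeq$ is a bisimulation) supplies some $\omega^{\simeq} \in \Creal(\tau(s_i''), \tau(t_i'))$ with $\support(\omega^{\simeq}) \subseteq \mathord{\simeq}$, and I would glue this with $\omega^R = P_2(t_i', u_i')$ and take the $(1,3)$-marginal to obtain $\omega^{SU}_i$. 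I would then pick $s_{i+1}''$ so that the glued triple $(s_{i+1}'', t_{i+1}', u_{i+1}')$ has positive mass, which is possible because $\tau(t_i')(t_{i+1}') > 0$ and $\omega^R(t_{i+1}', u_{i+1}') > 0$ both hold along $\pi$, forcing the existence of some $s_{i+1}''$ with $\omega^{\simeq}(s_{i+1}'', t_{i+1}') > 0$ (and hence $s_{i+1}'' \simeq t_{i+1}'$).

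After $n$ steps the path reaches $(s_n'', v)$ with $s_n'' \simeq v$, at which point Proposition~\ref{proposition:robust-probabilistic-bisimulation} supplies an $\simeq$-supported (hence $R$-supported) tail from $(s_n'', v)$ to $S^2_\Delta$, which I would concatenate with the prefix just constructed. Taking $P$ to agree with the couplings used along this path (and filling in arbitrary couplings elsewhere) yields a policy witnessing $(s, u) \in \mathrm{Filter}(R)$. The main obstacle will be the support bookkeeping for $\omega^{SU}_i$: every $(a', c') \in \support(\omega^{SU}_i)$ arises from some $b'$ with $a' \simeq b'$ and $(b', c') \in R$, so, using $\mathord{\simeq} \subseteq R$ and the transitivity of $R$ as an equivalence relation, we obtain $(a', c') \in R$. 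Everything else should reduce to the positivity of each factor in the gluing formula along the pre-committed path~$\pi$.
\qed \end{proofsketch}
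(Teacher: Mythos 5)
Your core idea coincides with the paper's, up to a left--right reflection: the paper handles the case $t \simeq u$ by taking the $R$-supported path witnessing $(s,t) \in \mathrm{Filter}(R)$ and lifting its $t$-component to a $u$-component using the fact that $\simeq$ is a bisimulation, then appending a $\simeq$-supported tail to $S^2_\Delta$; you handle $s \simeq t$ by keeping the path witnessing $(t,u) \in \mathrm{Filter}(R)$ and lifting its $t$-component to an $s$-component, with the same kind of tail. The substantive difference is in the implementation. The paper works throughout with a single \emph{maximal $R$-support policy} $P$ (Propositions~\ref{proposition:maximal-support-policy} and~\ref{proposition:filter-max-policy}), under which $\support(P(a,c)) = \mathrm{Post}((a,c)) \cap R$ for every $(a,c) \in R \cap S^2_{0?}$; with that device it suffices to exhibit a sequence of pairs that all lie in $R$ and are successively related by $\mathrm{Post}$, and no couplings ever need to be constructed. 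You instead build explicit couplings $\omega^{SU}_i$ by gluing, which is also sound: the positivity argument for the chosen $s_{i+1}''$ and the support bookkeeping $\support(\omega^{SU}_i) \subseteq R$ both check out, and your gluing formula is the same computation the paper performs inside the transitivity part of the proof of Lemma~\ref{lemma:rpb-is-pb}.

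The one step you have not justified is the final assembly, ``taking $P$ to agree with the couplings used along this path.'' A policy assigns \emph{one} coupling to each pair, but your path may visit the same pair $(a,c)$ at two positions $i < j$ with $\omega^{SU}_i \neq \omega^{SU}_j$: the states $s_i''$ are chosen rather than determined, the pre-committed path $\pi$ need not be simple, and the prefix may intersect the tail supplied by Proposition~\ref{proposition:robust-probabilistic-bisimulation}. This is repairable --- shortcut the path by deleting the segment between the two occurrences and keep the later coupling, which still gives a positive-probability step to its successor, so after finitely many shortcuts every pair occurs once and $P \in \mathcal{P}$ is well defined --- but as written the construction of the witnessing policy is incomplete. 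The maximal $R$-support policy is precisely the paper's way of making this issue vanish, since one globally defined coupling is used at every visit to a pair.
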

\begin{proofsketch}
We show that if $t \simeq u$ then $(s, u) \in \mathrm{Filter}(R)$.  The case $s \simeq t$ is similar.  Write $s_1 = s$ and $t_1 = t$ and $u_1 = u$.

The idea behind the proof is that since $\mathrm{Filter}(R) \subseteq R$, we have $(s, t)$, $(t, u) \in R$.  Since $R$ is an equivalence relation, $(s, u) \in R$.  We define a policy $P \in \mathcal{P}$ such that for all $(s, t) \in R \cap S^2_{0?}$, $\support(P(s, t)) = \mathrm{Post}((s, t)) \cap R$.  We then show that since $(s, t) \in \mathrm{Filter}(R)$, there exists a path $(s_1, t_1), \ldots, (s_n, t_n)$ in $<S \times S, P>$, where $s_n = t_n$.

Assume that $(t, u) \in \mathord{\simeq}$.  Recall that $\simeq$ is a bisimulation.  Since $t_1, \ldots, t_n$ is a path in the original Markov chain $<S, \tau>$, there is also a path $u_1, \ldots, u_n$ in $<S, \tau>$ such that $(t_i, u_i) \in \mathord{\simeq}$ for all $1 \leq i \leq n$.  Since $\mathord{\simeq} \subseteq R$, there exists a path $(t_1, u_1), \ldots, (t_n, u_n)$ in $<S \times S, P>$.  Note that $(s_i, u_i) \in R$ for all $1 \leq i \leq n$.  Hence, there exists a path $(s_1,u_1), \ldots, (s_n,u_n) = (t_n,u_n)$ in $<S \times S, P>$.  See Figure~\ref{figure:transitive}.

Since $(t_n, u_n) \in \mathord{\simeq}$, we know that $(t_n, u_n)$ reaches $S^2_{\Delta}$ with probability $1$.  Therefore, there is a path $(t_n, u_n), \ldots, (t_m, u_m)$, with $t_m = u_m$ in $<S \times S, P>$ and $(t_i, u_i) \in \mathord{\simeq}$ for all $n \leq i \leq m$.  Thus, there exists paths $(s_1,u_1), \ldots, (s_n,u_n)$ and $(t_n,u_n), \ldots, (t_m,u_m)$ in $<S \times S, P>$, with $(s_n,u_n) = (t_n,u_n)$.  By the definition of $P$, $R$ supports the same path in $<S \times S, P>$. Hence, $(s,u) \in \mathrm{Filter}(R)$.
\qed \end{proofsketch}

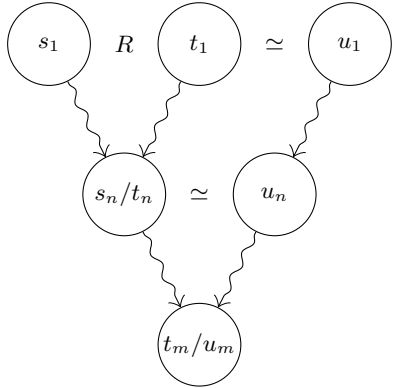
\begin{figure}[ht]
\begin{center}
\begin{tikzpicture}[decoration={snake, pre length=1pt, post length=2pt, amplitude=1.3pt}]
    \node (0) at (2, 3) {$R$};
    \node (0) at (4, 3) {$\simeq$};
    \node[bigstate] (1) at (1,3) {$s_1$};
    \node[bigstate] (2) at (3,3) {$t_1$};
    \node[bigstate] (3) at (5,3) {$u_1$};
    \node[bigstate] (4) at (2,1) {$s_n / t_n$};
    \node (0) at (3, 1) {$\simeq$};
    \node[bigstate] (5) at (4,1) {$u_n$};
    \node[bigstate] (6) at (3,-1) {$t_m / u_m$};

    \draw[->,decorate] (1) to (4);
    \draw[->,decorate] (2) to (4);
    \draw[->,decorate] (3) to (5);
    \draw[->,decorate] (4) to (6);
    \draw[->,decorate] (5) to (6);
\end{tikzpicture}
\end{center}
\caption{Illustration of the proof of Proposition~\ref{proposition:transitive}.}
\label{figure:transitive}
\end{figure}

Proposition~\ref{proposition:transitive} allows us to prove the following proposition.

\begin{proposition}
\label{proposition:robust-subset-R}
$R \in [\mathord{\simeq}, \mathord{\sim}]_\mathcal{B}$ is a loop invariant of Algorithm~\ref{algorithm:robust-bisimilarity}.
\end{proposition}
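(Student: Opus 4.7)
The plan is induction on the number of loop iterations. For the base case, after the initial assignment $R \gets \mathord{\sim}$ the relation $\mathord{\sim}$ is trivially a bisimulation and contains $\mathord{\simeq}$ by Lemma~\ref{lemma:rpb-is-pb}, so $R \in [\mathord{\simeq},\mathord{\sim}]_\mathcal{B}$. For the inductive step, assuming $R \in [\mathord{\simeq},\mathord{\sim}]_\mathcal{B}$, I would show $\mathrm{Refine}(R) \in [\mathord{\simeq},\mathord{\sim}]_\mathcal{B}$ by verifying three things: (i) $\mathrm{Refine}(R)$ is a bisimulation, which is immediate because $\mathrm{Bisim}$ outputs a bisimulation by definition; (ii) $\mathrm{Refine}(R) \subseteq \mathord{\sim}$, since each of $\mathrm{Filter}$, $\mathrm{Prune}$, $\mathrm{Bisim}$ returns a subset of its input, so $\mathrm{Refine}(R) \subseteq R \subseteq \mathord{\sim}$; and (iii) $\mathord{\simeq} \subseteq \mathrm{Refine}(R)$, which is the main content.

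For (iii), I would proceed in three layers mirroring the structure of $\mathrm{Refine}$. First, $\mathord{\simeq} \subseteq \mathrm{Filter}(R)$: given $(s,t) \in \mathord{\simeq}$, Proposition~\ref{proposition:robust-probabilistic-bisimulation} yields a policy $P \in \mathcal{P}$ such that $\mathord{\simeq}$ supports a path from $(s,t)$ to $S^2_\Delta$ in $<S \times S, P>$; since $\mathord{\simeq} \subseteq R$ by the inductive hypothesis, $R$ supports the same path, so $(s,t) \in \mathrm{Filter}(R)$. Second, $\mathord{\simeq} \subseteq \mathrm{Prune}(\mathrm{Filter}(R))$: fix $(s,t) \in \mathord{\simeq}$ and verify the two defining conditions of $\mathrm{Prune}$. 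For any $(t,u) \in \mathrm{Filter}(R)$, apply Proposition~\ref{proposition:transitive} to the pairs $(s,t),(t,u) \in \mathrm{Filter}(R)$ using $s \simeq t$ to conclude $(s,u) \in \mathrm{Filter}(R)$; symmetrically, for any $(u,s) \in \mathrm{Filter}(R)$, apply Proposition~\ref{proposition:transitive} to $(u,s),(s,t) \in \mathrm{Filter}(R)$, again invoking $s \simeq t$, to conclude $(u,t) \in \mathrm{Filter}(R)$. Third, $\mathord{\simeq} \subseteq \mathrm{Bisim}(\mathrm{Prune}(\mathrm{Filter}(R)))$: since $\mathord{\simeq}$ is itself a bisimulation by Lemma~\ref{lemma:rpb-is-pb} and is contained in $\mathrm{Prune}(\mathrm{Filter}(R))$ by the previous step, it is contained in the largest such bisimulation, which is $\mathrm{Bisim}(\mathrm{Prune}(\mathrm{Filter}(R)))$ by definition.

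The main obstacle is step (iii) at the $\mathrm{Prune}$ layer: Proposition~\ref{proposition:monotonicity} warns that $\mathrm{Prune}$ is not monotone, so preservation of $\mathord{\simeq}$ cannot be argued via a monotonicity chain and must instead exploit the specific structural property of $\mathrm{Filter}(R)$ captured by Proposition~\ref{proposition:transitive}. Once that is in place, the rest of the argument is essentially bookkeeping.
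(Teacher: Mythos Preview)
Your proposal is correct and follows essentially the same approach as the paper's proof: induction on loop iterations, with the key step being $\mathord{\simeq} \subseteq \mathrm{Prune}(\mathrm{Filter}(R))$ handled via Proposition~\ref{proposition:transitive}. The only cosmetic differences are that the paper justifies $\mathord{\simeq} \subseteq \mathrm{Filter}(R)$ and $\mathord{\simeq} \subseteq \mathrm{Bisim}(\cdot)$ by combining monotonicity (Proposition~\ref{proposition:monotonicity}) with the fixed-point characterization (Proposition~\ref{proposition:fp-refine-robust}), whereas you argue these directly from Proposition~\ref{proposition:robust-probabilistic-bisimulation} and the definition of $\mathrm{Bisim}$; both routes are equally valid.
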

\begin{proofsketch}
$R$ is initialized to $\mathord{\sim}$, so the loop invariant holds before the loop.

Assume that the loop invariant holds before an iteration of the loop.  Since $\mathord{\simeq} \subseteq R$, $\mathrm{Filter}$ is monotone and $\mathord{\simeq}$ is a fixed point of $\mathrm{Refine}$, we have that $\mathord{\simeq}$ is a subset of $\mathrm{Filter}(R)$.

If $s \simeq t$, then $(s, t) \in \mathrm{Filter}(R)$.  Then, by Proposition~\ref{proposition:transitive}, for all $(t, u) \in \mathrm{Filter}(R)$ we have $(s, u) \in \mathrm{Filter}(R)$ and for all $(u, s) \in \mathrm{Filter}(R)$ we have $(u, t) \in \mathrm{Filter}(R)$.  Hence, $(s, t) \in \mathrm{Prune}(\mathrm{Filter}(R))$, and we have that $\mathord{\simeq}$ is a subset of $\mathrm{Prune}(\mathrm{Filter}(R))$.

$\mathrm{Bisim}$ is monotone, therefore, $\mathord{\simeq}$ is a subset of $\mathrm{Bisim}(\mathrm{Prune}(\mathrm{Filter}(R)))$ and $\mathrm{Refine}(R)$.  By the definition of $\mathrm{Bisim}$, $\mathrm{Refine}(R) \in [\mathord{\simeq}, \mathord{\sim}]_\mathcal{B}$.  Thus, the loop invariant is maintained in each iteration of the loop.
\qed \end{proofsketch}

Using the loop invariant established in Proposition~\ref{proposition:robust-subset-R}, we can now prove the correctness of Algorithm~\ref{algorithm:robust-bisimilarity}.

\begin{theorem}
\label{theorem:refine}
Algorithm~\ref{algorithm:robust-bisimilarity} computes the set $\simeq$.
\end{theorem}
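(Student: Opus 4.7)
The plan is to combine termination with a two-way inclusion. First I would argue that Algorithm~\ref{algorithm:robust-bisimilarity} terminates: each of the three constituent functions $\mathrm{Filter}$, $\mathrm{Prune}$, and $\mathrm{Bisim}$ produces a subset of its input (directly from their definitions), so $\mathrm{Refine}(R) \subseteq R$ in every iteration. Since $S \times S$ is finite, the sequence of values of $R$ is weakly decreasing and must stabilise after finitely many iterations, at which point the loop exits with $R = \mathrm{Refine}(R)$.

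Next I would establish correctness by showing $R = \mathord{\simeq}$ upon termination through two inclusions. For $R \subseteq \mathord{\simeq}$: when the loop exits, $R$ is a fixed point of $\mathrm{Refine}$, hence by Proposition~\ref{proposition:fp-refine-robust} it is a robust bisimulation, and by Proposition~\ref{proposition:robust-bisimilarity} every robust bisimulation is contained in $\mathord{\simeq}$, so $R \subseteq \mathord{\simeq}$. For $\mathord{\simeq} \subseteq R$: this is exactly the loop invariant $R \in [\mathord{\simeq}, \mathord{\sim}]_{\mathcal{B}}$ established in Proposition~\ref{proposition:robust-subset-R}, which guarantees $\mathord{\simeq} \subseteq R$ throughout the execution and in particular on exit.

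The two inclusions together give $R = \mathord{\simeq}$ when the algorithm returns. Since all the substantive work has been done in the earlier propositions, the argument is essentially bookkeeping. The only mild subtlety worth noting is that $\mathrm{Refine}$ is not known to be monotone (indeed $\mathrm{Prune}$ is not, by Proposition~\ref{proposition:monotonicity}), so this is not a textbook Knaster–Tarski fixed point iteration; nevertheless the descending chain property together with Proposition~\ref{proposition:robust-subset-R} suffices, and the greatest-fixed-point characterization of $\mathord{\simeq}$ (noted after Proposition~\ref{proposition:fp-refine-robust}) is not needed here. The main obstacle has therefore already been absorbed into Propositions~\ref{proposition:fp-refine-robust} and \ref{proposition:robust-subset-R}, leaving the proof of Theorem~\ref{theorem:refine} itself quite short.
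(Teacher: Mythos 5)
Your proof is correct and follows essentially the same route as the paper: both rest on $\mathrm{Refine}(R) \subseteq R$ for termination, the loop invariant $\mathord{\simeq} \subseteq R$ from Proposition~\ref{proposition:robust-subset-R}, and the fact that every fixed point of $\mathrm{Refine}$ is contained in $\mathord{\simeq}$. The only cosmetic difference is that the paper cites this last fact in its packaged form (``$\mathord{\simeq}$ is the greatest fixed point of $\mathrm{Refine}$''), whereas you unfold it into Propositions~\ref{proposition:fp-refine-robust} and~\ref{proposition:robust-bisimilarity} --- which is exactly how the paper itself derives that packaged statement.
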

\begin{proofsketch}
It is immediate from the definitions of $\mathrm{Bisim}$, $\mathrm{Filter}$ and $\mathrm{Prune}$ that $\mathrm{Refine}(R) \subseteq R$ holds for all $R \subseteq S \times S$.  By Proposition~\ref{proposition:robust-subset-R}, $\mathord{\simeq} \subseteq R$, thus, it computes a fixed point of $\mathrm{Refine}$ greater than or equal to $\mathord{\simeq}$.  Since $\mathord{\simeq}$ is the greatest fixed point of $\mathrm{Refine}$, we can conclude that Algorithm~\ref{algorithm:robust-bisimilarity} computes $\mathord{\simeq}$.
\qed \end{proofsketch}

In \ifthenelse{\isundefined{\techReport}}{\cite[Proposition~27]{arxiv}}{Proposition~\ref{proposition-runtime}}, we show that Algorithm~\ref{algorithm:robust-bisimilarity} has a time complexity of $\mathcal{O}(n^6)$, where $n = |S|$.  The computational bottleneck is the function $\mathrm{Filter}$.
\section{Experiments}
\label{section:experiments}

To evaluate the efficiency and usefulness of our robust bisimilarity algorithm,
we implemented it in the widely used probabilistic model checker PRISM~\cite{KNP11},
an open-source tool providing quantitative verification and analysis of several types of probabilistic models, including labelled Markov chains.

\subsection{Implementation}
PRISM's implementation of the traditional (i.e., non-robust) bisimilarity algorithm, $\mathrm{Bisim}$, is a standard partition-refinement approach
which uses the signature-based method of Derisavi \cite{D07}.
The initial partition is based on the labelling of the states.  Let $\Pi$ be the current partition and $E_\Pi$ be the set of equivalence classes in $\Pi$.  Then the new partition is computed as $\{\, (s, t) \in \Pi \mid \forall B \in E_\Pi : \tau(s)(B) = \tau(t)(B) \,\}$.

We implemented Algorithm~\ref{algorithm:robust-bisimilarity} in Java as part of PRISM's explicit-state model checking engine.  Each state and equivalence class (referred to as a block) is represented by an integer ID.  The current partition of the state space is tracked by an array that is indexed by state IDs and contains the corresponding block IDs.  To store the list of successors for each state, we use a map.  $\mathrm{Bisim}$ is run on the input Markov chain to obtain the set of bisimilar states.

The function $\mathrm{Filter}$ first constructs $R$ from the current partition and initializes $Q$ to $S^2_{\Delta}$.  In our approach, $R$ is implemented as an array indexed by block IDs, with each block containing a list of states.  Conversely, $Q$ is implemented as an array indexed by state IDs, with each state storing the set of states related to it.  Predecessors of $Q$ in $R$ are added to $Q$ until a fixed point is reached.  A pair of states $(s, t) \in R \setminus Q$ is a predecessor of $Q$ if they have some successors that are related in $Q$.  Specifically, there must exist successors $s'$ and $t'$ of $s$ and $t$, respectively, such that $t' \in Q[s']$ and vice versa.

$\mathrm{Prune}$ constructs a new partition of the state space by grouping states in the same block if they have the same neighbourhood in $Q$, that is, they are related to the same states.  In other words, $s$ and $t$ are placed in the same block if $Q[s] = Q[t]$ holds.  $\mathrm{Bisim}$ is then called with the current partition passed as the initial partition.  This process continues until no further refinement is possible, resulting in the set of robustly bisimilar states.  Finally, the minimized Markov chain is constructed.

\subsection{Experimental Setup}
We evaluated our algorithm by applying it to all (discrete-time) labelled Markov chains from the
%
%
Quantitative Verification Benchmark Set (QVBS) \cite{HKPQR19}, a comprehensive collection of probabilistic models which is designed as a benchmark suite for quantitative verification and analysis tools and is the foundation of the Quantitative Verification Competition (QComp), which compares the performance, versatility, and usability of such tools.

For an additional source of models, we also use jpf-probabilistic~\cite{FCDWTB21}.
Java PathFinder (JPF) \cite{VHBPL03} is the most popular model checker for Java code, and the JPF extension jpf-probabilistic provides Java implementations of sixty randomized algorithms \cite{FCDWTB21}. As shown in \cite{FCDWTB21}, JPF, extended by jp-probabilistic and jpf-label, can be used in tandem with PRISM to check properties of these algorithms and supplement JPF’s qualitative results with quantitative information.  A description of the subset of these algorithms utilized in our study is provided in \ifthenelse{\isundefined{\techReport}}{\cite[Appendix~J]{arxiv}}{Appendix~\ref{appendix:experiments}}.

In order to explore both the benefits and the efficiency of our algorithm, we run both the robust and traditional bisimilarity algorithms on all models.
For the latter, we use PRISM's existing implementation, in order to provide a comparable implementation.
Our experiments were run on a MacBook with an M1 chip and 16GB memory,
and with the Java virtual machine limited to 8GB.


\begin{table}[ht!]
\caption{The results of the benchmarks for which the minimized models differ.}
\rowcolors{1}{gray!10}{white}
\centering
\begin{tabular}{ m{0.16\textwidth} m{0.08\textwidth} m{0.1\textwidth} R{0.09\textwidth} R{0.12\textwidth} R{0.1\textwidth} R{0.14\textwidth} R{0.12\textwidth} }
  \toprule
  \rowcolor{white}
  \multicolumn{4}{c}{Benchmark} & \multicolumn{2}{r}{Bisimilarity\;} & \multicolumn{2}{r}{Robust Bisimilarity} \\
  \midrule
  Name (prop.) & \multicolumn{2}{c}{Parameters} & States & Min & Time & Min & Time \\ 
  \midrule
  brp (p1) & N=32 & MAX=2 & 1349 & 646 & 0.036 & 901 & 0.054 \\
  & & MAX=3 & 1766 & 871 & 0.043 & 1127 & 0.062 \\
  & & MAX=4 & 2183 & 1096 & 0.051 & 1353 & 0.068 \\
  & & MAX=5 & 2600 & 1321 & 0.058 & 1579 & 0.075 \\
  & N=64 & MAX=2 & 2693 & 1286 & 0.080 & 1797 & 0.105 \\
  & & MAX=3 & 3526 & 1735 & 0.084 & 2247 & 0.119 \\
  & & MAX=4 & 4359 & 2184 & 0.103 & 2697 & 0.130 \\
  & & MAX=5 & 5192 & 2633 & 0.132 & 3147 & 0.167 \\
  brp (p4) & N=32 & MAX=2 & 1349 & 10 & 0.012 & 711 & 3.690 \\
  & & MAX=3 & 1766 & 12 & 0.013 & 937 & 6.291 \\
  & & MAX=4 & 2183 & 14 & 0.018 & 1163 & 9.331 \\
  & & MAX=5 & 2600 & 16 & 0.021 & 1389 & 13.952 \\
  & N=64 & MAX=2 & 2693 & 10 & 0.017 & 1415 & 27.299 \\
  & & MAX=3 & 3526 & 12 & 0.015 & 1865 & 45.031 \\
  & & MAX=4 & 4359 & 14 & 0.016 & 2315 & 69.949 \\
  & & MAX=5 & 5192 & 16 & 0.018 & 2765 & 102.941 \\
  \midrule
  crowds & CS=5 & TR=3 & 1198 & 41 & 0.018 & 505 & 0.231 \\
  (positive) & & TR=4 & 3515 & 61 & 0.021 & 1484 & 1.304 \\
  & & TR=5 & 8653 & 81 & 0.038 & 3659 & 7.575 \\
  & & TR=6 & 18817 & 101 & 0.071 & 7969 & 34.765 \\
  & CS=10 & TR=3 & 6563 & 41 & 0.024 & 2320 & 8.296 \\
  & & TR=4 & 30070 & 61 & 0.078 & 10524 & 196.233 \\
  & & TR=5 & 111294 & 81 & 0.190 & 38770 & 2946.840 \\
  \midrule
  oscillators & T=6 & N=3 & 57 & 28 & 0.007 & 38 & 0.009 \\
  (power) & T=8 & N=6 & 1717 & 1254 & 0.037 & 1255 & 0.037 \\
  & & N=8 & 6436 & 5148 & 0.100 & 5149 & 0.122 \\
  oscillators & T=6 & N=3 & 57 & 28 & 0.007 & 38 & 0.008 \\
  (time) & T=8 & N=6 & 1717 & 1254 & 0.032 & 1255 & 0.036 \\
  & & N=8 & 6436 & 5148 & 0.111 & 5149 & 0.115 \\
  \midrule
  set isolation & U=13 & ST=3 & 8196 & 19 & 0.029 & 27 & 21.885 \\
  \multicolumn{2}{l}{(good sample)} & ST=4 & 8196 & 20 & 0.029 & 26 & 24.325 \\
  & & ST=5 & 8196 & 21 & 0.032 & 25 & 24.330 \\
  & & ST=6 & 8196 & 22 & 0.031 & 24 & 25.162 \\
  \bottomrule
\end{tabular}
\label{table:results}
\end{table}

\begin{table}[htp]
\caption{Models with the maximum state space per benchmark.}
\rowcolors{1}{gray!10}{white}
\centering
\begin{tabular}{ m{0.23\textwidth} m{0.17\textwidth} R{0.13\textwidth} R{0.25\textwidth} R{0.17\textwidth} }
  \toprule
  \rowcolor{white}
  \multicolumn{3}{c}{Benchmark} & \multicolumn{2}{r}{Robust Bisimilarity \;\;} \\
  \midrule
  Name & Property & States & Min States & Time \\ 
  \midrule
  crowds & positive & 111294 & 38770 & 2946.84 \\
  egl & messages & 115710 & 131 & 153.01 \\
  herman & steps & 32768 & 612 & 25.29 \\
  oscillators & power & 24311 & 17877 & 0.42 \\
  \bottomrule
\end{tabular}
\label{table:largest}
\end{table}

\begin{table}[htp]
\caption{Summary of all benchmarks with the change due to robust bisimilarity.}
\rowcolors{1}{gray!10}{white}
\centering
\begin{tabular}{ m{0.27\textwidth} m{0.17\textwidth} R{0.11\textwidth} R{0.23\textwidth} R{0.17\textwidth} }
  \toprule
  \rowcolor{white}
  \multicolumn{3}{c}{Benchmark} & \multicolumn{2}{r}{Average \% Increase \;} \\
  \midrule
  Name & Property & Instances & States & Time \\ 
  \midrule
  brp & p1 & 12 & 27.93 & 28.95 \\
  & p2 & 12 & 7.76 & 80.54 \\
  & p4 & 12 & 9193.43 & 142193.57 \\
  crowds & positive & 7 & 12306.72 & 273258.24 \\
  egl & messages & 6 & - & 20693.83 \\
  erd{\"o}s-r\'enyi model & connected & 18 & - & 799.07\\
  fair biased coin & heads & 9 & - & 0.00 \\
  has majority element & incorrect & 24 & - & 16.08 \\
  herman & steps & 7 & - & 297.99 \\
  leader-sync & elected \& time & 18 & - & 518.98 \\
  haddad-monmege & target & 3 & - & 0.00 \\
  oscillators & power \& time & 14 & 5.12 & 11.73 \\
  pollards factorization & input & 8 & - & 0.00 \\
  queens & success & 6 & - & 1193.93 \\  
  set isolation & good sample & 4 & 25.06 & 79035.95 \\
  \midrule
  \rowcolor{white}
  \textbf{Total} & & 160 & 1231.68 & 25589.22 \\
  \bottomrule
\end{tabular}
\label{table:aggregate}
\end{table}

\begin{table}[htp]
\caption{Models for which robust bisimilarity results in an \texttt{OutOfMemoryError}.}
\rowcolors{1}{gray!10}{white}
\centering
\begin{tabular}{ m{0.13\textwidth} m{0.15\textwidth} m{0.1\textwidth} m{0.1\textwidth} R{0.11\textwidth} R{0.22\textwidth} R{0.12\textwidth} }
  \toprule
  \rowcolor{white}
  \multicolumn{5}{c}{Benchmark} & \multicolumn{2}{c}{\hspace{1.5cm} Bisimilarity} \\
  \midrule
  Name & Property & \multicolumn{2}{c}{Parameters} & States & Min States & Time \\ 
  \midrule
  crowds & positive & CS=10 & TR=6 & 352535 & 101 & 0.57 \\
  egl & messages & N=5 & L=8 & 156670 & 171 & 0.87 \\
  & unfair & N=5 & L=2 & 33790 & 229 & 0.10 \\
  & & & L=4 & 74750 & 469 & 0.26 \\
  & & & L=6 & 115710 & 709 & 0.40 \\
  & & & L=8 & 156670 & 949 & 0.70 \\
  nand & reliable & N=20 & K=1 & 78332 & 39982 & 0.81 \\
  & & & K=2 & 154942 & 102012 & 1.89 \\
  & & & K=3 & 231552 & 164042 & 3.67 \\
  queens & success & & N=10 & 23492 & 527 & 0.08 \\
  \bottomrule
\end{tabular}
\label{table:too-large}
\end{table}

\subsection{Results}
Table~\ref{table:results} shows results for all benchmarks where the minimized models obtained by traditional bisimilarity and robust bisimilarity differ.
These are of particular interest because they are instances where our algorithm identifies that a model minimized in traditional fashion may not be robust.
In fact, in all benchmarks we have checked, we have observed that the distance between pairs of states that are not robustly bisimilar is discontinuous.  This leads us to the conjecture that for bisimilar states, robust bisimilarity is also a necessary condition for continuity.
The property used for each benchmark dictates the labelling used for the model.
In the table, \emph{Min} denotes the number of states in the minimized model and \emph{Time} denotes the amount of time taken (in seconds) to compute bisimilarity.

The results are promising, since robust bisimilarity, although (unsurprisingly) slower than traditional bisimilarity, 
remains practical across a wide range of standard benchmarks.
Table~\ref{table:largest} displays some of the largest models per benchmark along with the time required to compute robust bisimilarity.  The longest time recorded is about 50 minutes for the \emph{crowds} benchmark.
This may be due to the fact that the \emph{crowds} benchmark has many non-robustly bisimilar pairs of states, which we believe makes the benchmark harder. Other benchmarks, e.g. \emph{brp (p4)}, point in a similar direction.

The complete set of experiments includes 170 models, of which 160 are aggregated in Table~\ref{table:aggregate}. This table presents the average percentage increase in both the state space of the minimized model and the computation time for robust bisimilarity compared to traditional bisimilarity.  The \emph{crowds} benchmark exhibits the largest average percentage increase for both metrics.  The reported values may seem large, however, it is important to note that the traditional bisimilarity algorithm required a maximum of $2.14$ seconds per model in this table.

Furthermore, robust bisimilarity was successfully computed in less than a minute for 152 models (over 89\%).
Of the total set of models, the remaining 10 (approximately~6\%), listed in Table~\ref{table:too-large}, could only be minimized using traditional bisimilarity, as the robust bisimilarity computation ran out of available memory before completion.  This issue occurred with all instances of the \emph{nand} benchmark and half of the instances of the \emph{egl} benchmark.

Ultimately, robust bisimilarity proves feasible for large models, despite needing more resources than traditional bisimilarity. Furthermore, it offers a more reliable method of determining equivalence, which can be particularly beneficial in mission-critical applications, which require a higher level of precision.
\section{Conclusions and Future Work}
\label{section:conclusion}

To address the lack of robustness of probabilistic bisimilarity, we have introduced the concept of robust bisimilarity for labelled Markov chains.
Robust bisimilarity ensures that the distance function remains continuous even under perturbations of transition probabilities.
Additionally, we have presented a computationally efficient algorithm, with experimental results demonstrating that robust bisimilarity is plausible for large-scale verification tasks.

Our work opens new avenues for future exploration.
First, a logical characterization of robust bisimilarity could provide deeper insights.
Second, while we have established in Theorem~\ref{theorem:continuous} that robust bisimilarity is a sufficient condition for continuity, we conjecture that for bisimilar states, robust bisimilarity is in fact also a necessary condition for continuity.
We also aim to define continuity for non-bisimilar state pairs, to complete the theoretical characterization of robustness.
Thirdly, in \cite{JMLM14} it was shown that when the distances are discounted (i.e., differences that manifest themselves later count less), the distance function becomes continuous.  This raises the question: can we identify the properties for which the discontinuity is relevant? The examples suggest that these are long-term properties.
Finally, we plan to investigate specific types of perturbations of the transition probabilities, such as those that do not introduce new transitions, preserving the graph structure, as seen in Figures~\ref{fig:example-continuous} and \ref{fig:example-discontinuous}, unlike the perturbation shown in Figure~\ref{fig:example2-discontinuous} which adds a new transition.

\begin{credits}
\subsubsection{\ackname}
This research was supported in part by the Clarendon Fund and by the Natural Sciences and Engineering Research Council of Canada.

\subsubsection{\discintname}
The authors have no competing interests to declare that are relevant to the content of this article.
\end{credits}

\ifthenelse{\isundefined{\techReport}}{
}{
\clearpage
\appendix
\section*{\appendixname}

\section{Metric Topology}
\label{appendix:metric-topology}

\begin{definition}
The function $d_{E} : [0, 1] \times [0, 1] \to [0, 1]$ is defined by
\[
d_E(r, s) = | r - s |.
\]
\end{definition}

In the remainder, let $(Y, d)$ be a 1-bounded metric space.  According to, for example, \cite[Theorem~4.1.17]{E89}, a set $A \subseteq Y$ is \emph{compact} if and only if it is \emph{sequentially compact}, that is, each sequence in $A$ has a converging subsequence.  Given a sequence $(y_n)_n$, we denote a subsequence by $(y_{f(n)})_n$, where the function $f : \nat \to \nat$ is strictly increasing.

\begin{proposition}[{\cite[Example~3.1.25]{E89}}]
\label{proposition:unit-interval-compact}
$([0, 1], d_E)$ is compact.
\end{proposition}

\begin{definition}
Given a metric $d$ on $Y$, the function $d_F : (X \to Y) \times (X \to Y) \to [0, 1]$ is defined by
\[
d_F(f, g) = \max_{x \in X} d(f(x), g(x)).
\]
\end{definition}

\begin{proposition}[{\cite[Theorem~3.2.4]{E89}}]
\label{proposition:finite-product-compact}
If $(Y, d)$ is compact then $(X \to Y, d_F)$ is compact.
\end{proposition}

\begin{definition}
\label{definition:tv-distance}
The function $d_{TV} : \Dreal(X) \times \Dreal(X) \to [0, 1]$ is defined by
\[
d_{TV}(\mu, \nu) = \max_{x \in X} | \mu(x) - \nu(x) |.
\]
\end{definition}

\begin{proposition}
The function $f : Y \to [0, 1]$ is continuous at $y \in Y$ if and only if $f$ is lower semi-continuous at $y$ and upper semi-continuous at $y$.
\end{proposition}
\begin{proof}
Let $f : Y \to [0, 1]$ and $y \in Y$.  We prove two implications.  Assume that $f$ is lower and upper semi-continuous at $y$.  Let $(y_n)_n$ be a sequence in $Y$ converging to $y$.  Since
\begin{align*}
f(y) 
& \leq \lim \inf_n f(y_n)
\comment{$f$ is lower semi-continuous at $y$}\\
& \leq \lim \sup_n f(y_n)\\
& \leq f(y)
\comment{$f$ is upper semi-continuous at $y$}
\end{align*}
we can conclude that $\lim \inf_n f(y_n) = \lim \sup_n f(y_n) = f(y)$.  Hence, $\lim_n f(y_n)$ $= f(y)$.

Assume that $f$ is continuous at $y$.  Let $(y_n)_n$ be a sequence in $Y$ converging to $y$.  Since the sequence $(f(y_n))_n$ converges, we have that $\lim \inf_n f(y_n) = \lim \sup_n f(y_n) = \lim_n f(y_n) = f(y)$.  Hence, $f$ is lower and upper semi-continuous at $y$.
\qed \end{proof}

\section{Couplings}

\begin{lemma}
\label{lemma:couplings}
For all $\kappa$, $\lambda$, $\mu$, $\nu \in \Dreal(X)$ and $\omega \in \Creal(\lambda, \mu)$, there exist $\pi \in \Creal(\kappa, \mu)$ and $\rho \in \Creal(\lambda, \nu)$ such that $d_{TV}(\omega, \pi) \leq d_{TV}(\kappa, \lambda)$ and $d_{TV}(\omega, \rho) \leq d_{TV}(\mu, \nu)$.
\end{lemma}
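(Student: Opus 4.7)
The plan is to construct $\pi$ and $\rho$ separately. Since the two claims are symmetric, I would only prove the first one: given $\omega \in \Creal(\lambda,\mu)$ I construct $\pi \in \Creal(\kappa,\mu)$ with $d_{TV}(\omega,\pi) \le d_{TV}(\kappa,\lambda)$; the construction for $\rho$ is obtained by swapping the two coordinates throughout. Write $\alpha = d_{TV}(\kappa,\lambda) = \max_x|\kappa(x) - \lambda(x)|$. The idea is to reshape the first marginal of $\omega$ from $\lambda$ to $\kappa$ while preserving the second marginal $\mu$ and keeping each individual cell within $\alpha$ of its original value.

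I would build $\pi$ in two stages. First, define a ``base'' $\pi_0$ by rescaling each row: for $\lambda(x) > 0$ set $\pi_0(x,y) = \omega(x,y)\cdot\min(\kappa(x),\lambda(x))/\lambda(x)$, and for $\lambda(x) = 0$ set $\pi_0(x,y) = 0$ (which is forced since $\omega(x,\cdot)\equiv 0$ in that case). The row sums of $\pi_0$ are $\min(\kappa(x),\lambda(x))$ and the column sums $\mu_0(y) := \sum_x \pi_0(x,y)$ satisfy $\mu_0(y) \le \mu(y)$. Let $\alpha' = \sum_x \max(\kappa(x)-\lambda(x),0)$; a direct computation shows $\alpha' = \sum_y(\mu(y) - \mu_0(y))$, so the total row deficit matches the total column deficit. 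Assuming $\alpha' > 0$ (otherwise $\kappa = \lambda$ and one simply takes $\pi = \omega$), define probability distributions $r(x) = \max(\kappa(x)-\lambda(x),0)/\alpha'$ and $c(y) = (\mu(y) - \mu_0(y))/\alpha'$ and set
\[
\pi(x,y) = \pi_0(x,y) + \alpha'\cdot r(x)\cdot c(y).
\]

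Next I would verify the three required properties. Row marginals: in the shrinking rows the correction vanishes ($r(x) = 0$) and $\pi_0$ contributes exactly $\kappa(x)$; in the extending rows $\pi_0$ contributes $\lambda(x)$ and the correction contributes $(\kappa(x)-\lambda(x))\sum_y c(y) = \kappa(x) - \lambda(x)$. Column marginals: $\sum_x \pi(x,y) = \mu_0(y) + \alpha' c(y) = \mu(y)$ by the definition of $c$. Per-cell bound: on a shrinking row ($\kappa(x) \le \lambda(x)$), the correction is zero and $|\omega(x,y) - \pi(x,y)| = \omega(x,y)(\lambda(x)-\kappa(x))/\lambda(x) \le \lambda(x) \cdot \alpha/\lambda(x) = \alpha$, using $\omega(x,y) \le \lambda(x)$; on an extending row ($\kappa(x) > \lambda(x)$), $\pi_0(x,y) = \omega(x,y)$ and $|\omega(x,y) - \pi(x,y)| = (\kappa(x)-\lambda(x))\,c(y) \le \alpha \cdot 1$. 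In both cases the difference is at most $\alpha$, so $d_{TV}(\omega,\pi) \le \alpha$ as required.

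The step I expect to be the main obstacle is not any single calculation but finding the right construction in the first place: a naive uniform scaling $\omega(x,y)\cdot\kappa(x)/\lambda(x)$ fixes the first marginal but destroys the second, while moving mass too aggressively can violate the per-cell bound because $d_{TV}$ here is the $L_\infty$ distance rather than a sum. The two-stage ``shrink then add a rank-one correction'' approach is designed precisely so that each cell is either multiplied by a factor in $[\min(\kappa,\lambda)/\lambda,\,1]$ (bounded by $\alpha/\lambda(x)$ change times $\omega(x,y) \le \lambda(x)$) or incremented by $(\kappa(x)-\lambda(x))c(y) \le \alpha$, both compatible with the $L_\infty$ bound. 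Minor care is needed for the degenerate cases $\lambda(x) = 0$ and $\alpha' = 0$, but both are handled by the constructions just described.
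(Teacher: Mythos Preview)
Your construction is correct. The marginals and the per-cell bound check out as you describe; the only point worth making explicit is that $c(y)\le 1$ (needed in the extending-row case) follows simply because $c$ is a probability distribution.

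Your route differs from the paper's, though the two share a common first step. The paper proceeds by first building a coupling $\sigma\in\Creal(\kappa,\lambda)$ with maximal diagonal $\sigma(x,x)=\min(\kappa(x),\lambda(x))$ and off-diagonal part filled in by the North-West corner method on the residuals $\kappa',\lambda'$; it then \emph{composes} $\sigma$ with $\omega$ via the standard gluing formula $\pi(x,y)=\sum_{z:\lambda(z)\ne 0}\sigma(x,z)\,\omega(z,y)/\lambda(z)$, and verifies the $L_\infty$ bound by a four-case analysis on the sign of $\kappa(x)-\lambda(x)$ and whether $\kappa(x)$ or $\lambda(x)$ vanishes. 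The diagonal contribution of that composition is exactly your $\pi_0$, so the two constructions agree on the base; they diverge in the correction term, where the paper's off-diagonal part comes from the North-West corner coupling, while yours is the explicit rank-one term $\alpha'\,r(x)\,c(y)$. Your argument is a bit more direct (no auxiliary coupling, no gluing lemma, two cases instead of four), whereas the paper's composition viewpoint makes the coupling-of-couplings structure transparent and plugs into later results (e.g.\ Corollary~\ref{corollary:coupling-set-nonexpansive}) in the same idiom.
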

\begin{proof}
We only prove the existence of coupling $\pi$, as $\rho$ can be dealt with similarly.	Let $\kappa$, $\lambda$, $\mu \in \Dreal(X)$ and $\omega \in \Creal(\lambda, \mu)$.
	
The proof is structured as follows.  We first construct a coupling $\sigma$ of $\kappa$ and $\lambda$.  Next, we compose this coupling $\sigma$ with the coupling $\omega$ of $\lambda$ and $\mu$, obtaining a coupling $\pi$ of $\kappa$ and $\mu$.  Finally, we show that this coupling $\pi$ satisfies the desired property.

\begin{center}
\begin{tikzpicture}[xscale=1.8]
\node (muu) at (0.5, 2) {$\kappa(x)$};
\node (muv) at (0.5, 0) {$\kappa(y)$};
\node at (0.5, 1) {$\vdots$};
\vertex (v0) at (1, 0) {$y$};
\vertex (u0) at (1, 2) {$x$};
\vertex (v1) at (3, 0) {$y$};
\vertex (u1) at (3, 2) {$x$};
\node (nuu) at (3.5, 2) {$\lambda(x)$};
\node (nuv) at (3.5, 0) {$\lambda(y)$};
\node at (3.5, 1) {$\vdots$};
\node at (1, 1) {$\vdots$};
\node at (3, 1) {$\vdots$};	
	
\vertex (v2) at (4, 0) {$y$};
\vertex (u2) at (4, 2) {$x$};
\vertex (v3) at (6, 0) {$y$};
\vertex (u3) at (6, 2) {$x$};
\node (lambdau) at (6.5, 2) {$\mu(x)$};
\node (lambdav) at (6.5, 0) {$\mu(y)$};
\node at (6.5, 1) {$\vdots$};
\node at (4, 1) {$\vdots$};
\node at (6, 1) {$\vdots$};
		
\draw[-] (muu) to (u0);
\draw[-] (muv) to (v0);
\draw[-] (nuu) to (u1);
\draw[-] (nuv) to (v1);
\draw[-] (nuu) to (u2);
\draw[-] (nuv) to (v2);
\draw[-] (lambdau) to (u3);
\draw[-] (lambdav) to (v3);
\draw[-] (u0) to node[midway,above] {$\scriptstyle \sigma(x, x)$} (u1);
\draw[-] (u0) to node[pos=0.3,above] {\, $\scriptstyle \sigma(x, y)$} (v1);
\draw[-] (v0) to node[pos=0.3,below] {\, $\scriptstyle \sigma(y, x)$} (u1);
\draw[-] (v0) to node[midway,below] {$\scriptstyle \sigma(y, y)$} (v1);

\draw[-] (u2) to node[midway,above] {$\scriptstyle \omega(x, x)$} (u3);
\draw[-] (u2) to node[pos=0.3,above] {\, $\scriptstyle \omega(x, y)$} (v3);
\draw[-] (v2) to node[pos=0.3,below] {\, $\scriptstyle \omega(y, x)$} (u3);
\draw[-] (v2) to node[midway,below] {$\scriptstyle \omega(y, y)$} (v3);
\end{tikzpicture}
\end{center}

We first construct $\sigma \in \Creal(\kappa, \lambda)$.  Set $\sigma(x, x) = \min \{ \kappa(x), \lambda(x) \}$ for all $x \in X$.  It remains to consider $\kappa'$, $\lambda'$ with
\[
\kappa'(x) = \left \{
\begin{array}{ll}
0 & \hspace{1cm} \mbox{if $\lambda(x) \geq \kappa(x)$}\\
\kappa(x) - \lambda(x) & \hspace{1cm} \mbox{otherwise}		
\end{array}
\right .
\]
and 
\[
\lambda'(x) = \left \{
\begin{array}{ll}
0 & \hspace{1cm} \mbox{if $\lambda(x) \leq \kappa(x)$}\\
\lambda(x) - \kappa(x) & \hspace{1cm} \mbox{otherwise.}	
\end{array}
\right .
\]
Note that $\lambda'(X) = \kappa'(X)$.  By means of the North-West corner method, we construct a $\sigma' \in \Creal(\kappa', \lambda')$.  Hence, for all $x \in X$, $\sigma'(x, X) = \kappa'(x)$ and $\sigma'(X, x) = \lambda'(x)$.  Since for all $x \in X$, $\kappa'(x) = 0$ or $\lambda'(x) = 0$, we can conclude that $\sigma'(x, x) = 0$.  We complete $\sigma$ by setting $\sigma(x, y) = \sigma'(x, y)$ for all $x$, $y \in X$ with $x \not= y$.  It remains to show that $\sigma \in \Creal(\kappa, \lambda)$.  For all $x \in X$,
\begin{align*}
\sigma(x, X) 
& = \sigma(x, x) + \sigma(x, X \setminus \{ x \})\\
& = \min \{  \kappa(x), \lambda(x) \} + \sigma'(x, X)\\
& = \min \{  \kappa(x), \lambda(x) \} + \kappa'(x)\\
& = \kappa(x)
\end{align*}
and
\begin{align*}
\sigma(X, x) 
& = \sigma(x, x) + \sigma(X \setminus \{ x \}, x)\\
& = \min \{  \kappa(x), \lambda(x) \} + \sigma'(X, x)\\
& = \min \{  \kappa(x), \lambda(x) \} + \lambda'(x)\\
& = \lambda(x).
\end{align*}

Let $\omega \in \Creal(\lambda, \mu)$.  We define
\[
\pi(x, y) = \sum_{z \in X \wedge \lambda(z) \not= 0} \frac{\sigma(x, z) \, \omega(z, y)}{\lambda(z)}.
\]
To conclude that $\pi \in \Creal(\kappa, \mu)$, we observe that for all $x \in X$,
\begin{align*}
&\pi(x, X)\\
& = \sum_{y \in X} \sum_{z \in X \wedge \lambda(z) \not= 0} \frac{\sigma(x, z) \, \omega(z, y)}{\lambda(z)}\\
& = \sum_{z \in X \wedge \lambda(z) \not= 0} \frac{\sigma(x, z)}{\lambda(z)} \sum_{y \in X} \omega(z, y)\\
& = \sum_{z \in X \wedge \lambda(z) \not= 0} \frac{\sigma(x, z)}{\lambda(z)} \lambda(z)
\comment{$\omega(z, X) = \lambda(z)$ since $\omega \in \Creal(\lambda, \mu)$}\\
& = \sum_{z \in X \wedge \lambda(z) \not= 0} \sigma(x, z)\\
& = \sigma(x, X)
\comment{if $\lambda(z) = 0$ then $\sigma(x, z) \leq \sigma(X, z) = \lambda(z) = 0$ since $\sigma \in \Creal(\kappa, \lambda)$}\\
& = \kappa(x)
\comment{$\sigma \in \Creal(\kappa, \lambda)$}
\end{align*}
and
\begin{align*}
& \pi(X, x)\\
& = \sum_{y \in X} \sum_{z \in X \wedge \lambda(z) \not= 0} \frac{\sigma(y, z) \, \omega(z, x)}{\lambda(z)}\\
& = \sum_{z \in X \wedge \lambda(z) \not= 0} \frac{\omega(z, x)}{\lambda(z)} \sum_{y \in X} \sigma(y, z)\\
& = \sum_{z \in X \wedge \lambda(z) \not= 0} \frac{\omega(z, x)}{\lambda(z)} \lambda(z)
\comment{$\sigma(X, z) = \lambda(z)$ since $\sigma \in \Creal(\kappa, \lambda)$}\\
& = \sum_{z \in X \wedge \lambda(z) \not= 0} \omega(z, x)\\
& = \omega(X, x)
\comment{if $\lambda(z) = 0$ then $\omega(z, x) \leq \omega(z, X) = \lambda(z) = 0$ as $\omega \in \Creal(\lambda, \mu)$}\\
& = \mu(x)
\comment{$\omega \in \Creal(\lambda, \mu)$}
\end{align*}

It remains to show that $d_{TV}(\omega, \pi) \leq d_{TV}(\kappa, \lambda)$.  Let $x$, $y \in X$.  It suffices to prove that $| \omega(x, y) - \pi(x, y) | \leq d_{TV}(\kappa, \lambda)$.  We distinguish the following cases.
\begin{itemize}
\item
Assume that $\lambda(x) = 0$.  Then $\omega(x, y) \leq \omega(x, X) = \lambda(x) = 0$ since $\omega \in \Creal(\lambda, \mu)$ and, hence, $\omega(x, y) = 0$.  Furthermore, $\pi(x, y) \leq \pi(x, X) = \kappa(x)$ since $\pi \in \Creal(\kappa, \mu)$.  Hence, $| \omega(x, y) - \pi(x, y) | \leq \kappa(x) = \kappa(x) - \lambda(x) \leq d_{TV}(\kappa, \lambda)$.
\item
Assume that $\kappa(x) = 0$.  Then $\pi(x, y) \leq \pi(x, X) = \kappa(x) = 0$ since $\pi \in \Creal(\kappa, \mu)$ and, hence, $\pi(x, y) = 0$.  Furthermore, $\omega(x, y) \leq \omega(x, X) = \lambda(x)$ since $\omega \in \Creal(\lambda, \mu)$.  Hence, $| \omega(x, y) - \pi(x, y) | \leq \lambda(x) = \lambda(x) - \kappa(x) \leq d_{TV}(\kappa, \lambda)$.
\item 
Assume that $0 \ls \lambda(x) \leq \kappa(x)$.  Then
\begin{align*}
\pi(x, y)
& = \sum_{z \in X \wedge \lambda(z) \not= 0} \frac{\sigma(x, z) \, \omega(z, y)}{\lambda(z)}\\
& = \omega(x, y) + \sum_{z \in X \wedge \lambda(z) \not= 0 \wedge z \not= x} \frac{\sigma(x, z) \, \omega(z, y)}{\lambda(z)}
\comment{$\sigma(x, x) = \lambda(x) \gr 0$}
\end{align*}
and 
\begin{align*}
& \sum_{z \in X \wedge \lambda(z) \not= 0 \wedge z \not= x} \frac{\sigma(x, z) \, \omega(z, y)}{\lambda(z)}\\
& \leq \sum_{z \in X \wedge \lambda(z) \not= 0 \wedge z \not= x} \sigma(x, z)
\comment{$\omega(z, y) \leq \omega(z, X) = \lambda(z)$ since $\omega \in \Creal(\lambda, \mu)$}\\
& = \sigma(x, X \setminus \{ x \})\\
& \quad \comment{if $\lambda(z) = 0$ then $\sigma(x, z) \leq \sigma(X, z) = \lambda(z) = 0$ since $\sigma \in \Creal(\kappa, \lambda)$}\\
& = \sigma'(x, X)\\
& = \kappa'(x)
\comment{$\sigma' \in \Creal(\kappa', \lambda')$}\\
& = \kappa(x) - \lambda(x)\\
& \leq d_{TV}(\kappa, \lambda).
\end{align*}
Hence, $| \omega(x, y) - \pi(x, y) | \leq d_{TV}(\kappa, \lambda)$.
\item 
Assume that $0 \ls \kappa(x) \leq \lambda(x)$.  Then
\begin{align*}
& \pi(x, y)\\
& = \sum_{z \in X \wedge \lambda(z) \not= 0} \frac{\sigma(x, z) \, \omega(z, y)}{\lambda(z)}\\
& = \frac{\sigma(x, x) \, \omega(x, y)}{\lambda(x)}
\comment{$\lambda(x) \gr 0$ and $\sigma(x, X \setminus \{ x \}) = 0$ since $\kappa(x) \leq \lambda(x)$}\\
& = \frac{\kappa(x) \, \omega(x, y)}{\lambda(x)}
\comment{$\sigma(x, x) = \kappa(x)$ since $\kappa(x) \leq \lambda(x)$}\\
& \leq \omega(x, y)
\comment{$\kappa(x) \leq \lambda(x)$}
\end{align*}	
Hence,
\begin{align*}
& \omega(x, y) - \pi(x, y)\\
& = \left (1 - \frac{\kappa(x)}{\lambda(x)} \right )\, \omega(x, y)\\
& = (\lambda(x) - \kappa(x))\, \frac{\omega(x, y)}{\lambda(x)}\\
& \leq \lambda(x) - \kappa(x)
\comment{$\omega(x, y) \leq \omega(x, X) = \lambda(x)$ since $\omega \in \Creal(\lambda, \mu)$}\\
& \leq d_{TV}(\kappa, \lambda).
\end{align*}
Because $\pi(x, y) \leq \omega(x, y)$ and $\omega(x, y) - \pi(x, y) \leq d_{TV}(\kappa, \lambda)$, we have that $| \omega(x, y) - \pi(x, y) | \leq d_{TV}(\kappa, \lambda)$.
\end{itemize}
\qed \end{proof}

\begin{corollary}
\label{corollary:coupling-set-nonexpansive}
For all $\kappa$, $\lambda$, $\mu$, $\nu \in \Dreal(X)$ and $\omega \in \Creal(\lambda, \mu)$, there exist $\pi \in \Creal(\kappa, \nu)$ such that $d_{TV}(\omega, \pi) \leq d_{TV}(\kappa, \lambda) + d_{TV}(\mu, \nu)$.
\end{corollary}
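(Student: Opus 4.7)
The plan is to derive the corollary from Lemma~\ref{lemma:couplings} by applying it twice, once to change each marginal, and then combining the errors via the triangle inequality for $d_{TV}$.

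First, I would apply Lemma~\ref{lemma:couplings} to $\omega \in \Creal(\lambda, \mu)$ with the goal of adjusting the first marginal from $\lambda$ to $\kappa$. This yields an intermediate coupling $\pi' \in \Creal(\kappa, \mu)$ with
\[
d_{TV}(\omega, \pi') \leq d_{TV}(\kappa, \lambda).
\]
Next, I would apply Lemma~\ref{lemma:couplings} again, this time treating $\pi'$ as the given coupling in $\Creal(\kappa, \mu)$, to adjust its second marginal from $\mu$ to $\nu$. This produces $\pi \in \Creal(\kappa, \nu)$ with
\[
d_{TV}(\pi', \pi) \leq d_{TV}(\mu, \nu).
\]

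Finally, I would invoke the triangle inequality for $d_{TV}$, which holds because $d_{TV}$ is defined as a supremum of absolute differences and is therefore a pseudometric on $\Dreal(X \times X)$ when we view couplings as distributions on $X \times X$. This gives
\[
d_{TV}(\omega, \pi) \leq d_{TV}(\omega, \pi') + d_{TV}(\pi', \pi) \leq d_{TV}(\kappa, \lambda) + d_{TV}(\mu, \nu),
\]
as required.

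There is essentially no obstacle here, since all the heavy lifting (the explicit construction of the modified coupling and the verification of its marginal and distance properties) was done inside the proof of Lemma~\ref{lemma:couplings}. The only minor point to be careful about is making sure that the two applications of the lemma are set up with the correct marginals in the correct order so that $\pi'$ lies in a coupling space compatible with the second application, which is clear from the fact that $\pi' \in \Creal(\kappa, \mu)$ is again a coupling to which the lemma applies verbatim.
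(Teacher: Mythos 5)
Your proposal is correct and essentially identical to the paper's proof: both apply Lemma~\ref{lemma:couplings} twice (once per marginal) and combine the two bounds via the triangle inequality for $d_{TV}$. The only difference is that you adjust the first marginal before the second, whereas the paper does it in the opposite order, which is immaterial.
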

\begin{proof}
Let $\kappa$, $\lambda$, $\mu$, $\nu \in \Dreal(X)$ and $\omega \in \Creal(\lambda, \mu)$.  By Lemma~\ref{lemma:couplings}, there exists $\rho \in \Creal(\lambda, \nu)$ such that $d_{TV}(\omega, \rho) \leq d_{TV}(\mu, \nu)$ and, again using Lemma~\ref{lemma:couplings}, there exists $\pi \in \Creal(\kappa, \nu)$ such that $d_{TV}(\rho, \pi) \leq d_{TV}(\kappa, \lambda)$.  Therefore,
\begin{align*}
d_{TV}(\omega, \pi)
& \leq d_{TV}(\omega, \rho) + d_{TV}(\rho, \pi)
\comment{triangle inequality}\\
& \leq d_{TV}(\kappa, \lambda) + d_{TV}(\mu, \nu).
\end{align*}
\qed \end{proof}

For $\mu \in \Sreal(S)$, the \emph{$S^2_{\Delta}$-closed coupling} $\omega_\mu \in \Creal(\mu, \mu)$ of $\mu$ is defined as $\omega_\mu(s, s) = \mu(s)$ for all $s \in S$.  Note that $\support(\omega_\mu)$ $\subseteq S^2_{\Delta}$.

\begin{proposition}
For all $\mu$, $\nu \in \Dreal(X)$, we have $d_{TV}(\omega_\mu, \omega_\nu) \leq d_{TV}(\mu, \nu)$, where $\omega_\mu \in \Creal(\mu, \mu)$ and $\omega_\nu \in \Creal(\nu, \nu)$ are the $S^2_{\Delta}$-closed couplings of $\mu$ and $\nu$.
\label{proposition:closed-coupling-set-nonexpansive}
\end{proposition}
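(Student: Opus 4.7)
The proof should be essentially a direct computation from the definitions; I do not expect any real obstacle.

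The plan is to unfold $d_{TV}(\omega_\mu, \omega_\nu)$ according to Definition~\ref{definition:tv-distance} applied to $X \times X$, and then observe that the closed couplings vanish off the diagonal. Explicitly, for any pair $(s,t) \in X \times X$ with $s \neq t$, both $\omega_\mu(s,t) = 0$ and $\omega_\nu(s,t) = 0$, so $|\omega_\mu(s,t) - \omega_\nu(s,t)| = 0$. For a pair $(s,s)$ on the diagonal we have $\omega_\mu(s,s) = \mu(s)$ and $\omega_\nu(s,s) = \nu(s)$, so $|\omega_\mu(s,s) - \omega_\nu(s,s)| = |\mu(s) - \nu(s)|$.

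Taking the maximum over all $(s,t) \in X \times X$, the off-diagonal contributions are irrelevant and we obtain
\[
d_{TV}(\omega_\mu, \omega_\nu) \;=\; \max_{(s,t) \in X \times X} |\omega_\mu(s,t) - \omega_\nu(s,t)| \;=\; \max_{s \in X} |\mu(s) - \nu(s)| \;=\; d_{TV}(\mu, \nu),
\]
which in fact yields equality, hence certainly the claimed inequality. No further ingredient (such as Lemma~\ref{lemma:couplings} or the North-West corner construction) is needed here, since the closed couplings are defined explicitly and componentwise. The only thing worth double-checking in the write-up is that $\omega_\mu$ and $\omega_\nu$ are well-defined couplings, which is immediate from $\omega_\mu(s,X) = \omega_\mu(X,s) = \mu(s)$, and analogously for $\nu$.
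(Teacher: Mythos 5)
Your proposal is correct and follows essentially the same route as the paper's proof: a pointwise case split between diagonal pairs (where the difference is $|\mu(x)-\nu(x)| \leq d_{TV}(\mu,\nu)$) and off-diagonal pairs (where both couplings vanish). Your additional observation that the bound is in fact an equality is accurate, though the paper only states and proves the inequality.
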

\begin{proof}
Let $\mu$, $\nu \in \Dreal(X)$.  Let $\omega_\mu \in \Creal(\mu, \mu)$ and $\omega_\nu \in \Creal(\nu, \nu)$ be the $S^2_{\Delta}$-closed couplings of $\mu$ and $\nu$, respectively. Let $x$, $y \in X$.  It suffices to show that $| \omega_\mu(x, y) - \omega_\nu(x, y) | \leq d_{TV}(\mu, \nu)$.  We distinguish two cases.
\begin{itemize}
\item
Assume that $x = y$.  Then $\omega_\mu(x, y) = \mu(x)$ and $\omega_\nu(x, y) = \nu(x)$.  Hence, $| \omega_\mu(x, y) - \omega_\nu(x, y) | = | \mu(x) - \nu(x) | \leq d_{TV}(\mu, \nu)$.
\item
Assume that $x \neq y$.  Since $(x, y) \not\in S^2_{\Delta}$, we have $\omega_\mu(x, y) = \omega_\nu(x, y) = 0$.  Thus, $| \omega_\mu(x, y) - \omega_\nu(x, y) | = 0 \leq d_{TV}(\mu, \nu)$.
\end{itemize}
\qed \end{proof}

\section{Policies}

We say that $P \in \mathcal{P}$ is an \emph{$S^2_\Delta$-closed policy} if $\forall s \in S : \support(P(s,s)) \subseteq S^2_\Delta$.

\begin{proposition}
\label{proposition:policy-Lipschitz}
For all $\sigma$, $\tau : S \to \Dreal(S)$, and $S^2_\Delta$-closed policies $P \in \mathcal{P}_\sigma$, there exists an $S^2_\Delta$-closed policy $Q \in \mathcal{P}_\tau$ such that $d_F(P, Q) \leq 2\, d_F(\sigma, \tau)$.
\end{proposition}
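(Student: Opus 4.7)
The plan is to build $Q$ pointwise, splitting on the three kinds of pairs $(s,t) \in S \times S$ (pairs in $S_1^2$, pairs in $S^2_{0?}$, and diagonal pairs in $S^2_\Delta$), and to verify the bound $d_{TV}(P(s,t), Q(s,t)) \leq 2\, d_F(\sigma, \tau)$ case by case. Since $d_F(P,Q)$ is the maximum of these, that will be enough.

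For $(s,t) \in S^2_1$ the definition of $\mathcal{P}_\tau$ forces $\support(Q(s,t)) = \{(s,t)\} = \support(P(s,t))$, so $d_{TV}(P(s,t), Q(s,t)) = 0$ regardless of how $\sigma$ and $\tau$ compare. For $(s,t) \in S^2_{0?}$ we know $P(s,t) \in \Creal(\sigma(s), \sigma(t))$ and we must produce $Q(s,t) \in \Creal(\tau(s), \tau(t))$. Here the heavy lifting is done by Corollary~\ref{corollary:coupling-set-nonexpansive}: instantiating it with $\kappa = \tau(s)$, $\lambda = \sigma(s)$, $\mu = \sigma(t)$, $\nu = \tau(t)$, $\omega = P(s,t)$, yields a coupling $Q(s,t) \in \Creal(\tau(s), \tau(t))$ satisfying
\[
d_{TV}(P(s,t), Q(s,t)) \leq d_{TV}(\tau(s), \sigma(s)) + d_{TV}(\sigma(t), \tau(t)) \leq 2\, d_F(\sigma, \tau),
\]
which is exactly the bound we want (and also explains the factor of~$2$).

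The slightly subtler case is $(s,s) \in S^2_\Delta$, because $Q$ must itself be $S^2_\Delta$-closed, so we cannot simply invoke Corollary~\ref{corollary:coupling-set-nonexpansive}. The key observation is that the $S^2_\Delta$-closedness of $P$ together with $P(s,s) \in \Creal(\sigma(s), \sigma(s))$ forces $P(s,s)(u,v) = 0$ for $u \neq v$ and hence $P(s,s)(u,u) = P(s,s)(u,S) = \sigma(s)(u)$; i.e., $P(s,s) = \omega_{\sigma(s)}$, the $S^2_\Delta$-closed coupling of $\sigma(s)$. We therefore set $Q(s,s) = \omega_{\tau(s)}$, which is automatically $S^2_\Delta$-closed and lies in $\Creal(\tau(s), \tau(t))$. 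Proposition~\ref{proposition:closed-coupling-set-nonexpansive} then gives
\[
d_{TV}(P(s,s), Q(s,s)) = d_{TV}(\omega_{\sigma(s)}, \omega_{\tau(s)}) \leq d_{TV}(\sigma(s), \tau(s)) \leq d_F(\sigma, \tau),
\]
which is within the required bound.

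Taking the maximum over all $(s,t)$ yields $d_F(P,Q) \leq 2\, d_F(\sigma, \tau)$. I don't expect a real obstacle: the combinatorial work is already packaged into Lemma~\ref{lemma:couplings} and its corollary, and the only mildly delicate point is spotting that the $S^2_\Delta$-closedness constraint leaves exactly one choice for $P(s,s)$ (and hence for $Q(s,s)$), which is what lets us handle the diagonal case with Proposition~\ref{proposition:closed-coupling-set-nonexpansive} rather than Corollary~\ref{corollary:coupling-set-nonexpansive}.
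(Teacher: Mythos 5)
Your proposal is correct and follows essentially the same route as the paper's proof: the same three-way case split, Corollary~\ref{corollary:coupling-set-nonexpansive} for pairs in $S^2_{0?}$, and Proposition~\ref{proposition:closed-coupling-set-nonexpansive} for the diagonal after observing that $S^2_\Delta$-closedness forces $P(s,s) = \omega_{\sigma(s)}$ (a point the paper states without the justification you supply). The only blemish is a typo in the diagonal case, where $\Creal(\tau(s),\tau(t))$ should read $\Creal(\tau(s),\tau(s))$.
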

\begin{proof}
Let $\sigma$, $\tau : S \to \Dreal(S)$, and $P \in \mathcal{P}_\sigma$.  For each $(s, t) \in S^2_{0?}$, $P(s, t) \in  \Creal(\sigma(s), \sigma(t))$ and by Corollary~\ref{corollary:coupling-set-nonexpansive} there exists $\omega_{st} \in \Creal(\tau(s), \tau(t))$ such that 
\begin{equation}
\label{equation:matching-coupling}
d_{TV}(P(s, t), \omega_{st}) 
\leq d_{TV}(\sigma(s), \tau(s)) + d_{TV}(\sigma(t), \tau(t))
\leq 2\, d_F(\sigma, \tau).
\end{equation}
For each $(s, s) \in S^2_\Delta$, $P(s, s) = \omega_{\sigma(s)}$, where $\omega_{\sigma(s)} \in \Creal(\sigma(s), \sigma(s))$ is the $S^2_{\Delta}$-closed coupling of $\sigma(s)$.  Let $\omega_{\tau(s)} \in \Creal(\tau(s), \tau(s))$ be the $S^2_{\Delta}$-closed coupling of $\tau(s)$.  By Proposition~\ref{proposition:closed-coupling-set-nonexpansive}, we have 
\begin{equation}
\label{equation:matching-coupling-closed}
d_{TV}(P(s, s), \omega_{\tau(s)}) 
\leq d_{TV}(\sigma(s), \tau(s))
\leq 2\, d_F(\sigma, \tau).
\end{equation}
We define $Q$ by
\[
Q(s, t) = \left \{
\begin{array}{ll}
P(s, t)
& \hspace{1cm} \mbox{if $(s, t) \in S^2_1$}\\
\omega_{\tau(s)}
& \hspace{1cm} \mbox{if $(s, t) \in S^2_\Delta$}\\
\omega_{st}
& \hspace{1cm} \mbox{otherwise.}
\end{array}
\right .
\]
We leave it to the reader to verify that $Q \in \mathcal{P}_\tau$.  It suffices to show that for all $s$, $t \in S$, $d_{TV}(P(s, t), Q(s, t)) \leq 2\, d_F(\sigma, \tau)$.  If $(s, t) \in S^2_1$ then this is vacuously true.  Otherwise, it follows from (\ref{equation:matching-coupling}) and (\ref{equation:matching-coupling-closed}).
\qed \end{proof}

\section{Value Function}

\begin{definition}
The function $\Gamma : (S \times S \to \Dreal(S \times S)) \to (S \times S \to [0, 1]) \to (S \times S \to [0, 1])$ is defined by
\[
\Gamma_P(d)(s, t) = \left \{
\begin{array}{ll}
1
& \hspace{1cm} \mbox{if $(s, t) \in S^2_1$}\\
P(s, t) \cdot d
& \hspace{1cm} \mbox{otherwise,}	
\end{array}
\right .
\]
where $P(s, t) \cdot d = \sum_{u, v \in S} P(s, t)(u, v) \; d(u, v)$.
\end{definition}

For each $P : S \times S \to \Dreal(S \times S)$, $\Gamma_P$ is a monotone function from the complete lattice $S \times S \to [0, 1]$ to itself (see, for example, \cite[Proposition~6.1.3]{T18}).  According to the Knaster-Tarski fixed point theorem, $\Gamma_P$ has a least fixed point, which we denote by $\gamma_P$.  Note that $<S \times S, P>$ is a Markov chain.

Recall that $\mathcal{P}_\tau$ is the set of policies for $\tau$ and that the subscript $\tau$ is omitted when clear from the context.

\begin{theorem}[{\cite[Theorem~10.15]{BK08}}]
\label{theorem:reach-s1}
For all $P \in \mathcal{P}$ and $s$, $t \in S$, $\gamma_P(s, t)$ is the probability of reaching $S^2_1$ from $(s, t)$ in $<S \times S, P>$.
\end{theorem}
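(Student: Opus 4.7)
The plan is to characterize $\gamma_P$ as the pointwise limit of the Kleene iterates of $\Gamma_P$ starting from the bottom element $\mathbf{0}$, and to identify each iterate with a bounded-horizon reachability probability in the induced Markov chain $<S \times S, P>$. Concretely, I would first verify that $\Gamma_P$ is $\omega$-continuous on the complete lattice $S \times S \to [0,1]$ under pointwise order: since $\Gamma_P(d)(s,t)$ is either the constant~$1$ or the finite convex combination $\sum_{u,v} P(s,t)(u,v)\,d(u,v)$, it commutes with suprema of increasing chains by monotone convergence. Kleene's fixed-point theorem then yields $\gamma_P = \sup_n \Gamma_P^n(\mathbf{0})$.

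Next, I would prove by induction on $n$ that $\Gamma_P^{n+1}(\mathbf{0})(s,t)$ equals the probability that a path in $<S \times S, P>$ starting at $(s,t)$ visits $S^2_1$ within $n$ steps. The base case $n=0$ follows directly from the definition of $\Gamma_P$ applied to $\mathbf{0}$ (both sides reduce to the indicator of $S^2_1$). For the inductive step at $(s,t) \notin S^2_1$, a first-step analysis decomposes the $(n+1)$-step reachability probability as $\sum_{u,v} P(s,t)(u,v) \cdot p_n(u,v)$, where $p_n(u,v)$ denotes the $n$-step reachability probability from $(u,v)$; by the induction hypothesis $p_n = \Gamma_P^{n+1}(\mathbf{0})$, so applying $\Gamma_P$ once more recovers precisely $\Gamma_P^{n+2}(\mathbf{0})(s,t)$. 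At $(s,t) \in S^2_1$ both sides equal~$1$ trivially.

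Taking suprema as $n \to \infty$ and invoking continuity from below of the underlying path measure --- the events ``reach $S^2_1$ within $n$ steps'' form an increasing family whose union is ``eventually reach $S^2_1$'' --- yields the desired equality between $\gamma_P(s,t)$ and the reachability probability. The main obstacle is the measure-theoretic setup on infinite paths in $<S \times S, P>$, in particular constructing the canonical cylinder-set probability space and justifying both the first-step decomposition and the continuity of the measure from below; this is routine bookkeeping rather than a genuine conceptual difficulty, which is presumably why the authors appeal directly to \cite{BK08} instead of reproving the result here.
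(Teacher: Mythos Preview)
Your argument is correct and is essentially the standard proof of this fact: $\Gamma_P$ is indeed Scott-continuous (each coordinate is either constant or a finite nonnegative linear functional), so Kleene iteration converges to the least fixed point, and the identification of the $n$th iterate with the $n$-step reachability probability via first-step analysis is exactly right. The absorbing structure on $S^2_1$ (the policy forces a self-loop there) makes the induction clean, and continuity from below of the path measure handles the passage to the limit.

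There is nothing to compare against in the paper itself: the authors do not prove this theorem but simply invoke it as \cite[Theorem~10.15]{BK08}, the standard characterization of reachability probabilities in finite Markov chains as the least fixed point of the one-step Bellman operator. Your write-up effectively reconstructs that textbook result from first principles, which is more than the paper does.
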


\begin{theorem}[{\cite[Theorem~8]{CBW12}}]
\label{proposition:minimal-coupling}
$\displaystyle
\delta_\tau = \min_{P \in \mathcal{P}} \gamma_P.
$
\end{theorem}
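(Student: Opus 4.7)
The plan is to establish the equality by proving two matching inequalities, using the Knaster--Tarski theorem twice together with a compactness argument to attain the infimum in the definition of~$\Delta_\tau$.

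For the first direction, $\delta_\tau \leq \gamma_P$ for every $P \in \mathcal{P}$, I would fix such a policy and exploit that $\gamma_P$ is a fixed point of $\Gamma_P$. For $(s,t) \in S^2_1$ both $\Delta_\tau(\gamma_P)(s,t)$ and $\gamma_P(s,t)$ equal~$1$, and otherwise $P(s,t) \in \Creal(\tau(s),\tau(t))$ gives
\[
\Delta_\tau(\gamma_P)(s,t) \;=\; \inf_{\omega \in \Creal(\tau(s),\tau(t))} \sum_{u,v \in S} \omega(u,v)\, \gamma_P(u,v) \;\leq\; P(s,t) \cdot \gamma_P \;=\; \gamma_P(s,t).
\]
Hence $\gamma_P$ is a post-fixed point of the monotone operator $\Delta_\tau$, so by Knaster--Tarski $\delta_\tau \leq \gamma_P$, and taking the infimum yields $\delta_\tau \leq \inf_{P \in \mathcal{P}} \gamma_P$.

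For the reverse direction, I would construct a single optimal policy $P^\ast \in \mathcal{P}$ that witnesses the infimum. The set $\Creal(\tau(s),\tau(t))$ is a closed, hence compact, subset of the simplex $\Dreal(S \times S)$, and the map $\omega \mapsto \sum_{u,v} \omega(u,v)\, \delta_\tau(u,v)$ is continuous, so the infimum appearing in $\Delta_\tau(\delta_\tau)(s,t) = \delta_\tau(s,t)$ is attained by some coupling $\omega^\ast_{s,t}$. Setting $P^\ast(s,t) = \omega^\ast_{s,t}$ for $(s,t) \notin S^2_1$ and letting $P^\ast(s,t)$ be the Dirac distribution at $(s,t)$ otherwise defines a policy in $\mathcal{P}$ satisfying $\Gamma_{P^\ast}(\delta_\tau) = \delta_\tau$. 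Since $\gamma_{P^\ast}$ is the \emph{least} fixed point of $\Gamma_{P^\ast}$, we obtain $\gamma_{P^\ast} \leq \delta_\tau$, which combined with the first part gives $\delta_\tau = \gamma_{P^\ast} = \min_{P \in \mathcal{P}} \gamma_P$, and in particular shows that the infimum is attained.

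The main obstacle I anticipate is the compactness-and-continuity step that pointwise selects the optimal couplings $\omega^\ast_{s,t}$; after that, assembling them into a well-defined element of $\mathcal{P}$ is routine, provided we respect the Dirac requirement on $S^2_1$ and the coupling constraints on $S^2_\Delta \cup S^2_{0?}$. Once $P^\ast$ is in place, the remainder is two clean applications of Knaster--Tarski, and the probabilistic interpretation of $\gamma_{P^\ast}$ as a reachability probability supplied by Theorem~\ref{theorem:reach-s1} then recovers the more intuitive statement of Proposition~\ref{proposition:reach-s1}.
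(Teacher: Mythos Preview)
Your proposal is correct and follows exactly the two-step outline the paper gives (the paper merely cites \cite{CBW12} and states that one shows $\delta_\tau \sqsubseteq \gamma_P$ for all $P \in \mathcal{P}$ and then exhibits a $P$ with $\delta_\tau = \gamma_P$); your argument is a faithful fleshing-out of precisely that sketch, including the compactness step for attaining the infimum and the two Knaster--Tarski applications. One terminological slip: after showing $\Delta_\tau(\gamma_P) \leq \gamma_P$ you call $\gamma_P$ a \emph{post}-fixed point, but in the convention used in the paper (and most order-theoretic references) this is a \emph{pre}-fixed point; the mathematics and the conclusion $\delta_\tau \leq \gamma_P$ are unaffected.
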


The above theorem is proved by showing that $\delta_\tau \sqsubseteq \gamma_P$ for all $P \in \mathcal{P}$ and that there exists $P \in \mathcal{P}$ such that $\delta_\tau = \gamma_P$.

\begin{proof}[of Proposition~\ref{proposition:reach-s1}]
Follows immediately from Theorems~\ref{theorem:reach-s1} and \ref{proposition:minimal-coupling}.
\qed \end{proof}

\begin{definition}
\label{definition:optimal-policy}
Let $s$, $t \in S$.  
\begin{itemize}
\item 
A policy $P \in \mathcal{P}$ is \emph{optimal for} $(s, t)$ if $\gamma_P(s, t) = \delta_\tau(s, t)$.
\item
A policy $P \in \mathcal{P}$ is \emph{optimal} if for all $s$, $t \in S$, $P$ is optimal for $(s, t)$.
\end{itemize}
\end{definition}

Note that from Theorem~\ref{proposition:minimal-coupling} we can conclude that optimal policies exist.

\begin{proposition}
\label{proposition:optimal-characterization}
For all $P \in \mathcal{P}$, the following are equivalent.
\begin{enumerate}
\item 
$P$ is optimal
\item 
$\Gamma_P(\delta_\tau) = \delta_\tau$
\item 
$\Gamma_P(\delta_\tau) \sqsubseteq \delta_\tau$
\end{enumerate}
\end{proposition}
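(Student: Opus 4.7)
The plan is to prove the chain of implications $(1) \Rightarrow (2) \Rightarrow (3) \Rightarrow (1)$, using the Knaster--Tarski characterization of the least fixed point $\gamma_P$ of $\Gamma_P$ together with Theorem~\ref{proposition:minimal-coupling}, which gives $\delta_\tau = \min_{P' \in \mathcal{P}} \gamma_{P'}$. In particular, the key fact I will rely on is that the least fixed point $\gamma_P$ of the monotone function $\Gamma_P$ coincides with its least prefixed point, i.e., $\gamma_P = \min\{d \mid \Gamma_P(d) \sqsubseteq d\}$.

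For $(1) \Rightarrow (2)$, if $P$ is optimal then $\gamma_P = \delta_\tau$ by definition, and since $\gamma_P$ is a fixed point of $\Gamma_P$ we immediately obtain $\Gamma_P(\delta_\tau) = \Gamma_P(\gamma_P) = \gamma_P = \delta_\tau$. The implication $(2) \Rightarrow (3)$ is trivial since equality implies $\sqsubseteq$.

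For $(3) \Rightarrow (1)$, assume $\Gamma_P(\delta_\tau) \sqsubseteq \delta_\tau$, i.e., $\delta_\tau$ is a prefixed point of $\Gamma_P$. Since $\gamma_P$ is the least prefixed point of $\Gamma_P$, we conclude $\gamma_P \sqsubseteq \delta_\tau$. On the other hand, Theorem~\ref{proposition:minimal-coupling} yields $\delta_\tau \sqsubseteq \gamma_P$. Combining these two inequalities gives $\gamma_P = \delta_\tau$, so $P$ is optimal.

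There is no serious obstacle here: the entire argument is a direct application of the Knaster--Tarski fixed point theorem for the monotone operator $\Gamma_P$ on the complete lattice $S \times S \to [0,1]$, together with the already-established characterization $\delta_\tau = \min_{P' \in \mathcal{P}} \gamma_{P'}$. The only subtlety worth flagging is that we use both halves of Theorem~\ref{proposition:minimal-coupling} implicitly: the lower bound $\delta_\tau \sqsubseteq \gamma_P$ for an arbitrary $P \in \mathcal{P}$ (used in $(3) \Rightarrow (1)$), and the achievability of the minimum (used to know optimal policies exist, though this is not strictly needed for the equivalence itself).
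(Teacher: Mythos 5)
Your proposal is correct and follows essentially the same route as the paper's proof: the same cycle of implications $(1)\Rightarrow(2)\Rightarrow(3)\Rightarrow(1)$, with $(3)\Rightarrow(1)$ obtained by combining the Knaster--Tarski characterization of $\gamma_P$ as the least pre-fixed point of $\Gamma_P$ with the bound $\delta_\tau \sqsubseteq \gamma_P$ from Theorem~\ref{proposition:minimal-coupling}. No substantive differences to report.
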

\begin{proof}
Let $P \in \mathcal{P}$.  We prove three implications.
\begin{itemize}[leftmargin=15mm]
\item[1.\ $\Rightarrow$ 2.]
Assume that $P$ is optimal.  Then $\gamma_P = \delta_\tau$.  Therefore,
\[
\Gamma_P(\delta_\tau)
= \Gamma_P(\gamma_P)
= \gamma_P
= \delta_\tau.
\]
\item[2.\ $\Rightarrow$ 3.]
Immediate.
\item[3.\ $\Rightarrow$ 1.]
Assume that $\Gamma_P(\delta_\tau) \sqsubseteq \delta_\tau$, that is, $\delta_\tau$ is a pre-fixed point of $\Gamma_P$.  By the Knaster-Tarski fixed point theorem (see, for example, \cite[Theorem~2.35]{DP02}), $\gamma_P$ is the least pre-fixed point of $\Gamma_P$.  Hence,  $\gamma_P \sqsubseteq \delta_\tau$.  By Theorem~\ref{proposition:minimal-coupling}, $\delta_\tau \sqsubseteq \gamma_P$.  Therefore, $P$ is optimal.
\end{itemize}
\qed \end{proof}

Recall that states of a Markov chain \emph{communicate} with each other if both are reachable from one another by a (possibly empty) sequence of transitions that have positive probability. This is an equivalence relation which yields a set of \emph{communication classes}.  A communication class is \emph{closed} if the probability of leaving the class is zero.

\begin{proposition}
\label{proposition:closed-communication-class}
Let $P \in \mathcal{P}$ be an $S^2_\Delta$-closed policy.  If $C$ is a closed communication class of $<S \times S, P>$ then
\begin{enumerate}
\item 
$C = \{ (s, t) \}$ for some $(s, t) \in S^2_1$, or
\item 
$C \subseteq S^2_\Delta$, or
\item 
$C \subseteq S^2_{0,\tau}$.
\end{enumerate}
\end{proposition}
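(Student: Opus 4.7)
The plan is to analyse which of the four blocks $S^2_\Delta$, $S^2_{0,\tau}$, $S^2_1$, $S^2_{?,\tau}$ the closed communication class $C$ can intersect, using $S^2_\Delta$-closedness of $P$ together with the special form of $P$ on $S^2_1$-pairs to confine $C$ to a single block. If $C$ meets $S^2_1$, pick $(s,t) \in C \cap S^2_1$; because every policy satisfies $\support(P(s,t)) = \{(s,t)\}$ on $S^2_1$, the singleton $\{(s,t)\}$ is itself closed under $P$, so by uniqueness of communication classes $C = \{(s,t)\}$, yielding alternative~1. If $C$ avoids $S^2_1$ but meets $S^2_\Delta$, pick $(s,s) \in C$; $S^2_\Delta$-closedness of $P$ gives $\support(P(s,s)) \subseteq S^2_\Delta$, and a short induction on path length shows that every state reachable from $(s,s)$ in $<S \times S, P>$ lies in $S^2_\Delta$. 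Since $C$ equals the set of states mutually reachable with $(s,s)$, we get $C \subseteq S^2_\Delta$, yielding alternative~2.

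The remaining case is $C \subseteq S^2_{0,\tau} \cup S^2_{?,\tau}$, where I must show $C \subseteq S^2_{0,\tau}$, i.e., that every pair in $C$ is bisimilar. My plan is to exhibit a bisimulation containing~$C$ by taking an equivalence closure. Define $R$ to be the smallest equivalence relation on $S$ containing $C \cup S^2_\Delta$; explicitly, $(s,t) \in R$ iff there is a sequence $s = u_0, u_1, \ldots, u_n = t$ with each consecutive pair in $C \cup C^{-1} \cup S^2_\Delta$. I would then verify that $R$ is a probabilistic bisimulation in the sense of Definition~\ref{definition:probabilistic-bisimilary}. Label-equality for $(s,t) \in R$ follows along such a chain because no consecutive pair lies in $S^2_1$. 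A witnessing coupling $\omega \in \Creal(\tau(s), \tau(t))$ with $\support(\omega) \subseteq R$ is built in two steps: first, for each consecutive pair produce a coupling with support in $R$ by using $P(u_i,u_{i+1})$ when $(u_i,u_{i+1}) \in C$, the flipped coupling $(a,b) \mapsto P(u_{i+1},u_i)(b,a)$ when $(u_{i+1},u_i) \in C$, and the trivial diagonal coupling when $u_i = u_{i+1}$; second, iteratively compose them via the standard formula $\omega(a,c) = \sum_{b : \tau(u)(b) > 0} \omega_1(a,b) \omega_2(b,c)/\tau(u)(b)$. The composed support lies in $R \circ R$, which equals $R$ by transitivity. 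Once $R$ is a bisimulation, $C \subseteq R \subseteq {\sim} = S^2_\Delta \cup S^2_{0,\tau}$; and since $C$ avoids $S^2_\Delta$, we conclude $C \subseteq S^2_{0,\tau}$.

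The main obstacle is this third case, specifically the bookkeeping for the iterated composition of couplings: one must check that composition preserves both the correct marginals and support-containment in $R$. The flipping step for the symmetric closure and the composition step for the transitive closure are routine, and it is useful to observe that whenever $\tau(u)(b) = 0$ both factors $\omega_1(a,b)$ and $\omega_2(b,c)$ automatically vanish, since the two couplings share $\tau(u)$ as a marginal at their common intermediate state $u$, so the defining sum is well posed. Cases~1 and~2 are short once the definitions of $S^2_\Delta$-closed policy and the constraint $\support(P(s,t)) = \{(s,t)\}$ for $(s,t) \in S^2_1$ are unpacked.
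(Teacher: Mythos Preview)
Your treatment of alternatives~1 and~2 matches the paper's exactly. For the remaining case, your argument is correct but takes a substantially longer route than the paper's.

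The paper dispatches the third case in two lines using the policy characterization of the bisimilarity distance already established as Proposition~\ref{proposition:reach-s1} (equivalently Theorems~\ref{theorem:reach-s1} and~\ref{proposition:minimal-coupling}): once one knows $C \subseteq S^2_{0?}$, the closed class $C$ is disjoint from $S^2_1$, so the probability of reaching $S^2_1$ from any $(u,v) \in C$ in $\langle S \times S, P\rangle$ is zero; hence $\gamma_P(u,v) = 0$, hence $\delta_\tau(u,v) \le \gamma_P(u,v) = 0$, and Theorem~\ref{theorem:distance-zero} gives $u \sim v$, i.e.\ $(u,v) \in S^2_{0,\tau}$. No bisimulation needs to be constructed by hand.

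Your approach instead builds a bisimulation directly: take the equivalence closure $R$ of $C \cup S^2_\Delta$, and verify Definition~\ref{definition:probabilistic-bisimilary} by composing the policy couplings $P(u_i,u_{i+1})$ (or their flips) along chains, exploiting closedness of $C$ to keep supports inside $C \subseteq R$. This works, and the bookkeeping you flag (marginals and support under composition, vanishing of terms with $\tau(u)(b) = 0$) is indeed routine. The payoff is a self-contained argument that avoids the detour through $\delta_\tau$ and its fixed-point/reachability characterization. The cost is that you are re-deriving, in a special case, machinery the paper has already set up precisely to make such arguments short; since Proposition~\ref{proposition:reach-s1} is available and is used repeatedly elsewhere, the paper's one-line reduction is the more economical choice here.
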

\begin{proof}
Let $P \in \mathcal{P}$ be an $S^2_\Delta$-closed policy and $C$ be a closed communication class of $<S \times S, P>$.  Let $s$, $t \in S$ and $(s, t) \in C$.  We distinguish the following cases.
\begin{itemize}
\item[a.] Suppose $(s, t) \in S^2_1$.  Then, it follows immediately from the definition of $\mathcal{P}$ that $C = \{ (s, t) \}$.
\item[b.] Suppose $(s, t) \in S^2_\Delta$.  Let $(u, v) \in C$.  Then $(u, v)$ is reachable from $(s, t)$.  It follows from the definition of an $S^2_\Delta$-closed policy that $(u, v) \in S^2_\Delta$.  Therefore, $C \subseteq S^2_\Delta$.
\item[c.] Suppose $(s, t) \in S^2_{0?}$.  Let $(u, v) \in C$.  By a.\ and b., $(u, v) \in S^2_{0?}$.  Hence $C \cap S^2_1 = \varnothing$ and, by Theorem~\ref{theorem:reach-s1}, we have $\delta_\tau(u, v) = 0$.  Therefore, $(u, v) \in S^2_{0,\tau}$.  Thus, $C \subseteq S^2_{0,\tau}$.
\end{itemize}
\qed \end{proof}

\begin{proposition}
\label{proposition:delta-closed-policy}
Let $s$, $t \in S$.  If there exists a policy $P \in \mathcal{P}$ such that $(s, t)$ reaches $S^2_\Delta$ with probability $p$ in $<S \times S, P>$, then there exists an $S^2_\Delta$-closed policy $Q \in \mathcal{P}$ such that $(s, t)$ reaches $S^2_\Delta$ with probability $p$ in $<S \times S, Q>$.
\end{proposition}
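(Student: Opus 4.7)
The plan is to modify $P$ only on the diagonal pairs in $S^2_\Delta$. Define $Q : S \times S \to \Dreal(S \times S)$ by setting $Q(s,s) = \omega_{\tau(s)}$ for each $s \in S$, where $\omega_{\tau(s)}$ is the $S^2_\Delta$-closed coupling of $\tau(s)$ (so $\support(\omega_{\tau(s)}) \subseteq S^2_\Delta$), and $Q(u,v) = P(u,v)$ for every $(u,v) \notin S^2_\Delta$. By construction, $Q$ is $S^2_\Delta$-closed.

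First I would verify that $Q \in \mathcal{P}$ by a routine case split on $(u,v)$: for $(u,v) \in S^2_1$, $Q(u,v) = P(u,v)$ still has support $\set{(u,v)}$; for $(u,v) \in S^2_{0?}$, $Q(u,v) = P(u,v) \in \Creal(\tau(u),\tau(v))$; and for $(s,s) \in S^2_\Delta$, $Q(s,s) = \omega_{\tau(s)} \in \Creal(\tau(s),\tau(s))$ by definition of the closed coupling.

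The crux is showing that $(s,t)$ reaches $S^2_\Delta$ with the same probability $p$ in $<S \times S, Q>$ as in $<S \times S, P>$. The intuition is that $P$ and $Q$ agree on every pair outside $S^2_\Delta$, so the two induced Markov chains behave identically until a trajectory first enters $S^2_\Delta$; any modifications on $S^2_\Delta$ only affect what happens \emph{after} the chain has already reached the target set. To formalize this, I would invoke the standard least-fixed-point characterization of reachability (as used in the paper via {\cite[Theorem~10.15]{BK08}}): the probability $\rho_R(u,v)$ of reaching $S^2_\Delta$ from $(u,v)$ under policy $R$ is the least fixed point of the system given by $\rho(u,v) = 1$ for $(u,v) \in S^2_\Delta$, $\rho(u,v) = 0$ for $(u,v) \in S^2_1$, and $\rho(u,v) = \sum_{(u',v')} R(u,v)(u',v')\, \rho(u',v')$ for $(u,v) \in S^2_{0?}$. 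Since $P$ and $Q$ coincide on $S^2_{0?}$ and the boundary conditions on $S^2_\Delta$ and $S^2_1$ are the same, both systems agree, whence $\rho_P(s,t) = \rho_Q(s,t) = p$ when $(s,t) \in S^2_{0?}$. The remaining cases $(s,t) \in S^2_\Delta$ and $(s,t) \in S^2_1$ give reachability $1$ and $0$ respectively under both policies, trivially matching $p$.

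The main subtlety is justifying the reachability characterization cleanly, in particular arguing that on $S^2_1$ the least-fixed-point value is $0$ despite the self-loop mandated by the definition of $\mathcal{P}$; this is standard Markov-chain reachability and does not require anything special about $P$ or $Q$. Everything else is a direct reading off of the construction.
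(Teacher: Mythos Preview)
Your proposal is correct and follows exactly the same construction as the paper: define $Q$ to agree with $P$ off the diagonal and replace each $P(s,s)$ by the $S^2_\Delta$-closed coupling $\omega_{\tau(s)}$. In fact you supply more detail than the paper, which simply states the construction and then ``leave[s] it to the reader to verify'' that the reachability probability is unchanged; your least-fixed-point argument is a clean way to fill that gap.
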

\begin{proof}
Let $s$, $t \in S$ and $P \in \mathcal{P}$.  Let $p$ be the probability with which $(s, t)$ reaches $S^2_\Delta$ in $<S \times S, P>$.
Observe that for all $s \in S$, there exists $\omega_{\tau(s)} \in \Omega(\tau(s), \tau(s))$ with $\support(\omega_{\tau(s)}) \subseteq S^2_\Delta$.
We define $Q$ by
\[
Q(s, t) = \left \{
\begin{array}{ll}
\omega_{\tau(s)}
& \hspace{1cm} \mbox{if $(s, t) \in S^2_\Delta$}\\
P(s, t)
& \hspace{1cm} \mbox{otherwise.}
\end{array}
\right .
\]
We leave it to the reader to verify that $(s, t)$ reaches $S^2_\Delta$ with probability $p$ in $<S \times S, Q>$.
\qed \end{proof}

\section{Linear Algebra}

We denote the infinity norm by $\| \cdot \|$.  Recall that for an $n$-vector $x$, we have that $\| x \| = \max_{0 \leq i \ls n} |x_i|$ and for an $m \times n$-matrix $A$, we have that $\| A \| = \max_{0 \leq i \ls m} \sum_{0 \leq j \ls n} | A_{ij} |$.  Given $n$-vectors $x$ and $y$, we write $x \lneqq y$ if $x_i \leq y_i$ for all $0 \leq i \ls n$ and $x_j  \ls y_j$ for some $0 \leq j \ls n$.  We denote constant vectors and matrices simply by their value.  A matrix $A$ is \emph{strictly substochastic} if $A 1 \lneqq 1$.  

The definition of an irreducible matrix from \cite{BP94} is the following, however, we will rely only on the characterisation of irreducibility in Theorem~\ref{theorem:irreducible}.
An $n \times n$ matrix $A$ is \emph{cogredient} to a matrix $E$ if for some permutation matrix $P$, $PAP^t = E$. $A$ is \emph{reducible} if it is cogredient to $E = \big[\begin{smallmatrix}
B & 0 \\
C & D
\end{smallmatrix}\big]$, where $B$ and $D$ are square matrices, or if $n = 1$ and $A = 0$. Otherwise, A is \emph{irreducible}.

\begin{theorem}[{\cite[Theorem~2.2.1]{BP94}}]
\label{theorem:irreducible}
A nonnegative $n \times n$-matrix $A$ is irreducible if and only if for every  $0 \leq i, j \ls n$ there exists $m \gr 0$ such that $A_{ij}^m \gr 0$.
\end{theorem}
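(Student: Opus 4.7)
The plan is to translate the algebraic statement into graph theory. Associate with $A$ the directed graph $G_A$ on vertex set $\{0, \ldots, n-1\}$ having an edge from $i$ to $j$ precisely when $A_{ij} \gr 0$. Since $A$ is nonnegative, $(A^m)_{ij}$ equals a sum of nonnegative products of $A$-entries along length-$m$ walks, so $(A^m)_{ij} \gr 0$ if and only if $G_A$ admits a directed walk of length exactly $m$ from $i$ to $j$. The right-hand condition of the theorem therefore asserts that for every ordered pair $(i,j)$ --- including $i = j$ --- there is a positive-length walk from $i$ to $j$ in $G_A$.

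For the forward direction, I would first dispose of the $n = 1$ case: irreducibility forces $A \neq 0$, so $(A^m)_{00} = A_{00}^m \gr 0$ for every $m \gr 0$. For $n \geq 2$, I would fix $i$ and let $V_1 \subseteq \{0, \ldots, n-1\}$ be the set of vertices reachable from $i$ by walks of length $\geq 0$; then $i \in V_1$, and $V_1$ is forward-closed under $G_A$ (if $v \in V_1$ and $A_{vw} \gr 0$ then $w \in V_1$). Reindexing the states to list $V_1$ first produces a zero block in the upper-right corner of the permuted matrix, so irreducibility of $A$ forces $V_1 = \{0, \ldots, n-1\}$. This yields a walk of length $\geq 1$ from $i$ to every $j \neq i$. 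For the diagonal case $j = i$, I would pick any $k \neq i$ (available since $n \geq 2$) and concatenate a walk $i \to \cdots \to k$ with a walk $k \to \cdots \to i$, both of length $\geq 1$, producing a positive-length closed walk at $i$.

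For the converse, I would argue the contrapositive. If $A$ is reducible with $n = 1$ then $A = 0$, so $(A^m)_{00} = 0$ for all $m \gr 0$. If $A$ is reducible with $n \geq 2$, the definition gives a permutation matrix $P$ with $PAP^t = \begin{pmatrix} B & 0 \\ C & D \end{pmatrix}$ and $B, D$ square; this corresponds to a partition of the vertex set into a nonempty proper subset $V_1$ and its complement, with no edge of $G_A$ leaving $V_1$. Choosing $i \in V_1$ and $j \notin V_1$ yields $(A^m)_{ij} = 0$ for every $m$, since every walk starting in $V_1$ stays in $V_1$.

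The main subtlety is the diagonal case $i = j$, which requires a closed walk of \emph{positive} length rather than the trivial empty walk. This is why the $n = 1$ case must be handled separately (since a single vertex may fail to lie on any nontrivial cycle) and why, for $n \geq 2$, the argument needs the detour through another vertex to upgrade mere reachability into a positive-length self-walk.
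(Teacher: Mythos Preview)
The paper does not supply its own proof of this statement; it is quoted verbatim as Theorem~2.2.1 of \cite{BP94} and used as a black box (in the proof of Proposition~\ref{proposition:irreducible-invertible}). Your argument is the standard graph-theoretic proof and is correct. The forward direction correctly handles the subtle point that reducibility, as defined here, requires both $B$ and $D$ to be square (hence nonempty), so the forward-closed reachable set $V_1$ must be proper to witness reducibility; and your detour through a second vertex $k$ correctly produces a positive-length closed walk at $i$ when $n \geq 2$. The contrapositive direction is straightforward. One small notational remark: the paper writes $A_{ij}^m$ where $(A^m)_{ij}$ is meant, and you interpreted this correctly.
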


\begin{proposition}
\label{proposition:irreducible-invertible}
Let $A$ be an irreducible and strictly substochastic $n \times n$-matrix.  Then $I - A$ is invertible.
\end{proposition}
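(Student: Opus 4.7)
The plan is to show $1$ is not an eigenvalue of $A$; this immediately gives invertibility of $I-A$. The main tools will be the irreducibility characterization in Theorem~\ref{theorem:irreducible} and a max-element argument exploiting the strict substochasticity.

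First I would suppose, for contradiction, that $Ax = x$ for some nonzero (possibly complex) $x \in \mathbb{C}^n$. Let $y = |x|$ denote the componentwise absolute value. Then by the triangle inequality and nonnegativity of $A$,
\[
A y \;\geq\; |Ax| \;=\; |x| \;=\; y,
\]
so $y$ is a nonnegative, nonzero vector satisfying $A y \geq y$.

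Next I would run a maximum-element argument. Let $M = \max_i y_i > 0$ and set $J = \{\, i \mid y_i = M \,\}$, which is nonempty. For any $i \in J$, on the one hand $(Ay)_i \geq y_i = M$; on the other hand,
\[
(Ay)_i \;=\; \sum_j A_{ij} y_j \;\leq\; M \sum_j A_{ij} \;\leq\; M,
\]
using that the row sums are at most $1$ by strict substochasticity. Both inequalities must therefore be equalities, which forces (a) $\sum_j A_{ij} = 1$, i.e.\ row $i$ has full row sum, and (b) $y_j = M$ for every $j$ with $A_{ij} > 0$. In other words, $J$ is closed under successors in the support digraph of $A$.

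Then I would invoke irreducibility: by Theorem~\ref{theorem:irreducible}, starting from any $i^* \in J$ and given any $j$, there is some $m \geq 0$ with $A^m_{i^* j} > 0$, hence a path in the support digraph from $i^*$ to $j$. Since $J$ is forward-closed, this forces $j \in J$, so $J = \{1,\dots,n\}$ and $y = M \mathbf{1}$. But then part (a) applied to every row gives $A\mathbf{1} = \mathbf{1}$, contradicting the hypothesis $A\mathbf{1} \lneqq \mathbf{1}$. Hence $1$ is not an eigenvalue of $A$, so $I - A$ is invertible.

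The only delicate step is the reduction from the complex eigenvalue $1$ to the nonnegative vector $y = |x|$ satisfying $A y \geq y$ (rather than equality); handling this correctly is what lets the same max argument apply without assuming $x$ is already nonnegative. Everything else is a straightforward combination of the row-sum bound with the reachability characterization of irreducibility.
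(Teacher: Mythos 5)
Your proof is correct, but it takes a genuinely different route from the paper's. The paper's proof is a two-line reduction to a cited result: it invokes \cite[Theorem~2.1.11]{BP94}, which states that for an irreducible nonnegative square matrix $A$, $Ax \lneqq x$ for some $x \gneqq 0$ implies $\rho(A) \ls 1$; applying this with $x = 1$ (the all-ones vector, using strict substochasticity) gives $\rho(A) \ls 1$, so $1$ cannot be an eigenvalue and $I-A$ is invertible. You instead prove the needed fact from scratch via a maximum-principle argument: reduce a putative complex eigenvector for eigenvalue $1$ to a nonnegative $y$ with $Ay \geq y$, show the set $J$ of indices attaining $\max_i y_i$ is forward-closed in the support digraph and consists of rows with full row sum, and then use the reachability characterization of irreducibility (Theorem~\ref{theorem:irreducible}) to conclude $J$ is everything, contradicting $A1 \lneqq 1$. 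All steps check out, including the delicate passage from a complex eigenvector to $|x|$ and the forward-closure induction along the path guaranteed by irreducibility. What each approach buys: the paper's argument is shorter and actually establishes the stronger conclusion $\rho(A) \ls 1$, but it leans on an external Perron--Frobenius-type theorem; yours is entirely self-contained, using only the elementary row-sum bound and the combinatorial characterization of irreducibility that the paper already states, at the cost of a longer argument that excludes only the eigenvalue $1$ (which is all that invertibility of $I-A$ requires).
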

\begin{proof}
Let $A$ be an irreducible and strictly substochastic $n \times n$-matrix.  Then $A$ is an irreducible nonnegative square matrix.  Since $A$ strictly substochastic, $A 1 \lneqq 1$.  \cite[Theorem~2.1.11]{BP94} states that for an irreducible nonnegative square matrix $A$, if $A x \lneqq x$ for some $x \gneqq 0$ then $\rho(A) \ls 1$, where $\rho(A)$ is the spectral radius of $A$.  Since $A 1 \lneqq 1$, we have thus that $\rho(A) \ls 1$.  Towards a contradiction, assume that $I - A$ is not invertible.  Then there exists $x \not= 0$ with $(I-A) x = 0$.  That is, $A x = x$. Thus, one is an eigenvalue of $A$, and so $\rho(A) \geq 1$.
\qed \end{proof}

\section{Probabilistic Bisimilarity Distances}

The variable $\red$ evaluates to one in the red states and zero in the blue states, that is, $\red = \set{h_2 \mapsto 0, h_3 \mapsto 0, t_4 \mapsto 1, t_5 \mapsto 1}$.  Let $\varphi_1 = \mu\mvar.\, \qnext(\red \vee \mvar) \ominus \frac{1}{2}$, then the computation of the quantitative $\mu$-calculus formula of Example~\ref{example:continuous} is as follows,
\begin{align*}
  \sem{\varphi_1}
  &= \textstyle \sem{\mu\mvar.\, \qnext(\red \vee \mvar) \ominus \frac{1}{2}} \\
  &= \textstyle {\inf}\set{\valu \in \valus \mid 
     \valu=\sem{\qnext(\red \vee \mvar) \ominus \frac{1}{2}}} \\
  &= \textstyle {\inf}\set{\valu \in \valus \mid 
     \valu=\sem{\qnext(\red \vee \mvar)} \ominus \frac{1}{2}} \\
  &= \textstyle {\inf}\set{\valu \in \valus \mid 
     \valu=\pnext(\sem{\red \vee \mvar}) \ominus \frac{1}{2}} \\
  &= \textstyle {\inf}\set{\valu \in \valus \mid 
     \valu=\pnext(\sem{\red} \imax \sem{\mvar}) \ominus \frac{1}{2}} \\
  &= \textstyle {\inf}\set{\valu \in \valus \mid 
     \valu=\pnext([\red] \imax \valu) \ominus \frac{1}{2}}
\end{align*}
\begin{align*}
  \sem{\varphi_1} (h_0)
  &= \textstyle \pnext([\red] \imax \sem{\varphi_1})(h_0) \ominus \frac{1}{2}\\
  &= \textstyle \frac{1}{2}([\red] \imax \sem{\varphi_1})(h_0) + \frac{1}{2}([\red] \imax \sem{\varphi_1})(t) \ominus \frac{1}{2}\\
  &= \textstyle \frac{1}{2}(\sem{\varphi_1}(h_0)) + \frac{1}{2}([\red] (t)) \ominus \frac{1}{2}\\
  &= \textstyle \frac{1}{2}(\sem{\varphi_1}(h_0))\\
  &= 0
\end{align*}
\begin{align*}
  \sem{\varphi_1} (h_1)
  &= \textstyle \pnext([\red] \imax \sem{\varphi_1})(h_1) \ominus \frac{1}{2}\\
  &= \textstyle (\frac{1}{2} - \varepsilon)([\red] \imax \sem{\varphi_1})(h_1) + (\frac{1}{2} + \varepsilon)([\red] \imax \sem{\varphi_1})(t) \ominus \frac{1}{2}\\
  &= \textstyle (\frac{1}{2} - \varepsilon)(\sem{\varphi_1}(h_1)) + (\frac{1}{2} + \varepsilon)([\red] (t)) \ominus \frac{1}{2}\\
  &= \textstyle (\frac{1}{2} - \varepsilon)(\sem{\varphi_1}(h_1)) + \varepsilon\\
  &= \textstyle \frac{\varepsilon}{0.5 + \varepsilon}
\end{align*}

Let $\varphi_2 = \mu\mvar.\, \qnext(\red \vee \mvar)$, then the computation of the quantitative $\mu$-calculus formula of Example~\ref{example:discontinuous} is as follows,
\begin{align*}
  \sem{\varphi_2} 
  &= \textstyle \sem{\mu\mvar.\, \qnext(\red \vee \mvar)} \\
  &= \textstyle {\inf}\set{\valu \in \valus \mid 
     \valu=\sem{\qnext(\red \vee \mvar)}} \\
  &= \textstyle {\inf}\set{\valu \in \valus \mid 
     \valu=\pnext(\sem{\red \vee \mvar})} \\
  &= \textstyle {\inf}\set{\valu \in \valus \mid 
     \valu=\pnext(\sem{\red} \imax \sem{\mvar})} \\
  &= \textstyle {\inf}\set{\valu \in \valus \mid 
     \valu=\pnext([\red] \imax \valu)}
\end{align*}
\begin{align*}
  \textstyle \sem{\varphi_2}(h_2)
  &= \textstyle \pnext([\red] \imax \sem{\varphi_2})(h_2)\\
  &= \textstyle 1([\red] \imax \sem{\varphi_2})(h_2)\\
  &= \textstyle \sem{\varphi_2}(h_2)\\
  &= 0
\end{align*}
\begin{align*}
  \textstyle \sem{\varphi_2}(h_3)
  &= \textstyle \pnext([\red] \imax \sem{\varphi_2})(h_3)\\
  &= \textstyle (1 - \varepsilon)([\red] \imax \sem{\varphi_2})(h_3) + \varepsilon([\red] \imax \sem{\varphi_2})(t_5)\\
  &= \textstyle (1 - \varepsilon)(\sem{\varphi_2}(h_3)) + \varepsilon([\red](t_5))\\
  &= \textstyle (1 - \varepsilon)(\sem{\varphi_2}(h_3)) + \varepsilon\\
  &= 1
\end{align*}

\section{Continuity}

\begin{proof}[of Proposition~\ref{proposition:lower-semi-continuous}]
Let $(\tau_n)_n$ be a sequence converging to $\tau$.  Below, we prove that $\lim \inf_n \delta_{\tau_n}$ is a pre-fixed point of $\Delta_\tau$.  Since $\delta_\tau$ is the least pre-fixed point of $\Delta_\tau$ by the Knaster-Tarski fixed point theorem, we can conclude that $\delta_\tau \sqsubseteq \lim \inf_n \delta_{\tau_{n}}$.

Since the set $S$ is finite, we can conclude from Propositions~\ref{proposition:unit-interval-compact} and \ref{proposition:finite-product-compact} that $S \times S \to [0, 1]$ is compact.
Hence, there exists a subsequence $(\tau_{f(n)})_n$ of $(\tau_n)_n$, which also converges to $\tau$, such that $\lim_n \delta_{\tau_{f(n)}} = \liminf_n \delta_{\tau_{n}}$.
By Theorem~\ref{proposition:minimal-coupling}, for each $\tau_{f(n)}$ there exists an optimal policy $P_{f(n)} \in \mathcal{P}_{\tau_{f(n)}}$.
Moreover, $\Dreal(S \times S)^{S \times S}$ is also compact.  Thus, the sequence $(P_{f(n)})_n$ has a converging subsequence $(P_{g(n)})_n$.  Define $P \in \mathcal{P}_\tau$ by $P = \lim_{n} P_{g(n)}$.

To conclude that $\Delta_\tau(\lim \inf_n \delta_{\tau_{n}}) \sqsubseteq \lim \inf_n \delta_{\tau_{n}}$, it suffices to show that for all $s$, $t \in S$, $\Delta_\tau(\lim_n \delta_{\tau_{g(n)}})(s, t) \leq \lim_n \delta_{\tau_{g(n)}}(s, t)$.
Let $s$, $t \in S$.  We distinguish the following cases.
\begin{itemize}
\item 
If $(s, t) \in S^2_1$ then
\[
\Delta_\tau(\lim_n \delta_{\tau_{g(n)}})(s, t)
= 1
= \lim_n \Delta_{\tau_{g(n)}}(\delta_{\tau_{g(n)}})(s, t)
= \lim_n \delta_{\tau_{g(n)}}(s, t).
\]
\item 
Assume that $(s, t) \in S^2_\Delta \cup S^2_{0?}$.  Then,
\begin{align*}
\Delta_\tau(\lim_n \delta_{\tau_{g(n)}})(s, t)
& = \inf_{\omega \in \Creal(\tau(s), \tau(t))} \omega \cdot \lim_n \delta_{\tau_{g(n)}}\\
& \leq (\lim_n P_{g(n)})(s, t) \cdot \lim_n \delta_{\tau_{g(n)}}\\
& \quad \comment{$(\lim_n P_{g(n)})(s, t) = P(s, t) \in \Creal(\tau(s), \tau(t))$}\\
& = \lim_n \left( P_{g(n)}(s, t) \cdot \delta_{\tau_{g(n)}} \right)\\ 
& = \lim_n \Gamma_{P_{g(n)}}(\delta_{\tau_{g(n)}})(s, t)\\
& = \lim_n \delta_{\tau_{g(n)}}(s, t)
\comment{$P_{g(n)}$ is optimal, Proposition~\ref{proposition:optimal-characterization}}.
\end{align*}
\end{itemize}
\qed \end{proof}

\begin{lemma}
\label{lemma:continuous}
Let $s$, $t \in S$.  If there exists a policy $P \in \mathcal{P}_\tau$ such that 
\begin{equation}
\label{equation:reach-ccc}
(s, t) \mbox{ reaches } S^2_\Delta \mbox{ with probability } 1 \mbox{ in } <S \times S, P>
\end{equation}
then the function $\delta_{\_}(s, t) : (S \to \Dreal(S)) \to [0, 1]$ is upper semi-continuous at $\tau$. 
\end{lemma}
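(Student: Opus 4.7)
The plan is to show that $\lim_n \delta_{\tau_n}(s,t) = 0$, which suffices for upper semi-continuity because $s \simeq t$: by Lemma~\ref{lemma:rpb-is-pb}, $\simeq$ is a bisimulation, so by Theorem~\ref{theorem:distance-zero} we have $\delta_\tau(s,t) = 0$. The strategy mirrors the intuitive argument in the proof sketch of Theorem~\ref{theorem:continuous}: approximate the almost-sure reachability of $S^2_\Delta$ by its finite $k$-step truncation under $P$, transport this truncation to nearby policies via Proposition~\ref{proposition:policy-Lipschitz}, and use $S^2_\Delta$-closedness to rule out escapes to $S^2_1$.

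First, I would apply Proposition~\ref{proposition:delta-closed-policy} to replace $P$ by an $S^2_\Delta$-closed policy $P' \in \mathcal{P}_\tau$ that still reaches $S^2_\Delta$ from $(s,t)$ with probability $1$ in $<S \times S, P'>$. Since the reach-by-step-$k$ probabilities increase to this limit, for any $\varepsilon > 0$ there exists $k \in \nat$ such that the probability of $(s,t)$ entering $S^2_\Delta$ within $k$ steps in $<S \times S, P'>$ is at least $1 - \varepsilon$.

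Next, given a sequence $(\tau_n)_n$ converging to $\tau$, I would invoke Proposition~\ref{proposition:policy-Lipschitz} to produce, for each $n$, an $S^2_\Delta$-closed policy $Q_n \in \mathcal{P}_{\tau_n}$ with $d_F(P', Q_n) \leq 2\, d_F(\tau, \tau_n)$, so $Q_n \to P'$ entrywise. The probability of reaching $S^2_\Delta$ from $(s,t)$ within $k$ steps is a finite polynomial in the entries of the induced transition matrix (a sum over all length-at-most-$k$ paths ending in $S^2_\Delta$), hence continuous in the policy. Consequently, for all sufficiently large $n$, the corresponding probability under $Q_n$ is at least $1 - 2\varepsilon$.

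Finally, because $Q_n$ is $S^2_\Delta$-closed, the set $S^2_\Delta$ is absorbing in $<S \times S, Q_n>$, so any trajectory from $(s,t)$ that reaches $S^2_1$ must never enter $S^2_\Delta$. Therefore $\gamma_{Q_n}(s,t)$ is bounded above by the probability of \emph{not} reaching $S^2_\Delta$ within $k$ steps under $Q_n$, which is at most $2\varepsilon$ for all large $n$. Combined with Proposition~\ref{proposition:reach-s1}, this yields $\delta_{\tau_n}(s,t) \leq \gamma_{Q_n}(s,t) \leq 2\varepsilon$ for large $n$, so $\limsup_n \delta_{\tau_n}(s,t) \leq 2\varepsilon$ for every $\varepsilon > 0$, as required. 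The main obstacle I anticipate is precisely the gap between convergence of the policies and convergence of their infinite-horizon reachability probabilities; the $k$-step truncation together with the $S^2_\Delta$-closedness of $Q_n$ is exactly what bridges this gap, since $S^2_\Delta$-closedness under both $P'$ and $Q_n$ converts an arbitrarily good \emph{finite-horizon} approximation of $S^2_\Delta$-reachability into an arbitrarily good \emph{infinite-horizon} bound on $S^2_1$-reachability.
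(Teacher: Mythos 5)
Your proposal is correct, and it takes a genuinely different route from the paper's formal proof. You share the same two preparatory steps --- passing to an $S^2_\Delta$-closed policy via Proposition~\ref{proposition:delta-closed-policy} and transporting it to nearby policies $Q_n \in \mathcal{P}_{\tau_n}$ via Proposition~\ref{proposition:policy-Lipschitz} --- but where the paper then runs an induction over the acyclic graph of communication classes of $<S \times S, P>$, using matrix norms, irreducibility, and the invertibility of $I - P^C$ for strictly substochastic $P^C$ to establish the stronger statement $\lim_n \gamma_{P_n}(u,v) = \gamma_P(u,v)$ on every reachable class, you instead formalize the elementary finite-horizon truncation that the paper presents only as informal intuition in the proof sketch of Theorem~\ref{theorem:continuous}: fix $k$ with $k$-step $S^2_\Delta$-reachability at least $1-\varepsilon$, note that this $k$-step probability is a polynomial in the policy entries and hence continuous under $d_F$-convergence, and use $S^2_\Delta$-closedness of $Q_n$ to make $S^2_\Delta$ absorbing, so that $\gamma_{Q_n}(s,t) \leq 2\varepsilon$ and Proposition~\ref{proposition:reach-s1} gives $\delta_{\tau_n}(s,t) \leq 2\varepsilon$ for large $n$. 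The one point your argument genuinely depends on --- and which you correctly flag --- is that Proposition~\ref{proposition:policy-Lipschitz} delivers $Q_n$ that are themselves $S^2_\Delta$-closed; without this, a trajectory could enter $S^2_\Delta$ and later escape to $S^2_1$, and the finite-horizon bound would not control $\gamma_{Q_n}$. Your approach buys simplicity: it avoids spectral-radius arguments and matrix inversion entirely and needs only monotone convergence of hitting probabilities and continuity of polynomials. The paper's approach buys a quantitatively sharper conclusion (convergence of the full value function $\gamma_{P_n} \to \gamma_P$ on all reachable communication classes, with an explicit linear-algebraic error propagation), which is more than the lemma as stated requires.
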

\begin{proof}
Let $s$, $t \in S$.   Assume that $(\tau_n)_n$ is a sequence in $S \to \Dreal(S)$ that converges to $\tau$.  It suffices to show that $\lim \sup_n \delta_{\tau_n}(s, t) \leq \delta_{\tau}(s, t)$.

Assume that $Q \in \mathcal{P}_\tau$ is a policy such that $(s, t)$ reaches $S^2_\Delta$ with probability $1$ in $<S \times S, Q>$.  By Proposition~\ref{proposition:delta-closed-policy} there exists an $S^2_\Delta$-closed policy $P \in \mathcal{P}_\tau$ and (\ref{equation:reach-ccc}).  It follows from Theorem~\ref{theorem:reach-s1} that $\gamma_P(s, t) = 0$.  Thus, by Theorem~\ref{proposition:minimal-coupling}, $\delta_\tau(s, t) = 0$.  Hence, $P$ is optimal for $(s, t)$ due to Definition~\ref{definition:optimal-policy}.  According to Proposition~\ref{proposition:policy-Lipschitz}, for each $n \in \nat$, there exists an $S^2_\Delta$-closed policy $P_n \in \mathcal{P}_{\tau_n}$ such that $d_F(P_n, P) \leq 2\, d_F(\tau_n, \tau)$.  Hence, $(P_n)_n$ converges to $P$.

Consider the directed graph consisting of the communication classes of $<S \times S, P>$ reachable from $(s, t)$ as vertices.  There is an edge from communication class $C$ to communication class $D$ if there exist $(u, v) \in C$ and $(w, x) \in D$ such that $P(u, v)(w, x) \gr 0$.  This graph is acyclic.  We first prove that for all communication classes $C$ of $<S \times S, P>$ that are reachable from $(s, t)$ and for all $(u, v) \in C$,
\begin{equation}
\label{equation:lim-gamma-is-gamma}
\lim_n \gamma_{P_n}(u, v) = \gamma_P(u, v)
\end{equation} 
by induction on the length of a longest path from $C$ in the communication classes graph.

In the base case, we consider the \emph{closed} communication classes $C$, from which the length of a longest path in the communication classes graph is one.  By (\ref{equation:reach-ccc}) and Proposition~\ref{proposition:closed-communication-class}, we only need to consider closed communication classes that are subsets of $S^2_\Delta$.
Let $C \subseteq S^2_\Delta$ and $(u, v) \in C$.  According to Theorem~\ref{theorem:reach-s1}, for all $n \in \nat$, $\gamma_{P_n}(u, v) = 0$ and $\gamma_P(u, v) = 0$.  Therefore, (\ref{equation:lim-gamma-is-gamma}).

Next, we consider the inductive case.  Let $C$ be a communication class of $<S \times S, P>$ reachable from $(s, t)$.  Let $B$ be the set of state pairs of all communication classes that can be reached from $C$ in the communication classes graph via a path of length greater than $1$.  By induction, for all $(u, v) \in B$, (\ref{equation:lim-gamma-is-gamma}) holds.  Let $A$ be the set of all other state pairs, that is, $A = (S \times S) \setminus (B \cup C)$.
\begin{center}
\begin{tikzpicture}[xscale=2]
\node[state] (st) at (1, 3.5) {$s, t$};

\node[ellipse, draw, fit={(0.5, 1.5) (1, 2) (1, 1) (1.5,1.5)}, fill=cBlue!40] {};

\node[state] (uv) at (1, 1.8) {$u, v$};

\node at (2, 1.5) {$C$};

\draw[->,decorate,decoration={snake, pre length=1pt, post length=2pt, amplitude=1.3pt}] (st) to (uv);

\node[ellipse, draw, fit={(-0.5, -1) (0, -1.5) (0.5, -1) (0,-0.5)}, fill=cRed!40] {};
\node[ellipse, draw, fit={(1.5, -1) (2, -1.5) (2.5, -1) (2,-0.5)}, fill=cRed!40] {};
\node[ellipse, draw, fit={(0.5, -2.5) (1, -2) (1, -3) (1.5,-2.5)}, fill=cRed!40] {};

\node[draw, fit={(-1, -3.5) (-1, 0) (3, 0) (3,-3.5)}] {};

\node at (3.25, -1.75) {$B$};

\draw[->]  (0.5, 1.5) to (0, -0.5);
\draw[->]  (1.5, 1.5) to (2, -0.5);
\draw[->]  (0, -1.5) to (0.5, -2.5);
\draw[->]  (2, -1.5) to (1.5, -2.5);
\end{tikzpicture}
\end{center}

For $X \subseteq S \times S$ and $n \in \nat$, consider the vectors
\begin{align*}
\gamma_{P_n}^X & = (\gamma_{P_n}(u, v))_{(u, v) \in X}\\
\gamma_{P}^X & = (\gamma_{P}(u, v))_{(u, v) \in X}
\end{align*}
and the matrices
\begin{align*}
P_n^X & = (P_n(u, v)(w, x))_{(u, v) \in C, (w, x) \in X}\\
P^X & = (P(u, v)(w, x))_{(u, v) \in C, (w, x) \in X}
\end{align*}
For all $n \in \nat$, $\gamma_{P_n} = \Gamma_{P_n}(\gamma_{P_n})$ and, hence,
\[
\gamma_{P_n}^C = P_n^C \gamma_{P_n}^C + P_n^B \gamma_{P_n}^B + P_n^A \gamma_{P_n}^A.
\]
Since $\gamma_P = \Gamma_P(\gamma_P) $, we also have
\[
\gamma_P^C = P^C \gamma_P^C + P^B \gamma_P^B + P^A \gamma_P^A.
\]
From the communication classes graph we can infer that for all $(u, v) \in C$, $\support(P(u, v)) \subseteq B \cup C$.  Hence, $P^A = 0$.  Since $\lim_n P_n = P$, we have that
\begin{align}
\lim_n P_n^A & = P^A = 0 \nonumber\\
\lim_n P_n^B & = P^B
\label{equation:limit-matrix}\\
\lim_n P_n^C & = P^C \nonumber
\end{align}

Next, we prove that the inverse $(I - P^C)^{-1}$ exists.  We distinguish the following cases.
\begin{itemize}
\item 
If $P^C = 0$ then $I - P^C = I$, which has an inverse.
\item
Otherwise, $P^C \gneqq 0$.  Because $C$ is a communication class, for every $(u, v)$, $(w, x) \in C$ there exists $m$ such that $(P^C)^m_{(u, v), (w, x)} \gr 0$.  By Theorem~\ref{theorem:irreducible}, $P^C$ is irreducible.  Since the communication class $C$ is not closed, $P^C$ is strictly substochastic.  Hence, by Proposition~\ref{proposition:irreducible-invertible}, the inverse $(I - P^C)^{-1}$ exists.
\end{itemize}

Therefore,
\begin{align*}
\gamma_{P_n}^C - \gamma_P^C
& = (P_n^C \gamma_{P_n}^C + P_n^B \gamma_{P_n}^B + P_n^A \gamma_{P_n}^A) - (P^C \gamma_P^C + P^B \gamma_P^B + P^A \gamma_P^A)\\
& = (P_n^C \gamma_{P_n}^C + P_n^B \gamma_{P_n}^B + P_n^A \gamma_{P_n}^A) - (P^C \gamma_P^C + P^B \gamma_P^B)
\comment{$P^A = 0$}\\
& = P^C (\gamma_{P_n}^C - \gamma_P^C) + (P_n^C - P^C) \gamma_{P_n}^C + P^B (\gamma_{P_n}^B - \gamma_P^B)\\
&\qquad + (P_n^B - P^B) \gamma_{P_n}^B + P_n^A \gamma_{P_n}^A
\end{align*}
Hence,
\[
(I - P^C) \, (\gamma_{P_n}^C - \gamma_P^C)
= (P_n^C - P^C) \gamma_{P_n}^C + P^B (\gamma_{P_n}^B - \gamma_P^B) + (P_n^B - P^B) \gamma_{P_n}^B + P_n^A \gamma_{P_n}^A.
\]
As a consequence,
\[
\gamma_{P_n}^C - \gamma_P^C
= (I - P^C)^{-1} \, ( (P_n^C - P^C) \gamma_{P_n}^C + P^B (\gamma_{P_n}^B - \gamma_P^B) + (P_n^B - P^B) \gamma_{P_n}^B + P_n^A \gamma_{P_n}^A)
\]
Hence,
\begin{align*}
& \| \gamma_{P_n}^C - \gamma_P^C \|\\
& = \| (I - P^C)^{-1} \, ( (P_n^C - P^C) \gamma_{P_n}^C + P^B (\gamma_{P_n}^B - \gamma_P^B) + (P_n^B - P^B) \gamma_{P_n}^B + P_n^A \gamma_{P_n}^A) \|\\
& \leq \| (I - P^C)^{-1} \| \, (\| P_n^C - P^C \| \, \|\gamma_{P_n}^C\| + \| P^B \| \, \| \gamma_{P_n}^B - \gamma_P^B \| + \| P_n^B - P^B \| \, \| \gamma_{P_n}^B \|\\
& \qquad + \| P_n^A \| \, \| \gamma_{P_n}^A \|)\\
& \leq \| (I - P^C)^{-1} \| \, (\| P_n^C - P^C \| + \| P^B \| \, \| \gamma_{P_n}^B - \gamma_P^B \| + \| P_n^B - P^B \| + \| P_n^A \|)\\
& \quad \comment{$\| \gamma_{P_n}^X \| \leq 1$}\\
& \leq \| (I - P^C)^{-1} \| \, (\| P_n^C - P^C \| + |S|^2 \, \| \gamma_{P_n}^B - \gamma_P^B \| + \| P_n^B - P^B \| + \| P_n^A \|)\\
& \quad \comment{$\| P^B \| \leq |S|^2$}
\end{align*}

We need to prove that $\lim_n \gamma_{P_n}^C = \gamma_P^C$ and that this limit exists.  It is sufficient to show that $\limsup_n \|  \gamma_{P_n}^C - \gamma_P^C \| = 0$.  From the above we can conclude that this holds, as
\begin{align*}
& \limsup_n \| \gamma_{P_n}^C - \gamma_P^C \|\\
& \leq \limsup_n \| (I - P^C)^{-1} \| \, (\| P_n^C - P^C \| + |S|^2 \, \| \gamma_{P_n}^B - \gamma_P^B \| + \| P_n^B - P^B \|\\
& \qquad + \| P_n^A \|)\\
& = \| (I - P^C)^{-1} \| \, (\limsup_n \| P_n^C - P^C \| + |S|^2 \, \limsup_n \| \gamma_{P_n}^B - \gamma_P^B \|\\
& \qquad + \limsup_n \| P_n^B - P^B \| + \limsup_n \| P_n^A \|)\\
& = \| (I - P^C)^{-1} \| \, (|S|^2 \, \limsup_n \| \gamma_{P_n}^B - \gamma_P^B \|)
\comment{(\ref{equation:limit-matrix})}\\
& = 0
\comment{$\limsup_n \| \gamma_{P_n}^B - \gamma_P^B \| = 0$ by induction}
\end{align*}
This proves (\ref{equation:lim-gamma-is-gamma}).

Assume that $(s, t)$ belongs to communication class $C$.  Then
\begin{align*}
\limsup_n \delta_{\tau_n}(s, t)
& \leq \limsup_n \gamma_{P_n}(s, t)
\comment{$\delta_{\tau_n} \sqsubseteq \gamma_{P_n}$ by Theorem~\ref{proposition:minimal-coupling}}\\
& = \limsup_n \gamma_{P_n}^C(s, t)
\comment{$(s, t) \in C$}\\
& = \gamma_P^C(s, t)
\comment{(\ref{equation:lim-gamma-is-gamma})}\\
& = \gamma_P(s, t)
\comment{$(s, t) \in C$}\\
& = \delta_\tau(s, t)
\comment{$P \in \mathcal{P}_\tau$ is optimal for $(s, t)$}
\end{align*}
Hence, the function $\delta_{\_}(s, t)$ is upper semi-continuous at $\tau$.
\qed \end{proof}

\begin{proof}[of Theorem~\ref{theorem:continuous}]
Follows directly from Proposition~\ref{proposition:lower-semi-continuous} and Lemma~\ref{lemma:continuous}.
\qed \end{proof}

\section{Robust Probabilistic Bisimilarity}

\begin{theorem}[{\cite[Theorem 2.1.30]{T18}}]
\label{theorem:pseudometric}
$\delta_\tau$ is a pseudometric.
\end{theorem}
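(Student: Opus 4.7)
\begin{proofsketch}
The plan is to verify the three pseudometric axioms for $\delta_\tau$: reflexivity, symmetry, and the triangle inequality. The first two follow quickly from Proposition~\ref{proposition:reach-s1}; the triangle inequality is where the real work lies.

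For reflexivity, I would exhibit, for each $s \in S$, a policy $P \in \mathcal{P}$ that uses the $S^2_\Delta$-closed coupling $\omega_{\tau(u)}$ at every diagonal pair $(u,u)$ and any valid coupling elsewhere. Starting from $(s,s)$, the induced Markov chain never leaves $S^2_\Delta$, so the probability of reaching $S^2_1$ is zero, and Proposition~\ref{proposition:reach-s1} gives $\delta_\tau(s,s) = 0$. For symmetry, from any $P \in \mathcal{P}$ I would form the swapped policy $P^T$ defined by $P^T(u,v)(u',v') = P(v,u)(v',u')$; the coordinate-swap map is an isomorphism between $<S \times S, P>$ and $<S \times S, P^T>$ that preserves $S^2_1$, so $\gamma_{P^T}(t,s) = \gamma_P(s,t)$, and minimizing over $P$ yields $\delta_\tau(t,s) = \delta_\tau(s,t)$.

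For the triangle inequality I would work with the fixed-point characterization of Definition~\ref{definition:delta} and show that the set of $[0,1]$-valued pseudometrics on $S$ is closed under $\Delta_\tau$ and under suprema of ascending chains; since $\delta_\tau$ is the Kleene limit of $(\Delta_\tau^n(\mathbf{0}))_n$ starting from the (trivially pseudometric) zero function, this suffices. Closure under chain suprema is routine, so the substantive step is to show $\Delta_\tau(d)(s,t) \le \Delta_\tau(d)(s,u) + \Delta_\tau(d)(u,t)$ whenever $d$ is a pseudometric. If $\ell(s) \neq \ell(t)$, then either $\ell(s) \neq \ell(u)$ or $\ell(u) \neq \ell(t)$, so the right-hand side is already at least~$1$; similarly, if either $(s,u)$ or $(u,t)$ lies in $S^2_1$ the right-hand side dominates~$1 \geq \Delta_\tau(d)(s,t)$.

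The main obstacle is the remaining case $\ell(s) = \ell(u) = \ell(t)$, which requires a gluing argument. Given near-optimal couplings $\omega_1 \in \Creal(\tau(s), \tau(u))$ and $\omega_2 \in \Creal(\tau(u), \tau(t))$, I would form the composed coupling
\[
\omega(x,z) = \sum_{y \,:\, \tau(u)(y) > 0} \frac{\omega_1(x,y)\, \omega_2(y,z)}{\tau(u)(y)},
\]
which a standard marginal computation shows lies in $\Creal(\tau(s), \tau(t))$. Applying the triangle inequality of $d$ pointwise inside the expectation and exploiting the shared marginal $\tau(u)$ yields $\omega \cdot d \le \omega_1 \cdot d + \omega_2 \cdot d$; infimizing over couplings and letting the approximation slack vanish gives the desired inequality on $\Delta_\tau(d)$. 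Once this gluing step is in place, the rest is standard lattice-theoretic closure.
\qed \end{proofsketch}
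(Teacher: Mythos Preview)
Your proof is correct. The paper itself does not prove this theorem: it is imported from \cite[Theorem~2.1.30]{T18}, so there is no in-paper argument to compare against. The route you take --- reflexivity and symmetry via the policy characterization of Proposition~\ref{proposition:reach-s1}, and the triangle inequality by showing that $\Delta_\tau$ preserves pseudometrics and then passing to the Kleene limit --- is the standard one in the literature on behavioural pseudometrics, and your coupling-composition (gluing) step is exactly the right tool for propagating the triangle inequality through $\Delta_\tau$.

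One point deserves to be made explicit: you rely on $\delta_\tau = \sup_n \Delta_\tau^n(0)$, i.e., that the least fixed point of $\Delta_\tau$ coincides with the supremum of its Kleene chain from the zero function. This requires $\omega$-continuity of $\Delta_\tau$, which you do not justify. It does hold here, because each coupling set $\Creal(\tau(s),\tau(t))$ is a compact polytope and $\omega \mapsto \omega \cdot d$ is continuous, so the infimum in Definition~\ref{definition:delta} is attained and commutes with increasing pointwise limits; but it is worth stating, since without compactness (e.g., on infinite state spaces) this step can fail and one must argue differently.
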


\begin{proposition}
\label{proposition:bisim-equivalence}
$\sim$ is an equivalence relation.
\end{proposition}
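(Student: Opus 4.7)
The plan is to establish reflexivity, symmetry, and transitivity of $\sim$ directly from Definition~\ref{definition:probabilistic-bisimilary}, with transitivity being the only step that requires real work.

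For reflexivity I would exhibit the identity relation $I = \set{(s,s) \mid s \in S}$ as a bisimulation. It is trivially an equivalence relation, $\ell(s) = \ell(s)$ holds, and the ``diagonal'' coupling $\omega$ given by $\omega(u,u) = \tau(s)(u)$ and $\omega(u,v) = 0$ for $u \neq v$ lies in $\Creal(\tau(s),\tau(s))$ with $\support(\omega) \subseteq I$. Hence $s \sim s$ for every $s \in S$.

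For symmetry, suppose $s \sim t$, witnessed by a bisimulation $R$ with $(s,t) \in R$. Since every bisimulation is, by definition, an equivalence relation, we have $(t,s) \in R$, and therefore $t \sim s$.

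The substantive step is transitivity. Assume $s \sim t$ via a bisimulation $R_1$ and $t \sim u$ via a bisimulation $R_2$. I would take $R = (R_1 \cup R_2)^*$, the reflexive transitive closure; this is automatically an equivalence relation containing $(s,u)$, so it suffices to verify that $R$ is itself a bisimulation. Given $(v,w) \in R$ there is a chain $v = v_0, v_1, \ldots, v_n = w$ with each consecutive pair in $R_1 \cup R_2$, and labels propagate along the chain by transitivity of equality. For the coupling condition I would induct on $n$: given couplings $\omega_a \in \Creal(\tau(v_{i-1}),\tau(v_i))$ and $\omega_b \in \Creal(\tau(v_i),\tau(v_{i+1}))$ whose supports already lie in $R$, I compose them via the standard ``gluing'' formula
\[
\omega(x,z) = \sum_{y \in S,\; \tau(v_i)(y) \gr 0} \frac{\omega_a(x,y)\,\omega_b(y,z)}{\tau(v_i)(y)}.
\]
A routine verification shows $\omega \in \Creal(\tau(v_{i-1}),\tau(v_{i+1}))$, and every $(x,z) \in \support(\omega)$ factors through some $y$ with $(x,y) \in \support(\omega_a)$ and $(y,z) \in \support(\omega_b)$, placing $(x,z)$ in the relational composition of $R$ with itself, and hence in $R$.

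The main obstacle is the coupling-gluing construction: checking that the formula above really yields a distribution with the correct marginals and the asserted support. This is a standard calculation (essentially the ingredient used later in Lemma~\ref{lemma:couplings}) and amounts to bookkeeping rather than new ideas. Once it is in place, $R$ is a bisimulation witnessing $s \sim u$, and the three axioms together give the claim.
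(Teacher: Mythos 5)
Your proof is correct, but it takes a genuinely different route from the paper's. The paper does not argue from Definition~\ref{definition:probabilistic-bisimilary} at all: it invokes the characterization $s \sim t \iff \delta_\tau(s,t) = 0$ (Theorem~\ref{theorem:distance-zero}) together with the fact that $\delta_\tau$ is a pseudometric (Theorem~\ref{theorem:pseudometric}), so that reflexivity, symmetry, and transitivity of $\sim$ fall out of $\delta_\tau(s,s)=0$, symmetry of $\delta_\tau$, and the triangle inequality, respectively, in a few lines. Your argument is instead the classical structural one: the identity relation with the diagonal coupling for reflexivity, symmetry of the witnessing equivalence relation for symmetry, and for transitivity the relation $(R_1 \cup R_2)^*$ (which is an equivalence relation since $R_1, R_2$ already are) made into a bisimulation by gluing couplings along chains. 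The gluing formula you give is exactly the composition the paper itself uses elsewhere --- in the proof of Lemma~\ref{lemma:couplings} and in the construction of $\omega_{(a,c)}$ in the transitivity part of Lemma~\ref{lemma:rpb-is-pb} --- so the calculation you defer is indeed routine and is carried out in the paper in those places; your observation that terms with $\tau(v_i)(y)=0$ contribute nothing to the marginals is the one point that needs care, and it holds because $\omega_a(x,y) \le \omega_a(S,y) = \tau(v_i)(y)$. What each approach buys: the paper's proof is shorter but rests on two imported results about $\delta_\tau$ (one minor blemish there is that the final ``iff'' in its transitivity chain is really only the forward implication, which is all that is needed); yours is longer but self-contained from the definition and generalizes to settings where no pseudometric characterization is available.
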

\begin{proof}
Let $s \in S$.  By Theorem~\ref{theorem:pseudometric}, $\delta_\tau(s, s) = 0$ and, hence, $s \sim s$ by Theorem~\ref{theorem:distance-zero}.  Let $s$, $t \in S$.  Then
\begin{align*}
s \sim t
& \mbox{ iff } \delta_\tau(s, t) = 0
\comment{Theorem~\ref{theorem:distance-zero}}\\
& \mbox{ iff } \delta_\tau(t, s) = 0
\comment{$\delta_\tau(s, t)  = \delta_\tau(t, s)$ by Theorem~\ref{theorem:pseudometric}}\\
& \mbox{ iff } t \sim s
\comment{Theorem~\ref{theorem:distance-zero}}
\end{align*}
Let $s$, $t$, $u \in S$.  Then
\begin{align*}
s \sim t \mbox{ and } t \sim u
& \mbox{ iff } \delta_\tau(s, t) = 0 \mbox{ and } \delta_\tau(t, u) = 0
\comment{Theorem~\ref{theorem:distance-zero}}\\
& \mbox{ iff } \delta_\tau(s, u) = 0
\comment{$\delta_\tau(s, u)  \leq \delta_\tau(s, t) + \delta_\tau(t, u)$ by Theorem~\ref{theorem:pseudometric}}\\
& \mbox{ iff } s \sim u
\comment{Theorem~\ref{theorem:distance-zero}}
\end{align*}
Therefore, $\sim$ is an equivalence relation.
\qed \end{proof}

\begin{proof}[of Lemma~\ref{lemma:rpb-is-pb}]
Let $s$, $t \in S$ such that $s \simeq t$.  By Definition~\ref{definition:probabilistic-bisimilary} we need to show that $\ell(s) = \ell(t)$, there exists $\omega \in \Creal(\tau(s), \tau(t))$ such that $\support(\omega) \subseteq \mathord{\simeq}$, and that $\robust$ is an equivalence relation.

Let $P_{st} \in \mathcal{P}$ be an $S^2_\Delta$-closed policy such that $(s, t)$ reaches $S^2_\Delta$ with probability $1$ in $<S \times S, P_{st}>$.  Such a policy exists according to Proposition~\ref{proposition:delta-closed-policy}.  Since, according to Proposition~\ref{proposition:closed-communication-class}, all pairs of states in $S_1^2$ are closed communication classes, we know that $(s, t) \not\in S_1^2$ and $\ell(s) = \ell(t)$.

Let $\omega = P_{st}(s, t)$, $u$, $v \in S$ and $(u, v) \in \support(\omega)$.  Hence, $\omega(u, v) \gr 0$ and $(u, v)$ is reachable from $(s, t)$.  Therefore, $(u, v)$ must reach $S^2_\Delta$ with probability $1$ in $<S \times S, P_{st}>$.  Consequently, $u \simeq v$. As a result, $\support(\omega) \subseteq \mathord{\simeq}$.

It remains to prove that $\robust$ is an equivalence relation. Clearly $S^2_\Delta \subseteq \robust$, thus, $\robust$ is reflexive.
We can construct $P_{ts}$ such that for all $w$, $x$, $y$, $z \in S$, $P_{ts}(x, w)(z, y) = P_{st}(w, x)(y, z)$. Since $(t, s)$ reaches $S^2_\Delta$ with probability $1$ in $<S \times S, P_{ts}>$, we have $t \simeq s$. Thus, $\robust$ is symmetric.

Let $u \in S$ such that $t \simeq u$.  Then, by Proposition~\ref{proposition:delta-closed-policy}, there exists an $S^2_\Delta$-closed policy $P_{tu} \in \mathcal{P}$ such that $(t, u)$ reaches $S^2_\Delta$ with probability $1$ in $<S \times S, P_{tu}>$.
To show that $\robust$ is transitive, it suffices to show that $s \simeq u$. 
We define the following sets,
\begin{align*}
R_{st} =~ &\{\, (a, b) \in S \times S \mid (a, b) \text{ is reachable from } (s, t) \text{ in } <S \times S, P_{st}> \,\}\\
R_{tu} =~ & \{\, (a, b) \in S \times S \mid (a, b) \text{ is reachable from } (t, u) \text{ in } <S \times S, P_{tu}> \,\}\\
R =~ & \{\, (a, c) \in S \times S \mid b_{(a,c)} \neq \varnothing \,\} \text{, where}\\
b_{(a,c)} =~ & \{\, b \in S \mid (a, b) \in R_{st} \text{ and } (b, c) \in R_{tu} \,\}.
\end{align*}
Let $T$, $U \in S \times S$.  We define $T \bowtie U$ as the set $\{\, (s_1, s_3) \in S \times S \mid \exists s_2 \in S$ such that $(s_1, s_2) \in T$ and $(s_2, s_3) \in U \,\}$. With this notation, $R = R_{st} \bowtie R_{tu}$.

\begin{claim-num}
\label{claim:s-delta}
For all $(a, b) \in R_{st}$, $(a, b)$ has a path to $S^2_\Delta$ in $<S \times S, P_{st}>$ and for all $(a, b) \in R_{tu}$, $(a, b)$ has a path to $S^2_\Delta$ in $<S \times S, P_{tu}>$.
\end{claim-num}
\begin{proof}[of Claim~\ref{claim:s-delta}]
Note that, from the definition of $R_{st}$, for all $(a, b) \in R_{st}$, it holds that $(a, b)$ reaches $S^2_\Delta$ with probability $1$ in $<S \times S, P_{st}>$.  Similarly, for all $(a, b) \in R_{tu}$, $(a, b)$ reaches $S^2_\Delta$ with probability $1$ in $<S \times S, P_{tu}>$. This proves Claim~\ref{claim:s-delta}.
\end{proof}

Let $(a, c) \in R$, we construct $\omega_{(a,c)}$ as follows. For all $(x, z) \in S \times S$, let
\[ \omega_{(a,c)}(x, z) = \frac{\sum_{b \in b_{(a,c)}} \sum_{y \in S} \frac{P_{st}(a, b)(x, y) P_{tu}(b, c)(y, z)}{\tau(b)(y)}}{|b_{(a,c)}|}. \]
Then, for all $x \in S$ we have
\begin{align*}
\omega_{(a,c)}(x,S) & = \sum_{z \in S} \frac{\sum_{b \in b_{(a,c)}} \sum_{y \in S} \frac{P_{st}(a, b)(x, y) P_{tu}(b, c)(y, z)}{\tau(b)(y)}}{|b_{(a,c)}|}\\
& = \frac{\sum_{b \in b_{(a,c)}} \sum_{y \in S} \frac{P_{st}(a, b)(x, y) \tau(b)(y)}{\tau(b)(y)}}{|b_{(a,c)}|} \comment{$P_{tu}(b,c) \in \Omega(\tau(b), \tau(c))$}\\
& = \frac{\sum_{b \in b_{(a,c)}} \tau(a)(x)}{|b_{(a,c)}|} \comment{$P_{st}(a,b) \in \Omega(\tau(a), \tau(b))$}\\
& = \tau(a)(x).
\end{align*}
One can show similarly that $\omega_{(a,c)}(S, z) = \tau(c)(z)$ holds for all $z \in S$. Hence $\omega_{(a,c)} \in \Omega(\tau(a), \tau(c))$.
\begin{claim-num}
\label{claim:coupling}
For all $(a, c) \in R$, we have
\[
\support(\omega_{(a,c)}) = \bigcup_{b \in b_{(a, c)}} \big( \support(P_{st}(a, b)) \bowtie \support(P_{tu}(b, c)) \big) \subseteq R.
\]
\end{claim-num}
\begin{proof}[of Claim~\ref{claim:coupling}]
Assume that $(a, c) \in R$.  First, we show $\support(\omega_{(a,c)}) \supseteq \bigcup_{b \in b_{(a, c)}} \support(P_{st}(a, b)) \bowtie \support(P_{tu}(b, c))$. Let $(x, z) \in \support(P_{st}(a, b))$ $\bowtie \support(P_{tu}(b, c))$ for some $b \in b_{(a, c)}$.  Then there exists $y \in S$ such that $(x, y) \in \support(P_{st}(a, b))$ and $(y, z) \in \support(P_{tu}(b, c))$.  By the definition of $\omega_{(a,c)}$, we have $(x, z) \in \support(\omega_{(a,c)})$.
To show the other inclusions, let $(x, z) \in \support(\omega_{(a,c)})$.  Then there exist $b \in b_{(a, c)}$ and $y \in S$ such that  $(x, y) \in \support(P_{st}(a, b))$ and $(y, z) \in \support(P_{tu}(b, c))$. Therefore, $(x, z) \in \support(P_{st}(a, b)) \bowtie \support(P_{tu}(b, c))$.
Moreover, since $b \in b_{(a, c)}$, we have $(a, b) \in R_{st}$ and $(b, c) \in R_{tu}$. It follows that $(x, y) \in R_{st}$ and $(y, z) \in R_{tu}$.  Hence, $(x, z) \in R$. Thus, $\support(\omega_{(a,c)}) \subseteq R$. This proves Claim~\ref{claim:coupling}.
\end{proof}


Let $P \in \mathcal{P}$ be a policy such that $P(a,c) = \omega_{(a,c)}$ for all $(a, c) \in R$.
We have $(s, u) \in R_{st} \bowtie R_{tu} = R$.  By Claim~\ref{claim:coupling}, for all $(x, z)$ reachable from $(s, u)$ in $<S \times S, P>$, we have $(x, z) \in R$.  Let $C$ be any closed communication class that $(s, u)$ can reach in $<S \times S, P>$.  To show that $s \simeq u$ it suffices to show that $C \cap S^2_\Delta \neq \varnothing$.

Let $(x_1, z_1) \in C$, then $(x_1, z_1) \in R$ and there exists $y_1 \in b_{(x_1, z_1)}$.  
By Claim~\ref{claim:s-delta}, $(x_1, y_1) \in R_{st}$ has a path to $S^2_\Delta$ in $<S \times S, P_{st}>$.
Let $(x_1, y_1), \ldots, (x_n, y_n)$ be this path in $<S \times S, P_{st}>$, where $x_n = y_n$.
Since $y_1, \ldots, y_n$ is a path in the original Markov chain $<S, \tau>$, there is also a path $z_1, \ldots, z_n$ in $<S, \tau>$ such that $(y_1, z_1), \ldots, (y_n, z_n)$ is a path in $<S \times S, P_{tu}>$ and $(y_i, z_i) \in R_{tu}$ for all $1 \leq i \leq n$.  Then, by Claim~\ref{claim:coupling}, we have $(x_i, z_i) \in \support(P(x_{i-1}, z_{i-1}))$ for all $2 \leq i \leq n$.  Hence, there exists a path $(x_1, z_1), \ldots, (x_n, z_n)$ in $<S \times S, P>$.

Similarly, by Claim~\ref{claim:s-delta}, $(x_n, z_n) = (y_n, z_n) \in R_{tu}$ has a path to $S^2_\Delta$ in $<S \times S, P_{tu}>$.  Let $(y_n, z_n), \ldots, (y_m, z_m)$ be this path in $<S \times S, P_{tu}>$, where $y_m = z_m$.  Furthermore, since $y_n, \ldots, y_m$ is a path in the original Markov chain $<S, \tau>$ and $P_{st}$ is an $S^2_\Delta$-closed policy, there is also a path $x_n, \ldots, x_m$ in $<S, \tau>$ such that $(x_n, y_n), \ldots, (x_m, y_m)$ is a path in $<S \times S, P_{st}>$ and $(x_i, y_i) \in R_{st} \cap S^2_\Delta$ for all $n \leq i \leq m$.  Then, by Claim~\ref{claim:coupling}, we have $(x_i, z_i) \in \support(P(x_{i-1}, z_{i-1}))$ for all $n+1 \leq i \leq m$.  Hence, there exists a path $(x_n, z_n), \ldots, (x_m, z_m)$ in $<S \times S, P>$.  See Figure~\ref{figure:equivalence}.
Thus, we have $(x_m, z_m) \in C$ and $x_m = z_m$, that is, $(x_m, z_m) \in C \cap S^2_\Delta$, as required.
\qed \end{proof}

\begin{figure}[ht]
\begin{center}
\begin{tikzpicture}[decoration={snake, pre length=1pt, post length=2pt, amplitude=1.3pt}, yscale=0.6, xscale=0.7]
    \node (0) at (2, 3) {$\simeq$};
    \node (0) at (4, 3) {$\simeq$};
    \node (0) at (2, 5) {$\simeq$};
    \node (0) at (4, 5) {$\simeq$};
    \node[smallstate] (1a) at (1,5) {$s$};
    \node[smallstate] (2a) at (3,5) {$t$};
    \node[smallstate] (3a) at (5,5) {$u$};
    \node[smallstate] (1) at (1,3) {$x_1$};
    \node[smallstate] (2) at (3,3) {$y_1$};
    \node[smallstate] (3) at (5,3) {$z_1$};
    \node[smallstate] (4) at (2,1) {$x_n$};
    \node (0) at (3, 1) {$\simeq$};
    \node[smallstate] (5) at (4,1) {$z_n$};
    \node[smallstate] (6) at (3,-1) {$x_m$};

    \draw[->,decorate] (1a) to (1);
    \draw[->,decorate] (2a) to (2);
    \draw[->,decorate] (3a) to (3);
    \draw[->,decorate] (1) to (4);
    \draw[->,decorate] (2) to (4);
    \draw[->,decorate] (3) to (5);
    \draw[->,decorate] (4) to (6);
    \draw[->,decorate] (5) to (6);
\end{tikzpicture}
\end{center}
\caption{Illustration of the proof of Lemma~\ref{lemma:rpb-is-pb}}
\label{figure:equivalence}
\end{figure}

\begin{proof}[of Proposition~\ref{proposition:robust-probabilistic-bisimulation}]
Let $s$, $t \in S$.  Assume that $s \simeq t$.  According to Lemma~\ref{lemma:rpb-is-pb}, $\robust$ is a bisimulation.  By Definition~\ref{definition:robust-probabilistic-bisimulation} we need to show that there exists a policy $P \in \mathcal{P}$ such that $\mathord{\simeq}$ supports a path from $(s, t)$ to $S^2_\Delta$ in $<S \times S, P>$.

Let $P \in \mathcal{P}$ be the policy such that $(s, t)$ reaches $S^2_\Delta$ with probability $1$ in $<S \times S, P>$.  Write $s_1 = s$ and $t_1 = t$.  Since $(s, t)$ reaches $S^2_\Delta$ with probability $1$ in $<S \times S, P>$, there is a path $(s_1, t_1), \ldots, (s_n, t_n)$ in $<S \times S, P>$, with $s_n = t_n$ and for all $i \ls n$, $s_i \neq t_i$.  For all $i \leq n$, $(s_i, t_i)$ is reachable from $(s, t)$.  Therefore, $(s_i, t_i)$ must reach $S^2_\Delta$ with probability $1$ in $<S \times S, P>$.  Consequently, $s_i \simeq t_i$.  Similarly, for each $(u, v) \in \support(P(s_i, t_i))$, $(u, v)$ must reach $S^2_\Delta$ with probability $1$ in $<S \times S, P>$.  Hence, $u \simeq v$ and, as a result, $\support(P(s_i, t_i)) \subseteq \mathord{\simeq}$.  Thus, $\mathord{\simeq}$ supports a path from $(s, t)$ to $S^2_\Delta$ in $<S \times S, P>$.
\qed \end{proof}

\begin{proposition}
\label{proposition:nwc-equivalence-relation}
For all $\mu$, $\nu \in \Sreal(X)$, given an equivalence relation $R \subseteq X \times X$ such that for all $R$-equivalence classes $A$, $\mu(A) = \nu(A)$, one can compute in $\mathcal{O}(|R|)$ time a coupling $\omega \in \Creal(\mu, \nu)$ such that $\support(\omega) \subseteq R$.
\end{proposition}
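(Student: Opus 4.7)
The plan is to apply the classical North-West Corner (NWC) construction, but separately on each $R$-equivalence class. The hypothesis $\mu(A) = \nu(A)$ for every equivalence class is precisely what makes such a per-class construction well-posed, and the support constraint $\support(\omega) \subseteq R$ will then be automatic.

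First, I would identify the set of $R$-equivalence classes $A_1, \ldots, A_m$. Given $R$ as input, this partitioning can be done in $\mathcal{O}(|R|)$ time: for each $x \in X$, collect the block of elements $R$-related to $x$. Then, for each class $A_i$, I would list its elements as $a_{i,1}, \ldots, a_{i,k_i}$ in some fixed order and apply the NWC method to the restricted marginals $\mu|_{A_i}$ and $\nu|_{A_i}$. Because $\mu(A_i) = \nu(A_i)$ by hypothesis, the two restricted marginals have equal total mass, so the standard NWC construction yields a subdistribution $\omega_i$ on $A_i \times A_i$ whose row marginals equal $\mu|_{A_i}$ and whose column marginals equal $\nu|_{A_i}$. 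Clearly $\support(\omega_i) \subseteq A_i \times A_i \subseteq R$.

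I would then set $\omega(x,y) = \omega_i(x,y)$ when $(x,y) \in A_i \times A_i$ for some $i$, and $\omega(x,y) = 0$ otherwise. Correctness is a short verification: for each $x \in X$, letting $A_i$ be the class containing $x$, we have $\omega(x, X) = \omega_i(x, A_i) = \mu(x)$, and similarly $\omega(X, x) = \nu(x)$; hence $\omega \in \Creal(\mu, \nu)$. By construction $\support(\omega) \subseteq \bigcup_i A_i \times A_i \subseteq R$.

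For the runtime, the key observation is that a single execution of NWC on a class $A_i$ of size $k_i$ terminates in at most $2k_i - 1$ iterations, since each iteration sets either a row or a column marginal to zero; thus it runs in $\mathcal{O}(k_i)$ time. Summing over all classes, the total NWC work is $\mathcal{O}(\sum_i k_i) = \mathcal{O}(|X|)$. Since $R$ is reflexive, $|R| \geq |X|$, so combined with the $\mathcal{O}(|R|)$ cost of extracting the partition, the overall running time is $\mathcal{O}(|R|)$. The main thing to watch is the bookkeeping: avoiding any pass over $X \times X$ (which could be $\Theta(|X|^2)$, larger than $|R|$ if $R$ is a small equivalence relation) requires that we represent $\omega$ sparsely, listing only the at most $2k_i - 1$ nonzero entries produced within each class.
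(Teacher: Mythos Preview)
Your proposal is correct and is essentially the same argument as the paper's: restrict $\mu$ and $\nu$ to each $R$-equivalence class, run the North-West corner method on each class in $\mathcal{O}(|A|)$ time, and take the union (sum) of the resulting per-class couplings. Your additional remarks on sparse representation and the $2k_i-1$ bound are more explicit than the paper's, but the construction and correctness verification match exactly.
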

\begin{proof}
Let $\mu$, $\nu \in \Sreal(X)$ and $R \subseteq X \times X$ be an equivalence relation such that for all $R$-equivalence classes $A$, $\mu(A) = \nu(A)$.  Let $A \subseteq X$ be an $R$-equivalence class and $\mu_A$, $\nu_A \in \Sreal(A)$ such that
\begin{align*}
\mu_A(x) & = \left \{
\begin{array}{ll}
\mu(x) & \hspace{1cm} \mbox{if $x \in A$}\\
0 & \hspace{1cm} \mbox{otherwise}
\end{array}
\right .\\
\nu_A(x) & = \left \{
\begin{array}{ll}
\nu(x) & \hspace{1cm} \mbox{if $x \in A$}\\
0 & \hspace{1cm} \mbox{otherwise.}
\end{array}
\right .
\end{align*}
Since $\mu(A) = \nu(A)$ and, thus, $\mu_A(X) = \nu_A(X)$, we know that there exists a coupling of $\mu_A$ and $\nu_A$ \cite[Lemma~1]{F56}.  The North-West corner method \cite{H41} constructs a coupling $\omega_A \in \Creal(\mu_A, \nu_A)$ in $\mathcal{O}(|A|)$ time.  Note that $\support(\omega_A) \subseteq A \times A \subseteq R$.

Let $\omega$ be the sum of $\omega_A$ over all $R$-equivalence classes $A$.  Let $x \in X$ and $B \subseteq X$ be the $R$-equivalence class such that $x \in B$.  Then for all $y \in X$, we have $\omega(x, y) = \omega_B(x, y)$.  Therefore, $\omega \in \Creal(\mu, \nu)$ such that $\support(\omega) \subseteq R$.
\qed \end{proof}




For $\mu$, $\nu \in \Sreal(S)$ and an equivalence relation $R \subseteq S \times S$ such that for all $R$-equivalence classes $A$, $\mu(A) = \nu(A)$, we say that $\omega \in \Creal(\mu, \nu)$ is a \emph{maximal $R$-support coupling} if $\support(\omega) = (\support(\mu) \times \support(\nu)) \cap R$.  Moreover, we say that $P \in \mathcal{P}$ is a \emph{maximal $R$-support policy} if for all $(s, t) \in R \cap S^2_{0?}$, the coupling $P(s, t)$ is a maximal $R$-support coupling, that is, $\support(P(s, t)) = \mathrm{Post}((s, t)) \cap R$.

\begin{proposition}
\label{proposition:maximal-support-coupling}
For all $\mu$, $\nu \in \Sreal(X)$, given an equivalence relation $R \subseteq X \times X$ such that for all $R$-equivalence classes $A$, $\mu(A) = \nu(A)$, there exists a maximal $R$-support coupling $\omega \in \Creal(\mu, \nu)$.
\end{proposition}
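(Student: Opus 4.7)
The plan is to construct $\omega$ explicitly on each $R$-equivalence class via a rescaled product coupling, bypassing Proposition~\ref{proposition:nwc-equivalence-relation} (whose North-West corner construction typically yields a sparse support and is therefore unsuitable when we need \emph{maximal} support). The hypothesis $\mu(A) = \nu(A)$ for each equivalence class $A$ is exactly what makes a product-type construction work with matching marginals after rescaling.

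First I would partition $X$ into its $R$-equivalence classes $A_1, \ldots, A_k$. For each class $A = A_i$, I would set
\[
\omega_A(x, y) =
\begin{cases}
\dfrac{\mu(x)\, \nu(y)}{\mu(A)} & \text{if } x, y \in A \text{ and } \mu(A) > 0,\\
0 & \text{otherwise.}
\end{cases}
\]
Then I would define $\omega(x, y) = \sum_{i=1}^k \omega_{A_i}(x, y)$; note that each $(x,y) \in X \times X$ lies in at most one $A_i \times A_i$, so at most one summand is nonzero.

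Next I would verify the marginals. For $x \in X$, letting $A$ be the equivalence class containing $x$, if $\mu(A) > 0$ we compute
\[
\omega(x, X) = \sum_{y \in A} \frac{\mu(x)\, \nu(y)}{\mu(A)} = \frac{\mu(x)\, \nu(A)}{\mu(A)} = \mu(x),
\]
using $\mu(A) = \nu(A)$; the case $\mu(A) = 0$ forces $\mu(x) = 0$ and gives $\omega(x, X) = 0 = \mu(x)$. The argument for $\omega(X, y) = \nu(y)$ is symmetric. Hence $\omega \in \Creal(\mu, \nu)$.

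Finally I would check the support. By construction, $(x, y) \in \support(\omega)$ iff there is an equivalence class $A$ with $x, y \in A$, $\mu(x) > 0$, and $\nu(y) > 0$; equivalently, iff $(x, y) \in R$ with $x \in \support(\mu)$ and $y \in \support(\nu)$. Thus $\support(\omega) = (\support(\mu) \times \support(\nu)) \cap R$, as required. I do not anticipate a real obstacle here: the only mildly delicate point is correctly handling degenerate classes with $\mu(A) = \nu(A) = 0$, which is dispatched by the case split in the definition of $\omega_A$.
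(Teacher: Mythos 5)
Your proof is correct, and it takes a genuinely different route from the paper's. The paper also works class by class, but instead of a closed-form formula it uses a two-stage construction: it first sweeps over every pair $(u,v) \in (\support(\mu) \times \support(\nu)) \cap R$ and deposits a small positive mass $p = \min\bigl(\mu(u)/|L_u^{(1)}|, \nu(v)/|L_v^{(2)}|\bigr)$ on each (where $L_u^{(1)}, L_v^{(2)}$ count the partners of $u$ and $v$ in that set), debiting the marginals as it goes; this guarantees the support is all of $(\support(\mu) \times \support(\nu)) \cap R$ but leaves residual subdistributions $\mu', \nu'$ that still agree on equivalence classes, which are then coupled by invoking Proposition~\ref{proposition:nwc-equivalence-relation} (the North-West corner construction), and the two pieces are summed. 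Your class-wise rescaled product coupling $\omega_A(x,y) = \mu(x)\nu(y)/\mu(A)$ achieves both goals in one step: the marginal computation is a one-line consequence of $\mu(A)=\nu(A)$, the support is exactly right by inspection, and the degenerate case $\mu(A)=0$ is handled cleanly. Your argument is shorter, needs no auxiliary lemma, and is just as explicit and efficiently computable; the paper's additive two-stage scheme is somewhat more flexible in spirit (it would adapt to prescribing supports that are not products of marginal supports within a class), but that generality is not needed here. One cosmetic note: the proposition is stated for $\mu, \nu \in \Sreal(X)$ (subprobability distributions), which your construction handles without change since $\omega(X \times X) = \mu(X) \leq 1$.
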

\begin{proof}
Let $\mu$, $\nu \in \Sreal(X)$ and $R \subseteq X \times X$ be an equivalence relation such that for all $R$-equivalence classes $A$, $\mu(A) = \nu(A)$.  For each $x \in X$, let $L_{x}^{(1)}$ be the set $\{ s \in X \mid (x, s) \in (\support(\mu) \times \support(\nu)) \cap R \}$ and $L_{x}^{(2)}$ be the set $\{ s \in X \mid (s, x) \in (\support(\mu) \times \support(\nu)) \cap R \}$.  We assign $\omega_{1}$, $\mu'$ and $\nu'$ as follows:
\begin{lstlisting}[numberstyle=\tiny,numbers=left,xleftmargin=1.5em]
$\mu' \leftarrow \mu$
$\nu' \leftarrow \nu$
for each $(u, v) \in (\support(\mu) \times \support(\nu)) \cap R$
  $\displaystyle p \leftarrow \min \left( \frac{\mu(u)}{|L_{u}^{(1)}|}, \frac{\nu(v)}{|L_{v}^{(2)}|} \right)$
  $\omega_{1}(u, v) \leftarrow p$
  $\mu'(u) \leftarrow \mu'(u) - p$
  $\nu'(v) \leftarrow \nu'(v) - p$
\end{lstlisting}
Initially, for all $R$-equivalence classes $A$, $\mu'(A) = \nu'(A)$.  In each iteration of the loop above, $(u, v) \in R$, and therefore lines~6 and 7 preserve this property.  At the end we have $\support(\omega_{1}) = (\support(\mu) \times \support(\nu)) \cap R$.  We can then construct a coupling $\omega_{2} \in \Creal(\mu', \nu')$ with $\support(\omega_{2}) \subseteq R$ as described in Proposition~\ref{proposition:nwc-equivalence-relation}.  Define $\omega = \omega_{1} + \omega_{2}$.  Observe that $\omega \in \Creal(\mu, \nu)$ and $\support(\omega) = (\support(\mu) \times \support(\nu)) \cap R$.
\qed \end{proof}

\begin{proposition}
\label{proposition:maximal-support-policy}
For any bisimulation $R \subseteq S \times S$, a maximal $R$-support policy $P \in \mathcal{P}$ exists.
\end{proposition}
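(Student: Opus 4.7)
\begin{proofsketch}
The plan is to construct the policy $P$ pointwise, splitting on which of the four parts of $S \times S$ the pair $(s,t)$ lies in, and to reduce the only nontrivial case to Proposition~\ref{proposition:maximal-support-coupling}.

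First, for $(s,t) \in S^2_1$ I would put $P(s,t)$ equal to the Dirac distribution on $(s,t)$; this satisfies the definition of $\mathcal{P}$ on $S^2_1$. For $(s,s) \in S^2_\Delta$ I would take the $S^2_\Delta$-closed coupling $\omega_{\tau(s)} \in \Creal(\tau(s), \tau(s))$. For $(s,t) \in S^2_{0?} \setminus R$, the maximality requirement imposes no constraint, so any coupling in $\Creal(\tau(s), \tau(t))$ will do; one exists because $\tau(s)$ and $\tau(t)$ are probability distributions on~$S$ (for instance the product coupling, or the one obtained by the North-West corner method).

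The main case is $(s,t) \in R \cap S^2_{0?}$. Here I would invoke Proposition~\ref{proposition:maximal-support-coupling} on $\mu = \tau(s)$, $\nu = \tau(t)$ with the equivalence relation~$R$. Its hypothesis demands that $\tau(s)(A) = \tau(t)(A)$ for every $R$-equivalence class~$A$; this is exactly the Larsen–Skou condition, which $R$ enjoys because it is a bisimulation (recall the equivalence with Definition~\ref{definition:probabilistic-bisimilary} noted after that definition, i.e.\ \cite[Theorem~4.6]{JL91}). Proposition~\ref{proposition:maximal-support-coupling} then yields $\omega_{st} \in \Creal(\tau(s), \tau(t))$ with $\support(\omega_{st}) = (\support(\tau(s)) \times \support(\tau(t))) \cap R = \mathrm{Post}((s,t)) \cap R$, and I set $P(s,t) = \omega_{st}$.

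It is then immediate that $P \in \mathcal{P}$, since on each of $S^2_1$, $S^2_\Delta$, and $S^2_{0?}$ the constructed $P(s,t)$ meets the defining conditions of a policy, and by construction $P$ is a maximal $R$-support policy. I do not foresee a genuine obstacle: the only substantive step is verifying that the bisimulation hypothesis on~$R$ supplies the class-mass equality required by Proposition~\ref{proposition:maximal-support-coupling}, which is precisely the equivalence between the two standard formulations of probabilistic bisimulation already cited in the preliminaries.
\qed
\end{proofsketch}
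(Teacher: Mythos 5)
Your proposal is correct and follows essentially the same route as the paper: the paper's proof simply defines $P(s,t)$ to be a maximal $R$-support coupling for all $(s,t) \in R \cap S^2_{0?}$ and invokes Proposition~\ref{proposition:maximal-support-coupling}. You spell out the remaining (trivial) cases and, usefully, make explicit the one substantive point the paper leaves implicit, namely that the class-mass equality $\tau(s)(A) = \tau(t)(A)$ required by Proposition~\ref{proposition:maximal-support-coupling} follows from $R$ being a bisimulation via \cite[Theorem~4.6]{JL91}.
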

\begin{proof}
Let $R \subseteq S \times S$ be a bisimulation. Define $P \in \mathcal{P}$ to be a policy such that for all $(s, t) \in R \cap S^2_{0?}$, $P(s, t) \in \Creal(\tau(s), \tau(t))$ is a maximal $R$-support coupling.  Such a policy $P$ exists by Proposition~\ref{proposition:maximal-support-coupling}.  It is a maximal $R$-support policy.
\qed \end{proof}

\begin{proof}[of Proposition~\ref{proposition:robust-bisimilarity}]
Let $R \subseteq S \times S$ be a robust bisimulation and $P \in \mathcal{P}$ be an $S^2_\Delta$-closed maximal $R$-support policy.  Let $(s, t) \in R$.  Then there exists a policy $P_{st} \in \mathcal{P}$ such that $R$ supports a path from $(s, t)$ to $S^2_\Delta$ in $<S \times S, P_{st}>$.  Thus, $R$ supports the same path in $<S \times S, P>$ as well.  Therefore, for every $(s, t) \in R$, $R$ supports a path from $(s, t)$ to $S^2_\Delta$ in $<S \times S, P>$.

Let $s$, $t \in S$.  Assume that $(s, t) \in R$.  According to, for example, \cite[Theorem~10.27]{BK08}, $(s, t)$ almost surely reaches a closed communication class in $<S \times S, P>$.  By Proposition~\ref{proposition:closed-communication-class}, a closed communication class, say $C$, is a subset of $S^2_{1}$, $S^2_{0,\tau}$, or $S^2_{\Delta}$.  Since $\support(P(u, v)) \subseteq R$, for all $(u, v) \in R$, $(s, t)$ cannot leave $R$. By the definition of robust bisimilarity, we know that $S^2_{1} \cap \robust = \varnothing$, thus, $S^2_{1} \cap R = \varnothing$.  Observe that for all closed communication classes $C$ of $<S \times S, P>$ with $C \subseteq S^2_{0,\tau}$, we have $C \cap R = \varnothing$, as each $(s, t) \in R$ reaches $S^2_{\Delta}$ in $<S \times S, P>$.  Therefore, we can conclude that $(s, t)$ reaches $S^2_{\Delta}$ with probability $1$ in $<S \times S, P>$.
\qed \end{proof}

\section{Algorithm}

\begin{proof}[of Proposition~\ref{proposition:monotonicity}]
Clearly, $\mathrm{Bisim}$ is monotone with respect to $\subseteq$.

Let $A$, $B \subseteq S \times S$, with $A \subseteq B$. Then for all $(s, t) \in \mathrm{Filter}(A)$ there exists a policy $P \in \mathcal{P}$ such that $A$ supports a path from $(s, t)$ to $S^2_{\Delta}$ in $<S \times S, P>$. Since $A \subseteq B$, $B$ supports the same path from $(s, t)$ to $S^2_{\Delta}$ in $<S \times S, P>$. Thus, $(s, t) \in \mathrm{Filter}(B)$.

Let $s$, $t$, $u \in S$. Let $A = \{(s, s), (t, t), (u, u), (s, t), (t, s)\}$ and $B = \{(s, s), (t, t),$ $(u, u), (s, t), (t, s), (t, u),$ $(u, t)\}$.  $A$ and $B$ are symmetric and reflexive and, thus, can be visualized as an undirected graph as shown in Figure~\ref{figure:monotone}.  Observe that $A \subseteq B$, however, $\mathrm{Prune}(A) = \{(s, s), (t, t), (u, u), (s, t), (t, s)\} \not\subseteq \mathrm{Prune}(B) = \{(s, s), (t, t), (u, u)\}$.  Thus, $\mathrm{Prune}$ is not monotone.
\qed \end{proof}

\begin{proof}[of Proposition~\ref{proposition:fp-refine-robust}]
Let $R \subseteq S \times S$.  We prove the two implications.

Assume that $R$ is a robust bisimulation.  It is sufficient to show that $R$ is a fixed point of $\mathrm{Filter}$, $\mathrm{Prune}$ and $\mathrm{Bisim}$.  By the definitions of the functions, $\mathrm{Filter}(R) \subseteq R$, $\mathrm{Prune}(R) \subseteq R$ and $\mathrm{Bisim}(R) \subseteq R$.  Since $R$ is a robust bisimulation, $R \subseteq \mathrm{Filter}(R)$, $R \subseteq \mathrm{Prune}(R)$ and $R \subseteq \mathrm{Bisim}(R)$.

Assume that $R$ is a fixed point of $\mathrm{Refine}$, thus $R = \mathrm{Refine}(R)$.  It follows that $R = \mathrm{Bisim}(R)$ and $R = \mathrm{Filter}(R)$.  Then, $R$ is a bisimulation and for every $(s, t) \in R$ there exists a policy $P \in \mathcal{P}$ such that $R$ supports a path from $(s, t)$ to $S^2_\Delta$ in $<S \times S, P>$.  Therefore, $R$ is a robust bisimulation.
\qed \end{proof}

\begin{proposition}
\label{proposition:filter-max-policy}
For all $R \in [S^2_{\Delta}, \mathord{\sim}]_\mathcal{B}$ and for all maximal $R$-support policies $P \in \mathcal{P}$, we have $\mathrm{Filter}(R) = \{\, (s, t) \in R \mid \exists$ path from $(s, t)$ to $S^2_{\Delta}$ in $<S \times S, P> \,\}$.
\end{proposition}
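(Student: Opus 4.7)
\begin{proofsketch}
I will prove the two set inclusions separately, exploiting the fact that the maximal $R$-support policy realizes exactly the transitions that any other policy restricted to $R$ could realize. Write $F_P(R) = \{(s,t)\in R \mid \exists \mbox{ path from } (s,t) \mbox{ to } S^2_\Delta \mbox{ in } <S\times S, P>\}$.

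For the inclusion $\mathrm{Filter}(R) \subseteq F_P(R)$, let $(s,t) \in \mathrm{Filter}(R)$ and let $P' \in \mathcal{P}$ together with a path $(u_1,v_1),\ldots,(u_n,v_n)$ in $<S\times S, P'>$ witness membership, with $(u_1,v_1) = (s,t)$ and $(u_n,v_n) \in S^2_\Delta$. Truncate so that $(u_i,v_i) \notin S^2_\Delta$ for $i < n$. For each $i < n$, on the one hand $(u_{i+1},v_{i+1}) \in \support(P'(u_i,v_i)) \subseteq \support(\tau(u_i)) \times \support(\tau(v_i)) = \mathrm{Post}((u_i,v_i))$, and on the other hand $R$ supports the path under $P'$, so $(u_{i+1},v_{i+1}) \in R$. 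Since $(u_i,v_i) \in R \cap S^2_{0?}$, the maximality of $P$ gives $\support(P(u_i,v_i)) = \mathrm{Post}((u_i,v_i)) \cap R$, and we conclude $(u_{i+1},v_{i+1}) \in \support(P(u_i,v_i))$. Hence the same path exists in $<S\times S, P>$, showing $(s,t) \in F_P(R)$.

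For the converse inclusion $F_P(R) \subseteq \mathrm{Filter}(R)$, let $(s,t) \in F_P(R)$ witnessed by a path $(u_1,v_1),\ldots,(u_n,v_n)$ in $<S\times S, P>$, again truncated so that $(u_i,v_i) \notin S^2_\Delta$ for $i < n$. I argue by induction on $i$ that $(u_i,v_i) \in R$: the base case is $(u_1,v_1) = (s,t) \in R$; for the step, $(u_i,v_i) \in R \cap S^2_{0?}$ yields $\support(P(u_i,v_i)) = \mathrm{Post}((u_i,v_i)) \cap R \subseteq R$, so $(u_{i+1},v_{i+1}) \in R$. Consequently $R$ supports every step of the path under $P$: for $i < n$ we just showed $\support(P(u_i,v_i)) \subseteq R$, and $(u_n,v_n) \in S^2_\Delta \subseteq R$. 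Taking $P$ itself as the witness policy, we conclude $(s,t) \in \mathrm{Filter}(R)$.

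The only subtle point I foresee is the convention about the last state of a path: Definition~\ref{definition:robust-probabilistic-bisimulation}'s notion of $R$ supporting a path requires $\support(P(u_n,v_n)) \subseteq R$ even at the terminal state $(u_n,v_n) \in S^2_\Delta$. This is harmless because the truncation argument ensures $(u_n,v_n) \in S^2_\Delta$, where one may assume $P$ is $S^2_\Delta$-closed (so the support lies in $S^2_\Delta \subseteq R$); equivalently, the trivial one-step-forward requirement at $(u_n,v_n)$ is a non-obstacle since any maximal $R$-support policy can be taken $S^2_\Delta$-closed without affecting reachability of $S^2_\Delta$. Beyond this bookkeeping, the proof is a direct unrolling of the definitions.
\qed
\end{proofsketch}
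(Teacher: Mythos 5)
Your proof is correct and follows essentially the same route as the paper's: both establish the two inclusions by transferring the witnessing path to and from the maximal $R$-support policy via $\support(P(u_i,v_i)) = \mathrm{Post}((u_i,v_i)) \cap R$, with an induction along the path for the converse direction. Your explicit truncation of the path and your remark about the support condition at the terminal diagonal state are points the paper's own proof glosses over, and your resolution (passing to an $S^2_\Delta$-closed maximal $R$-support policy, which does not affect reachability of $S^2_\Delta$) is sound.
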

\begin{proof}
Let $R \in [S^2_{\Delta}, \mathord{\sim}]_\mathcal{B}$ and $(s_1, t_1) \in R$.  Let $P \in \mathcal{P}$ be a maximal $R$-support policy.  We prove the two inclusions.

Assume that $(s_1, t_1) \in \mathrm{Filter}(R)$.  Then there exists a policy $P_{1} \in \mathcal{P}$ such that $R$ supports a path $(s_1, t_1), \ldots, (s_n, t_n)$ in $<S \times S, P_{1}>$, where $s_n = t_n$.  Hence, for all $1 \leq i \leq n$ we have $(s_i, t_i) \in R$.  Thus, the same path $(s_1, t_1), \ldots, (s_n, t_n)$ exists in $<S \times S, P>$.

Assume that there exists a path from $(s_1, t_1)$ to $S^2_{\Delta}$ in $<S \times S, P>$.  Below we prove, by induction, that for all $1 \leq i \leq n$ we have $(s_i, t_i) \in R$ and $\support(P(s_i, t_i)) \subseteq R$.  Therefore, $R$ supports the same path $(s_1, t_1), \ldots, (s_n, t_n)$ in $<S \times S, P>$. Thus, $(s_1, t_1) \in \mathrm{Filter}(R)$.

In the base case, we know that $(s_1, t_1) \in R$ and $\support(P(s_1, t_1)) \subseteq R$, as $P$ is a maximal $R$-support policy.  In the inductive case, assume that $(s_i, t_i) \in R$ and $\support(P(s_i, t_i)) \subseteq R$.  Since $(s_{i+1}, t_{i+1}) \in \support(P(s_i, t_i))$, we have $(s_{i+1}, t_{i+1}) \in R$.  Hence, $\support(P(s_{i+1}, t_{i+1})) \subseteq R$.
\qed \end{proof}

In the following we use implicitly the characterization of $\mathrm{Filter}$ from Proposition~\ref{proposition:filter-max-policy}.

\begin{proposition}
\label{proposition:filter-sr}
For all $R \in [S^2_{\Delta}, \mathord{\sim}]_\mathcal{B}$, 
$\mathrm{Filter}(R)$ is symmetric and reflexive.
\end{proposition}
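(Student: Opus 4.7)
The plan is to handle reflexivity and symmetry separately; both follow almost immediately from the definitions, with the only delicate point being the construction of a ``swapped'' witness policy for symmetry.

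For reflexivity, I would pick any $s \in S$. Since $S^2_{\Delta} \subseteq R$, we have $(s,s) \in R$, and the length-one path $(s,s)$ already lies in $S^2_{\Delta}$, so I need only exhibit some $P \in \mathcal{P}$ with $(s,s) \in R$ and $\support(P(s,s)) \subseteq R$. The $S^2_{\Delta}$-closed coupling $\omega_{\tau(s)} \in \Creal(\tau(s),\tau(s))$ has support in $S^2_{\Delta} \subseteq R$ and can be used as $P(s,s)$; the other entries of $P$ can be filled in arbitrarily (respecting the definition of $\mathcal{P}$). Hence $(s,s) \in \mathrm{Filter}(R)$.

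For symmetry, suppose $(s,t) \in \mathrm{Filter}(R)$ with witnessing policy $P$ and $R$-supported path $(s,t) = (s_1,t_1), \ldots, (s_n,t_n)$ with $s_n = t_n$. Since $R \in [S^2_{\Delta},\mathord{\sim}]_\mathcal{B}$, $R$ is an equivalence relation and therefore symmetric, so all the swapped pairs $(t_i,s_i)$ lie in $R$, and the swap of any pair in $\support(P(s_i,t_i)) \subseteq R$ again lies in $R$. I then define a swapped policy $P'$ by $P'(t',s')(v,u) := P(s',t')(u,v)$ whenever $(s',t') \in R$, using the $S^2_{\Delta}$-closed coupling on $S^2_{\Delta}$ (which is symmetric under swapping) and the forced Dirac mass on $S^2_1$; this gives $P'(t',s') \in \Creal(\tau(t'),\tau(s'))$, so $P' \in \mathcal{P}$. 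The swapped sequence $(t_1,s_1), \ldots, (t_n,s_n)$ is then a path in $<S \times S, P'>$ supported by $R$ that ends at $(t_n,s_n) = (s_n,s_n) \in S^2_{\Delta}$. Hence $(t,s) \in \mathrm{Filter}(R)$.

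The main (and only) obstacle is verifying that transposing $P$ really yields a valid policy in $\mathcal{P}$ and that the swapped path remains $R$-supported; both reduce to the observation that the constraints defining $\Creal(\tau(s),\tau(t))$, $\mathcal{P}$, and $R$ are all invariant under simultaneously swapping both coordinates, which is where the equivalence/bisimulation hypothesis on $R$ is used.
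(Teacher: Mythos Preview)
Your argument is correct. Both reflexivity (via the $S^2_\Delta$-closed coupling) and symmetry (via the transposed policy $P'$) go through as you describe; the swap-invariance observations you isolate are exactly what is needed.

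The paper takes a slightly different route for symmetry. Rather than transposing the witnessing policy, it invokes the characterization of $\mathrm{Filter}(R)$ via a \emph{maximal $R$-support policy} (Proposition~\ref{proposition:filter-max-policy}): for such a fixed policy $P$, $\support(P(a,b)) = \mathrm{Post}((a,b)) \cap R$ for all $(a,b) \in R \cap S^2_{0?}$, so once the original path $(s_1,t_1),\ldots,(s_n,t_n)$ lives in $\langle S\times S, P\rangle$ and $R$ is symmetric, the swapped path $(t_1,s_1),\ldots,(t_n,s_n)$ is automatically a path in the \emph{same} Markov chain $\langle S\times S, P\rangle$, with no need to build a new policy. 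This buys a shorter argument (no well-definedness check for $P'$) at the cost of relying on the auxiliary Proposition~\ref{proposition:filter-max-policy}. Your approach is more self-contained and works directly from the definition of $\mathrm{Filter}$, which is arguably cleaner if one has not already set up the maximal-support machinery.
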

\begin{proof}
Let $R \in [S^2_{\Delta}, \mathord{\sim}]_\mathcal{B}$.  By the definition of the function, $S^2_{\Delta} \subseteq \mathrm{Filter}(R)$, hence $\mathrm{Filter}(R)$ is reflexive.  Let $s_1$, $t_1 \in S$ and $P \in \mathcal{P}$ be a maximal $R$-support policy.  Assume that $(s_1, t_1) \in \mathrm{Filter}(R)$.  Then there exists a path $(s_1, t_1), \ldots, (s_n, t_n)$ in $<S \times S, P>$, where $s_n = t_n$.  By the definition of a maximal $R$-support policy, for all $1 \leq i \leq n$ we have $(s_i, t_i) \in R$.  Since $R$ is an equivalence relation, for all $1 \leq i \leq n$ we have $(t_i, s_i) \in R$.  Thus, there exists a path $(t_1, s_1), \ldots, (t_n, s_n)$ in $<S \times S, P>$, where $t_n = s_n$.  Since $P$ is a maximal $R$-support policy, it follows from Proposition~\ref{proposition:filter-max-policy} that $(t, s) \in \mathrm{Filter}(R)$. Hence, $\mathrm{Filter}(R)$ is symmetric.
\qed \end{proof}

\begin{proposition}
\label{proposition:prune-equivalence}
For all $R \in [S^2_{\Delta}, \mathord{\sim}]$ such that $R$ is symmetric and reflexive, $\mathrm{Prune}(R)$ is an equivalence relation.
\end{proposition}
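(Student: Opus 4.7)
The plan is to first reformulate the defining condition of $\mathrm{Prune}(R)$ in a cleaner way, exploiting that $R$ is symmetric. For each $s \in S$, let $N(s) = \{\, u \in S \mid (s,u) \in R \,\}$. By symmetry of $R$, we equivalently have $N(s) = \{\, u \in S \mid (u,s) \in R \,\}$. Under this notation, the two conjuncts in the definition of $\mathrm{Prune}(R)$ assert precisely $N(t) \subseteq N(s)$ and $N(s) \subseteq N(t)$, respectively. Hence
\[
(s,t) \in \mathrm{Prune}(R) \iff (s,t) \in R \text{ and } N(s) = N(t).
\]
This reduces the three required properties of an equivalence relation to simple set-equality manipulations.

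With this characterization in hand, I would verify reflexivity, symmetry, and transitivity in turn. Reflexivity is immediate: $(s,s) \in S^2_\Delta \subseteq R$ and $N(s) = N(s)$ trivially. For symmetry, if $(s,t) \in \mathrm{Prune}(R)$ then $(t,s) \in R$ by symmetry of $R$, and $N(s) = N(t)$ clearly implies $N(t) = N(s)$, so $(t,s) \in \mathrm{Prune}(R)$. For transitivity, suppose $(s,t), (t,u) \in \mathrm{Prune}(R)$; then $N(s) = N(t) = N(u)$, and the membership $(s,u) \in R$ is obtained from $u \in N(t) = N(s)$. Combined with $N(s) = N(u)$, this yields $(s,u) \in \mathrm{Prune}(R)$.

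I do not expect any serious obstacle here; the only step requiring even momentary care is producing $(s,u) \in R$ in the transitivity argument, which is handled by the neighbourhood characterization itself rather than by any appeal to transitivity of $R$ (which is not assumed). Essentially all the work sits in the initial observation that symmetry of $R$ collapses the two-sided condition in the definition of $\mathrm{Prune}$ into a single equality of neighbourhoods.
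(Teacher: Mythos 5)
Your proof is correct and follows essentially the same direct verification as the paper, with the symmetry of $R$ doing the same work in both; your neighbourhood reformulation $(s,t) \in \mathrm{Prune}(R) \iff (s,t) \in R \wedge N(s) = N(t)$ is just a clean repackaging of the two subset conditions the paper manipulates explicitly (and it matches how the paper's implementation section describes $\mathrm{Prune}$, grouping states with $Q[s] = Q[t]$). No gaps.
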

\begin{proof}
Let $R \in [S^2_{\Delta}, \mathord{\sim}]$ be symmetric and reflexive.  It is sufficient to show that $\mathrm{Prune}(R)$ is reflexive, symmetric and transitive.

By the definition of the function, $S^2_{\Delta} \subseteq \mathrm{Prune}(R)$, hence $\mathrm{Prune}(R)$ is reflexive.

Let $s$, $t$, $u \in S$.  Assume that $(s, t) \in \mathrm{Prune}(R)$, then $\forall (t, u) \in R : (s, u) \in R$ and $\forall (u, s) \in R : (u, t) \in R$.  Since $R$ is symmetric, $\forall (u, t) \in R : (u, s) \in R$ and $\forall (s, u) \in R : (t, u) \in R$.  Thus, $(t, s) \in \mathrm{Prune}(R)$ and $\mathrm{Prune}(R)$ is symmetric.

Lastly, assume that $(s, t)$, $(t, u) \in \mathrm{Prune}(R)$.  Then $(s, t) \in R$, $(t, u) \in R$ and we have that $\forall (t, x) \in R : (s, x) \in R$, $\forall (x, s) \in R : (x, t) \in R$, $\forall (u, x) \in R : (t, x) \in R$, and $\forall (x, t) \in R : (x, u) \in R$.  Therefore, $(s, u) \in R$, $\forall (u, x) \in R : (s, x) \in R$, and  $\forall (x, s) \in R : (x, u) \in R$.  Thus, $(s, u) \in \mathrm{Prune}(R)$.  Hence, $\mathrm{Prune}(R)$ is transitive.
\qed \end{proof}

\begin{proposition}
\label{proposition:stabilise}
Given an equivalence relation $E \in [S^2_{\Delta}, \mathord{\sim}]$, $\mathrm{Bisim}(E)$ can be computed in polynomial time.
\end{proposition}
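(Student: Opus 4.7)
The plan is to show that the standard partition–refinement algorithm, initialised with the partition induced by $E$, computes $\mathrm{Bisim}(E)$ in polynomial time. Since $E$ is an equivalence relation, it determines a partition $\Pi_0$ of $S$ whose blocks are precisely the $E$-equivalence classes. Note that $\Pi_0$ is a \emph{label-respecting} partition: because $E \subseteq \mathord{\sim}$ and bisimilar states share the same label, any two states in the same $E$-block carry the same label, so no initial label-based split is needed.

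Starting from $\Pi_0$, I would iterate the following refinement step of Derisavi (used in PRISM's implementation). Given the current partition $\Pi_i$, define $\Pi_{i+1}$ by placing $s, t$ in the same block iff they are in the same block of $\Pi_i$ and, for every block $B \in \Pi_i$, $\tau(s)(B) = \tau(t)(B)$. Each step can only split blocks, so $\Pi_{i+1}$ refines $\Pi_i$. Since $|S|$ is finite, the refinement stabilises after at most $|S|$ iterations at some partition $\Pi^\ast$; let $E^\ast \subseteq S \times S$ be the associated equivalence relation. Each iteration clearly runs in time polynomial in $|S|$ (using signatures one can achieve $\mathcal{O}(|S|^2 \log |S|)$ per step, but any naive polynomial bound suffices), so the entire procedure is polynomial.

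It remains to argue correctness, i.e., $E^\ast = \mathrm{Bisim}(E)$. The inclusion $E^\ast \subseteq E$ holds because $\Pi^\ast$ refines $\Pi_0$. Since $\Pi^\ast$ is stable under the refinement step, for all $(s,t) \in E^\ast$ and every $E^\ast$-class $B$ we have $\tau(s)(B) = \tau(t)(B)$; together with $\ell(s) = \ell(t)$ (inherited from $E \subseteq \mathord{\sim}$) and the fact that $E^\ast$ is an equivalence relation, this makes $E^\ast$ a bisimulation in the sense of \cite{LS89}, hence a bisimulation in the sense of Definition~\ref{definition:probabilistic-bisimilary} by the equivalence cited after that definition. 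For maximality, let $R \subseteq E$ be any bisimulation; I would show by induction on $i$ that $R \subseteq \Pi_i$ (viewed as an equivalence relation). The base case $R \subseteq E = \Pi_0$ is the hypothesis. For the inductive step, if $(s,t) \in R$ then, since $R$ is a bisimulation and $R \subseteq \Pi_i$, we have $\tau(s)(B) = \tau(t)(B)$ for every $R$-class and hence for every $\Pi_i$-block $B$ (each such block being a union of $R$-classes), so $s, t$ are not split and $(s,t) \in \Pi_{i+1}$. Taking the limit yields $R \subseteq E^\ast$, so $E^\ast$ is indeed the largest bisimulation inside $E$.

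The only real subtlety is the maximality argument, namely the inductive proof that every bisimulation $R \subseteq E$ survives every refinement step; the rest is routine. Note that $S^2_\Delta \subseteq E^\ast$ automatically, because refinement never removes pairs of the form $(s,s)$, so $E^\ast \in [S^2_\Delta, \mathord{\sim}]_{\mathcal{B}}$ as required by the codomain of $\mathrm{Bisim}$.
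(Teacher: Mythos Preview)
Your proposal is correct and follows essentially the same approach as the paper: initialise partition refinement (the paper cites Derisavi et al.) with the partition induced by~$E$, note that $E \subseteq \mathord{\sim}$ makes the initial partition label-respecting, and observe that the resulting stable partition is the largest bisimulation contained in~$E$. Your treatment is in fact more explicit than the paper's, which merely asserts existence of the largest bisimulation inside~$E$ and correctness of the refinement, whereas you spell out the inductive maximality argument.
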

\begin{proof}
Let $E \in [S^2_{\Delta}, \mathord{\sim}]$ be an equivalence relation.  The largest bisimulation $E' \subseteq E$ exists, since the transitive closure of the union of all bisimulations $E' \subseteq E$, is also $\subseteq E$.  $\mathrm{Bisim}(E)$ can be computed in polynomial time, for example, by using the partition refinement algorithm by Derisavi et al. \cite{DHS03}. 
The algorithm takes as input an equivalence relation $E$ and returns the largest equivalence relation $E' \subseteq E$ such that for all $(s, t) \in E'$ and for all $E'$-equivalence classes $C$, we have $\tau(s)(C) = \tau(t)(C)$.  This is done by selecting a single equivalence class $X$ from the current partition at each iteration and then refining the partition by comparing $\tau(s)(X)$ for each $s \in S$, until a fixed point is reached.  Since for all $(s, t) \in E$, $\ell(s) = \ell(t)$, it follows that $E'$ is the largest bisimulation $\subseteq E$.
\qed \end{proof}

\begin{proposition}
\label{proposition:filter}
Algorithm~\ref{algorithm:filter} computes $\mathrm{Filter}$.
\end{proposition}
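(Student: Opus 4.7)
The plan is to prove both inclusions $Q_{\mathrm{final}} \subseteq \mathrm{Filter}(R)$ and $\mathrm{Filter}(R) \subseteq Q_{\mathrm{final}}$, where $Q_{\mathrm{final}}$ denotes the set returned by Algorithm~\ref{algorithm:filter}. Termination is immediate since $Q$ grows monotonically, every pair ever added lies in $R$, and $R$ is finite; thus the loop reaches a fixed point. Throughout I will invoke the characterisation given by Proposition~\ref{proposition:filter-max-policy}: after fixing a maximal $R$-support policy $P \in \mathcal{P}$ (which exists by Proposition~\ref{proposition:maximal-support-policy}), a pair $(s,t) \in R$ belongs to $\mathrm{Filter}(R)$ iff there is a path from $(s,t)$ to $S^2_\Delta$ in $<S \times S, P>$. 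The bridge between the algorithm and this characterisation is the identity $\support(P(s,t)) = \mathrm{Post}((s,t)) \cap R$ for $(s,t) \in R \cap S^2_{0?}$: the successors of such a pair in $<S \times S, P>$ coincide exactly with those used by the algorithm.

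For $Q_{\mathrm{final}} \subseteq \mathrm{Filter}(R)$, I would maintain as a loop invariant that every $(s,t) \in Q$ admits a sequence $(s,t) = (s_1,t_1), \ldots, (s_n,t_n)$ with $s_n = t_n$, every $(s_i,t_i) \in R$, and $(s_{i+1},t_{i+1}) \in \mathrm{Post}((s_i,t_i))$ for $i < n$. Initially this is trivial, taking $n = 1$, since $Q = S^2_\Delta \subseteq R$. When a new $(s,t)$ is added, by the algorithm's condition it has some successor $(s',t') \in \mathrm{Post}((s,t)) \cap Q_{\mathrm{old}}$; prepending $(s,t)$ to the sequence provided by the inductive hypothesis for $(s',t')$ yields the required sequence. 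Truncating at the first $S^2_\Delta$ entry and applying the bridging fact above produces an honest path in $<S \times S, P>$, hence $(s,t) \in \mathrm{Filter}(R)$.

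For $\mathrm{Filter}(R) \subseteq Q_{\mathrm{final}}$, I would take $(s,t) \in \mathrm{Filter}(R)$, choose a shortest path $(s,t) = (s_1,t_1), \ldots, (s_n,t_n)$ from $(s,t)$ to $S^2_\Delta$ in $<S \times S, P>$, and induct on $n$. The base case $n = 1$ is handled by initialisation, since then $(s,t) \in S^2_\Delta \subseteq Q$. For $n > 1$, the shortest-path assumption ensures $(s,t) \notin S^2_\Delta$, so $(s,t) \in R \setminus S^2_\Delta \subseteq R \cap S^2_{0?}$. The suffix is a shorter shortest path for $(s_2,t_2)$, hence the induction gives $(s_2,t_2) \in Q_{\mathrm{final}}$. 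By the bridging fact, $(s_2,t_2) \in \support(P(s,t)) \subseteq \mathrm{Post}((s,t))$, so $(s,t) \in R \cap \mathrm{Pre}(Q_{\mathrm{final}})$; since Algorithm~\ref{algorithm:filter} halts only at a fixed point of this operator, $(s,t) \in Q_{\mathrm{final}}$.

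The main obstacle is not either induction but rather cleanly aligning the purely graph-theoretic backward reachability that the algorithm performs (using $\mathrm{Pre}$, $\mathrm{Post}$ and membership in $R$) with the Markov-chain path structure appearing in the definition of $\mathrm{Filter}$. Once a maximal $R$-support policy is fixed and Proposition~\ref{proposition:filter-max-policy} is applied, this correspondence holds on the nose on $R \cap S^2_{0?}$, and both inclusions reduce to routine inductions.
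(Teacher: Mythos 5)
Your proposal is correct and follows essentially the same route as the paper: fix a maximal $R$-support policy $P$, invoke Proposition~\ref{proposition:filter-max-policy}, and relate the algorithm's $\mathrm{Pre}/\mathrm{Post}$ reachability to paths in $<S \times S, P>$ via the identity $\support(P(s,t)) = \mathrm{Post}((s,t)) \cap R$. The only (cosmetic) difference is organizational: the paper establishes the single exact loop invariant that after $n$ iterations $Q$ is precisely the set of pairs in $R$ with a path of length at most $n$ to $S^2_\Delta$, whereas you split this into a soundness invariant plus a separate shortest-path induction combined with the fixed-point termination condition.
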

\begin{proof}
Let $R \in [S^2_{\Delta}, \mathord{\sim}]_\mathcal{B}$ and $P \in \mathcal{P}$ be a maximal $R$-support policy.  We show the following loop invariant of Algorithm~\ref{algorithm:filter}: $Q = \{\, (s, t) \in R \mid \exists$ path of length $\leq n$ from $(s, t)$ to $S^2_{\Delta}$ in $<S \times S, P> \,\}$.

$Q$ is initialized to $S^2_{\Delta}$.  For all $(s, t) \in S^2_{\Delta}$, $(s, t)$ can reach $S^2_{\Delta}$ in $0$ steps in $<S \times S, P>$.  Hence, the loop invariant holds before the loop.

Assume that the loop invariant holds before an iteration of the loop, that is, $Q = Q_{\mathrm{old}} = \{\, (s, t) \in R \mid \exists$ path of length $\leq n$ from $(s, t)$ to $S^2_{\Delta}$ in $<S \times S, P> \,\}$.  Let $s$, $t \in S$ and $(s, t) \in R$.  We need to show that $(s, t)$ is added to $Q$ on line~6 if and only if there exists a shortest path of length $n + 1$ from $(s, t)$ to $S^2_{\Delta}$ in $<S \times S, P>$.  We prove the two implications.

Assume that $(s, t)$ is added to $Q$.  Then $(s, t) \in (R \cap \mathrm{Pre}(Q_{\mathrm{old}})) \setminus Q_{\mathrm{old}}$.  Thus, there is $(u, v) \in \mathrm{Post}((s, t)) \cap Q_{\mathrm{old}}$.  Since $P$ is a maximal $R$-support policy and $Q_{\mathrm{old}} \subseteq R$, we have $(u, v) \in \support(P(s, t)) \cap Q_{\mathrm{old}}$.  By the induction hypothesis and since $(s, t) \not\in Q_{\mathrm{old}}$, there exists a shortest path of length $n$ from $(u, v)$ to $S^2_{\Delta}$ in $<S \times S, P>$.  Therefore, there exists a shortest path of length $n + 1$ from $(s, t)$ to $S^2_{\Delta}$ in $<S \times S, P>$.

To prove the other implication, assume that there exists a shortest path of length $n + 1$ from $(s, t)$ to $S^2_{\Delta}$ in $<S \times S, P>$.  Let $u$, $v \in S$ and $(u, v) \in R$ such that the path is $(s, t), (u, v), \hdots ,S^2_{\Delta}$.  Then there exists a shortest path of length $n$ from $(u, v)$ to $S^2_{\Delta}$ in $<S \times S, P>$.  Hence, by the induction hypothesis, $(u, v) \in Q_{\mathrm{old}}$ and $(s, t) \in (R \cap \mathrm{Pre}(Q_{\mathrm{old}})) \setminus Q_{\mathrm{old}}$.  Therefore, $(s, t)$ is added to $Q$.

Hence, $Q = \{\, (s, t) \in R \mid \exists$ path of length $\leq n + 1$ from $(s, t)$ to $S^2_{\Delta}$ in $<S \times S, P> \,\}$ and, thus, the loop invariant is maintained in each iteration of the loop.

The loop terminates when a fixed point is reached, therefore, by the loop invariant we know that there are no pairs of states $(s, t) \in R \setminus Q$ such that $(s, t)$ can reach $S^2_{\Delta}$ in $<S \times S, P>$ with a shortest path of length $n$.  It follows that there are no pairs of states $(s, t) \in R \setminus Q$ such that $(s, t)$ can reach $S^2_{\Delta}$ in $<S \times S, P>$ with a shortest path of length $\geq n$.  Hence, $Q = \{\, (s, t) \in R \mid \exists$ path from $(s, t)$ to $S^2_{\Delta}$ in $<S \times S, P> \,\}$.
\qed \end{proof}

\begin{proposition}
\label{proposition:prune}
Algorithm~\ref{algorithm:prune} computes $\mathrm{Prune}$.
\end{proposition}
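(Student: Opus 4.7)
The plan is to prove the correctness of Algorithm~\ref{algorithm:prune} by showing that the returned set $E$ equals $\mathrm{Prune}(Q)$, establishing the two inclusions separately. The key structural observation is that pairs are only ever removed from $E$ in batches of two: whenever the inner test $(s, u) \notin Q$ fires in an iteration with outer variable $(s, t)$ and inner variable $u$, exactly the two pairs $(s, t)$ and $(t, u)$ are deleted. So to prove correctness it suffices to characterize which pairs can appear in such a batch.

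For the inclusion $\mathrm{Prune}(Q) \subseteq E$, I would argue that no element of $\mathrm{Prune}(Q)$ can ever appear in a removal batch. Suppose a batch $\{(a, b), (b, c)\}$ is removed, witnessed by $(a, b) \in Q$, $(b, c) \in Q$, and $(a, c) \notin Q$. Then $(a, b) \notin \mathrm{Prune}(Q)$ because its forward clause fails with witness $c$ (we have $(b, c) \in Q$ but $(a, c) \notin Q$), and $(b, c) \notin \mathrm{Prune}(Q)$ because its backward clause fails with witness $a$ (we have $(a, b) \in Q$ but $(a, c) \notin Q$). Consequently elements of $\mathrm{Prune}(Q)$ are never touched by the algorithm, so they remain in $E$.

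For the converse inclusion $E \subseteq \mathrm{Prune}(Q)$, I would exhibit, for each $(s, t) \in Q \setminus \mathrm{Prune}(Q)$, a specific loop iteration that deletes it. By definition of $\mathrm{Prune}$, either the forward clause fails, so there is a witness $u$ with $(t, u) \in Q$ and $(s, u) \notin Q$, in which case the iteration with outer variable $(s, t)$ and inner variable $u$ removes $(s, t)$; or the backward clause fails, so there is a witness $u$ with $(u, s) \in Q$ and $(u, t) \notin Q$, in which case the iteration with outer variable $(u, s)$ and inner variable $t$ removes $(s, t)$ as the second pair of the batch $\{(u, s), (s, t)\}$. Since the outer loop enumerates all of $Q$ and not just the current $E$, this iteration is always executed, so $(s, t)$ is removed from $E$.

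There is no significant technical obstacle; the only subtlety is to match the two symmetric ways a pair can appear in a removal batch (as the first or second member) with the two symmetric clauses defining $\mathrm{Prune}$, which the analysis above does in both directions.
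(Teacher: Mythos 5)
Your proof is correct and takes essentially the same approach as the paper: the paper's (very terse) proof simply observes that $(s, t)$ is removed from $E$ on line~5 if and only if there exists $u$ witnessing the failure of one of the two clauses in the definition of $\mathrm{Prune}$. Your write-up spells out both directions of that equivalence in more detail --- via the two-element batch structure of removals and the observation that the loops range over the unmodified $Q$ rather than over $E$ --- but it is the same argument.
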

\begin{proof}
Let $Q \in [S^2_{\Delta}, \mathord{\sim}]$.  Let $s$, $t \in S$ and $(s, t) \in Q$.  Initially, $E = Q$.  Observe that $(s, t)$ is removed from $E$ on line~5 if and only if there exists $u \in S$ such that $(t, u) \in Q \wedge (s, u) \not\in Q$ or $(u, s) \in Q \wedge (u, t) \not\in Q$.  Therefore, $E = \mathrm{Prune}(Q)$.
\qed \end{proof}

\begin{proof}[of Proposition~\ref{proposition:transitive}]
Let $R \in [\simeq, \mathord{\sim}]_\mathcal{B}$ and $s$, $t$, $u \in S$ with $(s, t)$, $(t, u) \in \mathrm{Filter}(R)$.  We show that if $t \simeq u$ then $(s, u) \in \mathrm{Filter}(R)$.  The case $s \simeq t$ is similar.

Since $\mathrm{Filter}(R) \subseteq R$, we have $(s, t)$, $(t, u) \in R$.  Since $R$ is an equivalence relation, $(s, u) \in R$.  Write $s_1 = s$ and $t_1 = t$ and $u_1 = u$.  Let $P \in \mathcal{P}$ be a maximal $R$-support policy.  Since $(s, t) \in \mathrm{Filter}(R)$, there exists a path $(s_1, t_1), \ldots, (s_n, t_n)$ in $<S \times S, P>$, where $s_n = t_n$.

Assume that $(t, u) \in \mathord{\simeq}$.  By Proposition~\ref{proposition:robust-probabilistic-bisimulation}, $\simeq$ is a bisimulation.  Since $t_1, \ldots, t_n$ is a path in the original Markov chain $<S, \tau>$, there is also a path $u_1, \ldots, u_n$ in $<S, \tau>$ such that $(t_i, u_i) \in \mathord{\simeq}$ for all $1 \leq i \leq n$.  In particular, $(t_n, u_n) \in \mathord{\simeq}$.  Since $\mathord{\simeq} \subseteq R$, there exists a path $(t_1, u_1), \ldots, (t_n, u_n)$ in $<S \times S, P>$.  Note that $(s_i, u_i) \in R$ for all $1 \leq i \leq n$.  Hence, there exists a path $(s_1,u_1), \ldots, (s_n,u_n) = (t_n,u_n)$ in $<S \times S, P>$.  See Figure~\ref{figure:transitive}.

Since $(t_n, u_n) \in \mathord{\simeq}$, there exists a policy $P' \in \mathcal{P}$ such that $(t_n, u_n)$ reaches $S^2_{\Delta}$ with probability $1$.  Therefore, there is a path $(t_n, u_n), \ldots, (t_m, u_m)$ in $<S \times S, P'>$, with $t_m = u_m$ and $(t_i, u_i) \in \mathord{\simeq}$ for all $n \leq i \leq m$.  Since $\mathord{\simeq} \subseteq R$, the same path $(t_n, u_n), \ldots, (t_m, u_m)$ exists in $<S \times S, P>$, with $t_m = u_m$.  Thus, there exists paths $(s_1,u_1), \ldots, (s_n,u_n)$ and $(t_n,u_n), \ldots, (t_m,u_m)$, with $s_n = t_n$ and $t_m=u_m$, in $<S \times S, P>$.  Hence, $(s,u) \in \mathrm{Filter}(R)$.
\qed \end{proof}

\begin{proof}[of Proposition~\ref{proposition:robust-subset-R}]
$R$ is initialized to $\mathord{\sim}$, hence, by Proposition~\ref{proposition:bisim-equivalence} and the definition of $\mathord{\sim}$, the loop invariant holds before the loop.

Assume that the loop invariant holds before an iteration of the loop, that is $R \in [\mathord{\simeq}, \mathord{\sim}]_\mathcal{B}$.  Since $\mathord{\simeq} \subseteq R$ and, by Proposition~\ref{proposition:monotonicity}, $\mathrm{Filter}$ is monotone, we have $\mathrm{Filter}(\mathord{\simeq}) \subseteq \mathrm{Filter}(R)$. According to Proposition~\ref{proposition:fp-refine-robust}, $\mathord{\simeq}$ is a fixed point of $\mathrm{Refine}$.  It follows that $\mathord{\simeq} = \mathrm{Filter}(\mathord{\simeq}) \subseteq \mathrm{Filter}(R)$.  Next we show that $\mathord{\simeq} \subseteq \mathrm{Prune}(\mathrm{Filter}(R))$.  Let $s$, $t \in S$ and $s \simeq t$.  Thus, $(s, t) \in \mathrm{Filter}(R)$.  Then, by Proposition~\ref{proposition:transitive}, for all $(t, u) \in \mathrm{Filter}(R)$ we have $(s, u) \in \mathrm{Filter}(R)$ and for all $(u, s) \in \mathrm{Filter}(R)$ we have $(u, t) \in \mathrm{Filter}(R)$.  Hence, $(s, t) \in \mathrm{Prune}(\mathrm{Filter}(R))$, and we have shown $\mathord{\simeq} \subseteq \mathrm{Prune}(\mathrm{Filter}(R))$.  Since, by Proposition~\ref{proposition:monotonicity}, $\mathrm{Bisim}$ is monotone, we have $\mathord{\simeq} = \mathrm{Bisim}(\mathord{\simeq}) \subseteq \mathrm{Bisim}(\mathrm{Prune}(\mathrm{Filter}(R))) = \mathrm{Refine}(R)$.  By the definition of $\mathrm{Bisim}$, $\mathrm{Refine}(R)$ is a bisimulation, that is, $\mathrm{Refine}(R) \in [\mathord{\simeq}, \mathord{\sim}]_\mathcal{B}$.  Thus, the loop invariant is maintained in each iteration of the loop.
\qed \end{proof}

\begin{proof}[of Theorem~\ref{theorem:refine}]
The loop on lines 1-5 in Algorithm~\ref{algorithm:robust-bisimilarity} can be rewritten as follows:
\begin{lstlisting}[numberstyle=\tiny,numbers=left,xleftmargin=1.5em]
$R \leftarrow \mathord{\sim}$
while $\mathrm{Refine}(R) \subsetneq R$
  $R \leftarrow \mathrm{Refine}(R)$
\end{lstlisting}
It is immediate from the definitions of $\mathrm{Bisim}$, $\mathrm{Filter}$ and $\mathrm{Prune}$ that $\mathrm{Refine}(R) \subseteq R$ holds for all $R \subseteq S \times S$.  Therefore, Algorithm~\ref{algorithm:robust-bisimilarity} is a standard fixed point iteration.  By Proposition~\ref{proposition:robust-subset-R}, $\mathord{\simeq} \subseteq R$, thus, it computes a fixed point of $\mathrm{Refine}$ greater than or equal to $\mathord{\simeq}$.  Since $\mathord{\simeq}$ is the greatest fixed point of $\mathrm{Refine}$, we can conclude that Algorithm~\ref{algorithm:robust-bisimilarity} computes $\mathord{\simeq}$.
\qed \end{proof}

\begin{proposition}
\label{proposition-runtime}
Algorithm~\ref{algorithm:robust-bisimilarity} runs in $\mathcal{O}(n^6)$ time, where $n = |S|$.
\end{proposition}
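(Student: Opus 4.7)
The plan is to bound separately the number of iterations of the outer loop of Algorithm~\ref{algorithm:robust-bisimilarity} and the cost of a single call to $\mathrm{Refine}$, and then multiply.

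For the outer loop: by inspecting the definitions of $\mathrm{Bisim}$, $\mathrm{Prune}$, and $\mathrm{Filter}$, we have $\mathrm{Refine}(R) \subseteq R$ for every $R$, so each iteration either terminates (when $R = R_\mathrm{old}$) or strictly decreases $|R|$. Since $R$ is initialised to $\mathord{\sim}$ and $|{\mathord{\sim}}| \leq n^2$, the number of outer iterations is $\mathcal{O}(n^2)$.

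For a single call to $\mathrm{Refine} = \mathrm{Bisim} \circ \mathrm{Prune} \circ \mathrm{Filter}$, I would bound the three sub-procedures as follows. First, $\mathrm{Bisim}$ is polynomial by Proposition~\ref{proposition:stabilise}; the standard signature-based partition-refinement algorithm runs in $\mathcal{O}(n^3)$. Next, $\mathrm{Prune}$ (Algorithm~\ref{algorithm:prune}) consists of two nested loops over pairs in $Q$ together with a membership test, and is readily bounded by $\mathcal{O}(n^3)$. The bottleneck is $\mathrm{Filter}$: by Proposition~\ref{proposition:filter-max-policy}, $\mathrm{Filter}(R)$ is precisely the set of pairs from which $S^2_\Delta$ is reachable in the directed graph $G_R$ whose vertex set is $R$ and whose edge set is $\{\,((s,t),(u,v)) \mid (u,v) \in \mathrm{Post}((s,t)) \cap R\,\}$. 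Since $|V(G_R)| \leq n^2$ and $|E(G_R)| \leq n^4$, a reverse breadth-first search from $S^2_\Delta$ computes $\mathrm{Filter}(R)$ in $\mathcal{O}(n^2 + n^4) = \mathcal{O}(n^4)$ time; equivalently, Algorithm~\ref{algorithm:filter} can be realised in the same bound by maintaining, for each pair in $R$, its adjacency list of successors and processing each newly added element of $Q$ once. Putting these together, $\mathrm{Refine}$ costs $\mathcal{O}(n^4)$ per call, so the total running time is $\mathcal{O}(n^2) \cdot \mathcal{O}(n^4) = \mathcal{O}(n^6)$.

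The main obstacle is establishing the $\mathcal{O}(n^4)$ bound for $\mathrm{Filter}$: a direct reading of Algorithm~\ref{algorithm:filter} gives $\mathcal{O}(n^2)$ outer iterations, each apparently requiring $\mathcal{O}(n^4)$ work to recompute $R \cap \mathrm{Pre}(Q_\mathrm{old})$, which would yield $\mathcal{O}(n^6)$ per call and blow up the total to $\mathcal{O}(n^8)$. The remedy is to exploit Proposition~\ref{proposition:filter-max-policy} and interpret the loop as backward reachability in $G_R$, so that each edge is traversed $\mathcal{O}(1)$ times across all iterations of the inner loop of $\mathrm{Filter}$, collapsing the cost to the linear-time BFS bound.
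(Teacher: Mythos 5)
Your proof is correct and follows the same overall decomposition as the paper's: at most $n^2$ iterations of the outer loop (since $\mathrm{Refine}(R) \subseteq R$ and each non-terminal iteration removes a pair), times a per-call cost of $\mathrm{Refine}$ dominated by $\mathrm{Filter}$ at $\mathcal{O}(n^4)$, with $\mathrm{Prune}$ in $\mathcal{O}(n^3)$ and $\mathrm{Bisim}$ polynomial and dominated. Where you differ is in how you justify the $\mathcal{O}(n^4)$ bound for $\mathrm{Filter}$: the paper's proof simply counts ``at most $n^2$ pairs checked per iteration, at most $n^2$ iterations, hence $\mathcal{O}(n^4)$,'' which implicitly charges $\mathcal{O}(1)$ for testing whether a pair lies in $\mathrm{Pre}(Q_{\mathrm{old}})$ — a test that naively costs up to $\mathcal{O}(n^2)$, since $\mathrm{Post}((s,t))$ can have $n^2$ elements. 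You make this issue explicit and repair it by recasting Algorithm~\ref{algorithm:filter} (via Proposition~\ref{proposition:filter-max-policy}) as backward reachability from $S^2_\Delta$ in the graph $G_R$ with at most $n^2$ vertices and $n^4$ edges, so that each edge is examined a constant number of times in total. This amortized BFS argument is what actually delivers the $\mathcal{O}(n^4)$ bound per call; your version is therefore slightly more careful than the paper's own accounting, and the only cosmetic divergence (quoting $\mathcal{O}(n^3)$ rather than the paper's $\mathcal{O}(n^2\log n)$ for $\mathrm{Bisim}$) is immaterial since that term is dominated either way.
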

\begin{proof}
Let $|S| = n$.  $\mathrm{Refine}$ begins with $\sim$, containing at most $n^2$ pairs of states.  Since at least one pair is removed in each iteration, $\mathrm{Refine}$ requires at most $n^2$ iterations.

$\mathrm{Filter}$ checks at most $|R| \leq n^2$ pairs of states per iteration and adds at least one pair of states to $Q$.  Thus, there are at most $n^2$ iterations, with a total runtime of $\mathcal{O}(n^4)$.

$\mathrm{Prune}$ has an outer loop over $|Q| \leq n^2$ pairs of states and an inner loop over at most $|S| = n$ pairs of states. Hence, $\mathrm{Prune}$ runs in $\mathcal{O}(n^3)$ time.

$\mathrm{Bisim}$ runs in $\mathcal{O}(m \log n) = \mathcal{O}(n^2 \log n)$ time \cite{DHS03}.

Therefore, the overall runtime of $\mathrm{Refine}$ is $\mathcal{O}(n^6)$.
\qed \end{proof}

\section{Experiments}
\label{appendix:experiments}

Below is a description of jpf-probabilistic's randomized algorithms utilized in our experiments.
\begin{itemize}
    \item Erd{\"o}s-R\'enyi Model: a model for generating a random (directed or undirected) graph. A graph with a given number of vertices $v$ is constructed by placing an edge between each pair of vertices with a given probability $p$, independent from every other edge.  We check the probability that the generated graph is connected (for every pair of nodes, there is a path). \cite{ER59}
    \item Fair Biased Coin: makes a fair coin from a biased coin, where $p$ denotes the probability by which the biased coin tosses heads.  We check the probability that the coin toss results in heads. \cite{vN51}
    \item Has Majority Element: a Monte Carlo algorithm that determines whether an integer array has a majority element (appears more than half of the time in the array).  The parameter $s$ denotes the size of the given array, $t$ denotes the number of trials, and $m$ denotes the amount of times that the majority element occurs in the array.  We check the probability that the algorithm erroneously reports that the array does not have a majority element. \cite{MR95}
    \item Pollards Integer Factorization: finds a factor of an integer $i$. We check the probability that the algorithm returns $i$, when $i$ is not prime. \cite{P75}
    \item Queens: attempts to place a queen on each row of an $n \times n$ chess board such that no queen can attack another.  We check the probability of success. \cite{B10}
    \item Set Isolation: finds a sample of the universe $U$ that is disjoint from the subset $S$ but not disjoint from the subset $T$.  Let $u$ denote the size of the universe and $st$ denote the size of $S$ and $T$.  We check the probability that the randomly selected sample is good, that is, disjoint from $S$ and intersects $T$. \cite{KM94}
\end{itemize}

}

\clearpage
\bibliographystyle{splncs04}
\bibliography{main}

\end{document}